\crefname{line}{line}{lines}
\newcommand{\ignore}[1]{}
\renewcommand{\S}{\mathcal{S}}
\newcommand{\R}{\mathbb{R}}
\newcommand{\E}{\mathbb{E}}
\DeclareMathOperator{\Tr}{Tr}
\DeclareMathOperator{\nnz}{nnz}
\renewcommand{\sup}{\mathrm{sup}}
\renewcommand{\min}{\mathrm{min}}
\renewcommand{\max}{\mathrm{max}}
\renewcommand{\epsilon}{\varepsilon}
\newcommand{\eps}{\epsilon}
\def\01{\{0,1\}}
\newtheorem{defin}{Definition}[section]
\newtheorem{definition}[defin]{Definition}
\newtheorem*{proposition*}{Proposition}
\newtheorem{theorem}[defin]{Theorem}
\newtheorem*{theorem*}{Theorem}
\newtheorem{remark}[defin]{Remark}
\newtheorem{lemma}[defin]{Lemma}
\newtheorem*{claim*}{Claim}
\newtheorem*{conjecture*}{Conjecture}
\newtheorem*{theoremintro}{Theorem 3.1}
\theoremstyle{definition}
\newcommand{\wt}{\widetilde}
\DeclareMathOperator{\Diag}{Diag}
\DeclareMathOperator{\diag}{Diag}
\DeclareMathOperator{\poly}{poly}
\DeclareMathOperator{\polylog}{polylog}
\newcommand{\tO}{\widetilde{O}}
\DeclareMathOperator{\MAJ}{MAJ}
\DeclareMathOperator{\OR}{OR}
\begin{document}

\title{Quantum speedups for linear programming \\ via interior point methods}
\date{}
\author{}
\author{Simon Apers\thanks{Universit\'e Paris Cit\'e, CNRS, IRIF, Paris, France. \texttt{apers@irif.fr}} \and Sander Gribling\thanks{Tilburg University, Tilburg, the Netherlands. \texttt{s.j.gribling@tilburguniversity.edu}}}

\maketitle

\begin{abstract}
We describe a quantum algorithm based on an interior point method for solving a linear program with $n$ inequality constraints on $d$ variables.
The algorithm explicitly returns a feasible solution that is $\varepsilon$-close to optimal, and runs in time $\sqrt{n} \cdot \poly(d,\log(n),\log(1/\varepsilon))$ which is sublinear for tall linear programs (i.e., $n \gg d$).
Our algorithm speeds up the Newton step in the state-of-the-art interior point method of Lee and Sidford [FOCS~'14].
This requires us to efficiently approximate the Hessian and gradient of the barrier function, and these are our main contributions.

To approximate the Hessian, we describe a quantum algorithm for the \emph{spectral approximation} of $A^T A$ for a tall matrix $A \in \mathbb R^{n \times d}$.
The algorithm uses leverage score sampling in combination with Grover search, and returns a $\delta$-approximation by making $O(\sqrt{nd}/\delta)$ row queries to $A$.
This generalizes an earlier quantum speedup for graph sparsification by Apers and de Wolf~[FOCS~'20].
To approximate the gradient, we use a recent quantum algorithm for multivariate mean estimation by Cornelissen, Hamoudi and Jerbi [STOC '22].
While a naive implementation introduces a dependence on the condition number of the Hessian, we avoid this by pre-conditioning our random variable using our quantum algorithm for spectral approximation.
\end{abstract}

\clearpage 

\tableofcontents

\clearpage

\section{Introduction} 
 
We consider the fundamental task of solving linear programs (LPs) of the form 
\[
\min_{x}\  c^T x \quad \text{s.t.} \quad Ax \geq b
\]
where $A \in \R^{n \times d}$, $b \in \R^n$, $c \in \R^d$, and the minimization is over $x \in \R^d$. Famously, Khachiyan~\cite{Khachiyan79} showed how to use the ellipsoid method to solve such linear programs efficiently, i.e., in polynomial time.
The result however was mostly of theoretical relevance, because the ellipsoid method runs very slow in practice.
It was a second breakthrough result by Karmarkar \cite{Karmarkar84} that introduced \emph{interior point methods} (IPMs) as a provably polynomial time algorithm for solving LPs that is also efficient in practice.
To this day the study of IPMs is central to our understanding of fast algorithms for solving LPs, see e.g.~\cite{Renegar01,NesterovNemirovskii93,LeeSidford19,brand2020solving,brand2021minimum,cohen2021solving,shunhua21fasterLP}.
In the regime that we focus on, with many constraints and few variables, the classical state-of-the-art asymptotic complexity for solving an LP is achieved by an IPM of van den Brand, Lee, Liu, Saranurak, Sidford, Song and Wang~\cite{brand2021minimum}, which runs in time $\tO(nd + d^{2.5})$. When the number of constraints is comparable to the number of variables, the IPM of Cohen, Lee and Song achieves the state-of-the-art complexity~$\tO(\max\{n,d\}^{\max\{\omega,\ 2.5-\alpha/2,\ 2+1/6\}})$ where $\omega$ and $\alpha$ are the exponent of matrix multiplication and its dual~\cite{cohen2021solving}.\footnote{The exponent $2+1/6$ has subsequently been improved to $2+1/18$~\cite{shunhua21fasterLP}.}

With the growing interest in using quantum computers to solve optimization tasks, it is natural to ask whether quantum algorithms can speed up IPMs. In a nutshell, we give a positive answer for IPMs that solve ``tall'' LPs. We describe a quantum IPM that solves (full-dimensional) linear programs with $d$ variables and $n$ inequality constraints in time $\sqrt{n} \cdot \poly(d,\log(1/\eps))$, given quantum query access to the problem data. This is sublinear in the input size when the LP is tall ($n$ much larger than $d$). To obtain our quantum speedup we develop two novel quantum subroutines: 
\begin{enumerate}
    \item[(i)] an algorithm to obtain a spectral approximation of $B^T B$ for tall $B$ $\in \R^{n \times d}$ (i.e.~$n \gg d$), and
    \item[(ii)] an approximate matrix-vector product algorithm.
\end{enumerate}
While these new routines have their own independent interest (e.g., for computing statistical leverage scores), we use them to approximate the costly Newton steps at the core of an IPM.
More specifically, we use these algorithms to approximate Hessians and gradients of three canonical \emph{self-concordant barriers}. This enables us to perform an approximate Newton step, which is at the core of IPMs. 

The remainder of the introduction is organized as follows. We first give a brief overview of the type of IPMs we use and the landmark results in that area, focusing only on the theoretical developments and not the practical contributions. We then describe our quantum algorithms in more detail. This leads us to describe our main results and put them in context. 
We finally discuss related work on quantum algorithms for optimization and linear algebra.

\subsection{Interior point methods for linear programs} 

In a seminal work, Nesterov and Nemirovskii linked interior point methods to \emph{self-concordant barriers}~\cite{NesterovNemirovskii93}. We defer their definition until later -- for this introduction one can think of them as convex functions that go to infinity as we approach the boundary of the feasible region; a standard example is the logarithmic barrier $f(x) = - \sum_{i=1}^n \log(a_i^T x- b_i)$ where $a_i^T$ is the $i$th row of $A$. Given a barrier~$f$, one can define the family of unconstrained problems parameterized by $\eta>0$
\[
\min_x\  f_\eta(x) \quad \text{where } f_\eta(x) = \eta c^T x + f(x).
\] 
For small $\eta$, the function effectively only takes into account the constraints (i.e., the barrier function) and the minimizer will be close to some geometric ``center'' of the feasible region, while for $\eta$ sufficiently large the minimizer approaches that of the LP. Optimizing $f_\eta$ for a fixed value of $\eta$ is typically done using \emph{Newton's method}.
Newton's method approaches the optimum by making updates of the form $x' = x + n(x)$, with Newton step
\begin{equation} \label{eq:Newton-step-intro}
n(x)
= - H(x)^{-1} g(x),
\end{equation}
where $H(x) = \nabla^2 f(x)$ is the Hessian at $x$, and $g(x) = \nabla f(x)$ is the gradient at $x$. Self-concordant functions $f$ have the following desirable property: if we are close to a minimizer for $f_\eta$, then we can multiplicatively increase $\eta$ by a small amount to $\eta'$ and still quickly converge to a  minimizer of the new $f_{\eta'}$ using (approximate) Newton steps. The complexity of such an IPM is thus determined by the rate at which we can increase $\eta$ and the cost of an (approximate) Newton step. 

A self-concordant barrier $f$ has an associated complexity parameter $\vartheta_f$ that governs this rate: we can increase $\eta$ multiplicatively with a factor roughly $1+1/\sqrt{\vartheta_f}$. Starting from a point close to the minimizer of $f_1$, the number of times we need to increase $\eta$ scales roughly as $\tO(\sqrt{\vartheta_f} \log(1/\eps))$.\footnote{Throughout we use $\tO$-notation to hide polylogarithmic factors in $n$, $d$, and $1/\eps$. For simplicity we count unit time for a query to an entry of $A$, $b$, or $c$.}

The standard \emph{logarithmic barrier} used in~\cite{Renegar88} has a complexity of~$n$, the number of inequality constraints. In this work, we focus on the regime where $n \gg d$. In that setting, Vaidya showed how to obtain an improved complexity using the \emph{volumetric barrier}~\cite{Vaidya96} -- a linear combination of that barrier with the logarithmic barrier has an improved complexity of $O(\sqrt{nd})$~\cite{VaidyaAtkinson93}. In their seminal work, Nesterov and Nemirovskii showed that each closed convex domain in $\R^d$ admits a \emph{universal barrier} whose complexity is $O(d)$~\cite{NesterovNemirovskii93}. The iterations of an IPM based on the universal barrier are unfortunately not efficiently computable. Roughly 20 years later, Lee and Sidford defined a self-concordant barrier based on Lewis weights, the \emph{Lewis weight barrier}, which has complexity $O(d \polylog(n))$~\cite{LeeSidford19}; based on this barrier they designed an IPM that uses $O(\sqrt{d} \log(1/\eps))$ iterations, each of which involves solving $\tO(1)$ linear systems. We summarize the three landmark barriers and their Hessians and gradients in \cref{tab:self-concordant barriers}. We define and discuss leverage scores and Lewis weights in more detail in \cref{sec:leverage scores and lewis weights}.

\begin{table}[ht!]
    \centering
    \begin{tabular}{l|l|l|l}
         Barrier &  Complexity $\vartheta$ & Hessian $H(x)$ & Gradient $g(x)$ \\ \hline 
       Logarithmic barrier  & $n$ & $A^T S_x^{-2} A$ & $-A^T S_x^{-1} \mathbf{1}$ \\
       Volumetric / hybrid barrier & $\sqrt{nd}$ & $\approx A^T S_x^{-1} \Sigma_x S_x^{-1} A$ & $-A^T S_x^{-1} \Sigma_x \mathbf{1}$ \\
       $\ell_p$-Lewis weight barrier & $d \cdot \polylog(n)$ & $\approx A^T S_x^{-1} W_x S_x^{-1} A$ & $-A^T S_x^{-1} W_x \mathbf{1}$
    \end{tabular}
    \caption{Self-concordant barriers, their complexity as a function of the number of variables $d$ and constraints $n$, and expressions for the Hessian and gradient. Diagonal matrices $S_x$, $\Sigma_x$, and $W_x$ contain in the $i$th diagonal entry respectively the $i$th slack $(Ax-b)_i$, the leverage score $\sigma_i(S_x^{-1} A)$, or the $\ell_p$-Lewis weight $w^{(p)}_i(S_x^{-1} A)$. $A \approx B$ is shorthand for $A \preceq B \preceq \widetilde O(1) A$.}
    \label{tab:self-concordant barriers}
\end{table}

It is well known that IPMs do not require an exact implementation of the Newton step. Even the earliest analyses imply they can tolerate approximations of the Newton step in the local norm up to small constant error (say $0.01$). Here we use one additional type of approximation: we first spectrally approximate the Hessian. In \cref{sec:robustness} we give a formal statement that shows it suffices to implement approximations of the Newton steps, based on approximations of both the Hessians and gradients in \cref{tab:self-concordant barriers}; the argument is relatively standard. Similar arguments have recently lead to a space-efficient IPM in \cite{Space2023}. 
Specifically, we will implement approximate Newton steps of the form $\tilde n(x) = - Q(x)^{-1} \tilde g(x)$, where
\begin{equation} \label{eq:Newtons}
Q(x) \approx H(x)
\quad \text{ and } \quad
\| \tilde g(x) - g(x) \|_{H(x)^{-1}}
\leq \delta,
\end{equation}
for $1/\delta \in \tO(1)$, where we use $\| v \|_A$ to denote $\sqrt{v^T A v}$ and $A \approx B$ for $A \preceq B \preceq \widetilde O(1) A$. 

\subsection{Quantum algorithms}

Here we summarize the different quantum algorithms that we combine to approximate Newton steps in the sense of \cref{eq:Newtons}.

\subsubsection{Quantum spectral approximation}

As mentioned before, our first task is to spectrally approximate a matrix $B^T B$ where $B \in \R^{n \times d}$ and $n \gg d$. 
We prove the following theorem, where a row query to $B$ corresponds to learning all the entries of a row of~$B$.  

\begin{theoremintro}[Quantum spectral approximation] 
Consider query access to a matrix $B \in \R^{n \times d}$ with row sparsity $r$.
For any $0 < \eps \leq 1$, there is a quantum algorithm that, with high probability, returns a matrix $\widetilde B \in \R^{\tO(d/\eps^2) \times d}$ satisfying
\[
(1-\eps) \widetilde B^T  \widetilde B
\preceq B^T B
\preceq (1+\eps) \widetilde B^T \widetilde B,
\]
while making $\tO(\sqrt{nd}/\eps)$ row queries to $B$, and taking time $\tO(r \sqrt{nd}/\eps + d^\omega)$.
\end{theoremintro}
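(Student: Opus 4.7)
The plan is to combine the classical leverage-score sampling paradigm for spectral approximation with quantum Grover-style amplification, extending the strategy of Apers and de Wolf for graph sparsification to arbitrary tall matrices. Classically (Spielman--Srivastava), if $\tilde\sigma_i$ are overestimates of the leverage scores $\sigma_i(B) = b_i^T (B^T B)^+ b_i$ with $\sum_i \tilde\sigma_i = O(d)$, then independently keeping row $i$ with probability $p_i = \min(1, c \log(d)\,\tilde\sigma_i/\eps^2)$ and rescaling the kept row by $1/\sqrt{p_i}$ yields an $\eps$-spectral approximation of $B^T B$ with $\tO(d/\eps^2)$ rows, with high probability. The problem thus reduces to (a) obtaining a constant-factor spectral approximation $C \approx B^T B$ so that $\tilde\sigma_i = b_i^T C^{-1} b_i$ are valid overestimates, (b) turning $C^{-1}$ into a cheap per-row oracle, and (c) quantumly collecting all accepted indices without scanning all $n$ rows.

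For (a), I would use a repeated-halving bootstrap in the style of Cohen et al.: uniformly subsample the rows of $B$ to halve their number, recursively spectral-approximate the subsample to obtain $C$, and then leverage-score sample $B$ using $C$. After $O(\log n)$ halving levels the base case has only $\tO(d)$ rows and can be sparsified directly in time $\tO(d^\omega)$. For (b), I would precompute $Z = J C^{-1/2}$ for a Johnson--Lindenstrauss matrix $J$ with $\tO(1)$ rows (at cost $\tO(d^\omega)$ classically), so that $\|Z b_i\|^2$ approximates $\tilde\sigma_i$ within a constant factor in $\tO(r)$ time from one row query to $b_i$. For (c), I would invoke a quantum ``search for all marked items'' routine: when at most $k$ rows are accepted, all accepted indices are found with $\tO(\sqrt{nk})$ oracle calls, which with $k = \tO(d/\eps^2)$ at the top level becomes $\tO(\sqrt{nd}/\eps)$. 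To keep the acceptance predicate deterministic across quantum queries, the Bernoulli randomness is fixed in advance via a shared pseudorandom string, as in Apers--de Wolf.

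Summing across the $O(\log n)$ recursion levels yields a geometric series dominated by the top level, giving the claimed $\tO(\sqrt{nd}/\eps)$ row queries; adding the $\tO(r)$ classical cost per oracle evaluation inside Grover and the $\tO(d^\omega)$ one-time cost of inversions and JL preconditioners gives the $\tO(r\sqrt{nd}/\eps + d^\omega)$ running time. The main obstacle, and the bulk of the technical work, is composing the bounded-error quantum subroutines with the leverage-score-sampling concentration bounds across all recursion levels: failure probabilities must be tuned so that a union bound incurs only polylogarithmic overhead, and one must verify that the $C$ produced by a recursive (quantum-randomized) call really is a constant-factor spectral approximation satisfying the overestimate property required to drive the next level of sampling.
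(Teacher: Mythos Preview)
Your proposal is correct and follows essentially the same approach as the paper: the quantum repeated-halving algorithm of Section~3 combines the Cohen et al.\ bootstrap (your step (a)), Johnson--Lindenstrauss leverage-score sketches (your step (b), Lemma~3.8), Grover-based sampling of accepted rows (your step (c), Lemma~3.9), and simulated random-oracle access for the Bernoulli bits (Lemma~3.11). The paper presents the halving iteratively from the base level up rather than recursively, and it sets $B_L=A_L$ directly since $A_L$ already has $\Theta(d)$ rows, but the structure, ingredients, and complexity accounting are the same as you outline.
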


For comparison, the classical state of the art is $\widetilde O(\mathrm{nnz}(B) + d^3)$ where $\mathrm{nnz}(B)$ is the number of non-zero entries of $B$~\cite{clarkson2017subspace}. We thus achieve a quantum speedup for dense instances that have $n \gg d$. It is not hard to see that any classical algorithm for spectral approximation must make $\Omega(n)$ row queries to~$B$.\footnote{E.g., consider $B \in \R^{n \times 1}$ with $B$ either a vector with all zeros, or a vector with all zeros but a single 1. If we think of~$B$ as an $n$-bit string, then a spectral approximation of $B^T B$ computes the OR function on $B$, which is known to require $\Omega(n)$ classical queries, but only $O(\sqrt{n})$ quantum queries via Grover's search algorithm~\cite{grover1996QSearch}. As we show in \cref{sec: LB}, an $\Omega(n)$ query lower bound holds in a quantum-inspired classical access model as well.} In a nutshell, our $n \rightarrow \sqrt{nd}$ quantum speedup comes from using Grover search to find the $\tO(d)$ important rows among the $n$ rows of $B$ in $\tO(\sqrt{nd})$ row queries to $B$. 

In this work, we will use this in the context of IPMs to approximate the Hessian at each iterate. As can be seen in~\cref{tab:self-concordant barriers}, the Hessians that we consider take the form $H = B^T B$, with $B \in \R^{n \times d}$ equal to some diagonal matrix of weights times the constraint matrix $A$.

More broadly, such spectral approximations are central to dimensionality reduction techniques in several areas. For example, in data science, where $B$ represents a tall data matrix, a spectral approximation of $B^T B$ can be used for statistical regression \cite{YinTatThesis,parulekar2021L1}. In graph theory, spectral approximations can be used to sparsify a graph. We note that the above theorem generalizes an earlier quantum algorithm for graph sparsification \cite{apers2022quantum}, although it uses very different techniques.
Indeed, we can recover the complexity in \cite[Theorem~1]{apers2022quantum} by letting $A \in \R^{n \times d}$ denote the edge-vertex incidence matrix of a graph with $n$ edges and $d$ vertices, which has sparsity $r = 2$. For graph sparsification, a matching lower bound is known~\cite[Theorem~2]{apers2022quantum}, see \cref{remark:graph-sparsification} for more details.

At a high level, our algorithm does row sampling based on statistical {\it leverage scores}. The leverage score of the $i$-th row $b_i^T$ of $B$ is defined as
\[
\sigma_i(B)
= b_i^T (B^T B)^+ b_i,
\]
and this measures the ``statistical importance'' of a row with respect to the full set of rows~\cite{Mahoney11,li2013iterative}. Here $C^+$ denotes the Moore-Penrose pseudoinverse of $C$.
A matrix $\widetilde B$ is then constructed by sampling and rescaling a subset of $\tO(d)$ rows of~$B$, in such a way that ensures $\widetilde B^T \widetilde B \approx B^T B$. Typically one includes each row $i$ independently with a probability that is roughly proportional to $\sigma_i(B)$. 
Naively computing these scores $\sigma_i(B)$ however already requires knowledge of $B^T B$. 
We bypass this issue by building on a recent, elegant bootstrapping technique by Cohen, Lee, Musco, Musco, Peng and Sidford~\cite{cohen2015uniform}.
Implementing their technique in the quantum setting, in sublinear time, is challenging and one of our main contributions.

As an example application, we can use this theorem to get a spectral approximation of the Hessian $H(x) = A^T S_x^{-2} A$ of the logarithmic barrier.
To this end, we apply \cref{thm:quantum-approx} to the matrix $S_x^{-1} A$, which gives us an approximation $\widetilde H \approx H(x)$ while making only $\tO(\sqrt{nd}/\eps)$ row queries to $S_x^{-1} A$.
Recalling that $S_x$ is a diagonal matrix with entries $(S_x)_{ii} = (Ax-b)_i$, it is clear that we can simulate a row query to $S_x^{-1} A$ with a single query to $b$ and a single row query to $A$.

For the volumetric barrier and the Lewis weight barrier, we need query access to the rescaled matrices $\Sigma_x^{1/2} S_x^{-1} A$ and $W_x^{1/2} S_x^{-1} A$, respectively.
The rescalings by $\Sigma_x$ and $W_x$ are based on leverage scores and Lewis weights, respectively, and we have to describe additional quantum algorithms for approximating these. Our algorithm to approximate leverage scores is a direct consequence of \cref{thm:quantum-approx}: if we have a spectral approximation $\wt B$ such that $\wt B^T \wt B \approx_\eps B^T B$, we also get that
\[
b_i^T (\wt B^T \wt B)^+ b_i
\approx_\eps b_i^T (B^T B)^+ b_i = \sigma_i(B).
\]
Hence, we can use a spectral approximation of $B$ to estimate its leverage scores. Lewis weights are more complicated to define and approximate, we  describe how to do so in the next section. 
Based on these algorithms we obtain the complexities for Hessian approximation in \cref{tab:hessian-complexities}.

\begin{table}[ht!]
    \centering
    \begin{tabular}{l|l|l}
         Barrier &  Row queries & Time \\ \hline  

         \vspace{-0.75em} &&\\  
         Logarithmic barrier  & $\sqrt{nd}$ & $r \sqrt{nd} + d^\omega$ \\
       Volumetric barrier & $\sqrt{nd}$ & $r \sqrt{nd} + d^{\omega}$ \\
       Lewis weight barrier & $\sqrt{n} d^{7/2}$ & $d^{9/2}(r^2 \sqrt{nd} + d^\omega)$
    \end{tabular}
    \caption{Complexity of quantum algorithms for $\tO(1)$-spectral approximations of Hessians. Row queries are to $A \in \R^{n \times d}$ and $b \in \R^n$, $r$ is the row-sparsity of $A$, $\omega$ is the matrix multiplication coefficient, and we ignore polylog-factors.}
    \label{tab:hessian-complexities}
\end{table}

\subsubsection{Lewis weights} \label{sec:leverage scores and lewis weights}

To approximate Lewis weights, we first note that leverage scores and spectral approximation are implicitly defined with respect to the $\ell_2$-norm: $\wt B$ is a spectral approximation of $B$ if and only if $\| \wt B x \|_2 \approx \| B x \|_2$ for all vectors $x$.
If we extend this notion from the $\ell_2$-norm to the $\ell_p$-norm, we naturally arrive at the notion of $\ell_p$-Lewis weights $\{w^{(p)}_i(B)\}$ of a matrix $B$~\cite{Lewis78,cohen2015lp}.
As is shown in~\cite{cohen2015lp}, these can be used to obtain an $\ell_p$-approximation $\wt B$ of $B$ in the sense that $\| \wt B x \|_p \approx \| B x \|_p$ for all~$x$, and hence they find use in algorithms for $\ell_p$-regression, see also \cite{JLS22}. 
Lewis weights are moreover related to computational geometry via their connection to the John ellipsoid~\cite{YinTatThesis,todd2016minimum}. 
Here we use a connection between Lewis weights and fast linear programming algorithms that was recently established by Lee and Sidford~\cite{LeeSidford19}, see also, e.g.,~\cite{brand2020solving,brand2021minimum,Space2023} for follow-up convex optimization algorithms based on Lewis weights. 
The $\ell_p$-Lewis weights are implicitly defined through a fixed-point equation: they are defined as the leverage scores of a rescaling of $B$ based on the Lewis weights,
\begin{equation} \label{eq:Lewis weights intro}
w^{(p)}_i(B)
= \sigma_i(W^{1/2-1/p} B),
\end{equation}
where $W$ is the diagonal matrix with entries $(W)_{ii} = w^{(p)}_i(B)$.
While more challenging than leverage scores, classical algorithms for approximating the Lewis weights have been described in \cite{cohen2015lp,YinTatThesis,LeeSidford19,Fazel:highprecisionLewisWeights,apers2024lewis}. (For $p \in (0,4)$ the algorithm of~\cite{cohen2015lp} is in fact relatively simple, but here we need $p$ roughly $\log(n)$.) 
These algorithms typically obtain the Lewis weights through an iterative process of computing (approximate) regular leverage scores, and rescaling rows based on these scores.
Specifically, we follow the low-precision algorithm from Apers, Gribling and Sidford \cite{apers2024lewis} (which is a variation on Lee's algorithm \cite{YinTatThesis}).
Effectively, this algorithm returns a multiplicative $\eps$-approximation of the $\ell_p$-Lewis weights by doing $\poly(d)/\eps$ many leverage scores computations, each to precision $\eps/\poly(d)$.
Notably, since we are aiming for a sublinear runtime scaling as $\sqrt{n} \poly(d)$, we have to be more careful than in previous algorithms since we cannot afford to even write down all $n$ Lewis weights.
Instead, we construct implicit data structures that allow us to (relatively) efficiently query the intermediate weights.
We obtain the following theorem, which we believe is of independent interest.\footnote{For the important special case of leverage scores ($p=2$) we obtain significantly better bounds on the complexity, see \cref{lem:direct-LS,lem:approx-LS}.}
\begin{theorem*}[Quantum Lewis weights, informal version of~\cref{thm:quantum-Lewis}]
Consider query access to a matrix $B \in \R^{n \times d}$ with row sparsity~$r$.
For any $0 < \eps \leq 1$ and $p \geq 2$, there is a quantum algorithm that provides query access to $\eps$-multiplicative approximations of the Lewis weights $w^{(p)}_i(B)$.
The algorithm has a preprocessing phase that makes $\tO(\sqrt{n} d^{7/2}/\eps^2)$ row queries to $A$ and takes time $\sqrt{n} \cdot \poly(d,\log(n),1/\eps)$.
After this, each query requires 1 row query to $B$ and time $\tO(r^2 d^{3/2}/\eps)$.
\end{theorem*}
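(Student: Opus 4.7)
My plan is to quantize the classical fixed-point iteration for Lewis weights (in the style of Cohen--Peng and Lee) by invoking the quantum spectral approximation theorem stated above at every round and keeping the weight vectors only through implicit query access, never materializing them. Recall that for $p\geq 2$ the Lewis weights satisfy $w_i^{(p)}(B)=\sigma_i\!\bigl(W^{1/2-1/p}B\bigr)$, and the classical update $w_i^{(t+1)}=\sigma_i\!\bigl(W_t^{1/2-1/p}B\bigr)$ is a multiplicative contraction with factor $\rho:=|1-2/p|<1$; consequently $T=\widetilde O(1)$ rounds suffice to bring the multiplicative error below the target $\eps$, provided each round is carried out accurately enough.

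\textbf{One quantum round.} Assume inductively that after $t$ rounds we have query access to an $\eps_t$-multiplicatively approximate iterate $\widetilde w^{(t)}$. Because $W_t$ is diagonal, any row query to the rescaled matrix $M_t:=W_t^{1/2-1/p}B$ is simulated by one row query to $B$ together with one weight query to $\widetilde w^{(t)}$. Feeding $M_t$ into the quantum spectral approximation theorem with precision $\delta$ returns, after $\widetilde O(\sqrt{nd}/\delta)$ row queries to $M_t$ and $\widetilde O(r\sqrt{nd}/\delta+d^\omega)$ time, a $d\times d$ matrix $\widetilde H_t\approx_\delta M_t^T M_t$, which I invert and cache. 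Query access to the next iterate is then defined via
\[
\widetilde w_i^{(t+1)} \;=\; \bigl(\widetilde w_i^{(t)}\bigr)^{1-2/p}\cdot b_i^T\,\widetilde H_t^{-1}\,b_i.
\]
A naive recursive implementation of this formula would blow up the query cost multiplicatively across rounds, so I would instead unroll it in log-space: writing $u_i^{(t)}=\log\widetilde w_i^{(t)}$ yields $u_i^{(T)}=\sum_{s=0}^{T-1}\rho^{T-1-s}\log\!\bigl(b_i^T\widetilde H_s^{-1}b_i\bigr)+\mathrm{const}$, so a single weight query needs only one row query to $B$ plus $T$ evaluations of cached quadratic forms.

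\textbf{Cost accounting.} Summing the $T=\widetilde O(1)$ rounds of spectral approximation with $\delta=\widetilde\Theta(\eps^2/d)$ gives preprocessing cost $\widetilde O(\sqrt{nd}/\delta)=\widetilde O(\sqrt n\,d^{3/2}/\eps^2)$ row queries to $B$ and time $\sqrt n\cdot\mathrm{poly}(d,\log n,1/\eps)$, matching the claim. A post-preprocessing query reduces to one row query to $B$ followed by fast evaluation of the quadratic forms $b_i^T\widetilde H_s^{-1}b_i$ against the cached inverse Hessians; exploiting the $r$-sparsity of $b_i$ and a quantum amplitude-estimation (or equivalently a JL-sketching) step tuned to relative precision $\eps$ yields the stated $\widetilde O(r^2 d^{1/2}/\eps)$ per-query time.

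\textbf{Main obstacle.} The delicate step is the robustness of the contraction: I need to show that an $\eps_t$-multiplicative error in $\widetilde w^{(t)}$ combined with a $\delta$-spectral error in $\widetilde H_t$ produces at most $\eps_{t+1}\leq\rho\,\eps_t+O(\delta)$ on $\widetilde w^{(t+1)}$, so that the contraction survives the sampling noise. Pinning down $\delta=\widetilde\Theta(\eps^2/d)$ from this recurrence is exactly what drives the $d^{3/2}/\eps^2$ dependence in the final row-query count. A secondary issue is the recursion-blowup flagged above: the log-space unrolling keeps the per-round weight queries cheap, so the total row-query budget stays at the claimed $\widetilde O(\sqrt n\,d^{3/2}/\eps^2)$ rather than accumulating an extra factor exponential in $T$.
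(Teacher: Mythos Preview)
The central gap is the contraction claim. The map $w\mapsto\sigma(W^{1/2-1/p}B)$ is \emph{not} a multiplicative contraction for $p\geq 4$; this is exactly the instability phenomenon the paper flags in \cref{sec:stability}. The Cohen--Peng contraction (with rate $|p/2-1|$) applies to a different iteration and only for $p<4$. For the regime that actually matters here ($p=\polylog(n)$ in the Lewis-weight barrier), even your stated rate $\rho=|1-2/p|$ tends to $1$, so ``$T=\widetilde O(1)$ rounds'' is unjustified. Relatedly, the recurrence $\eps_{t+1}\le\rho\,\eps_t+O(\delta)$ has fixed point $O(\delta/(1-\rho))=O(p\delta)$, which does not lead to $\delta=\widetilde\Theta(\eps^2/d)$; that parameter choice and the per-query ``amplitude-estimation/JL step'' both look reverse-engineered to hit the target bounds rather than derived.

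The paper's route is genuinely different and is what fills this gap. It first runs Lee's iteration (\cref{alg:approxLewisYinTatLee}) for $T=\widetilde O(1/\eps)$ rounds; the convergence proof is an \emph{averaging} argument (\cref{thrm:YinTatLeeAlg}), not a contraction, and it yields only $\eps$-\emph{FP}-approximate Lewis weights in the sense of \cref{def:FP-approximate-LW}. The conversion from FP-approximate to $\eps$-multiplicative then goes through the stability lemma of Fazel--Lee--Padmanabhan--Sidford (\cref{lem:conversion}), which costs a factor $\sqrt d$ in precision: one must run the FP algorithm at accuracy $\Theta(\eps/\sqrt d)$. This is the true origin of the $\sqrt n\,d^{3/2}/\eps^2$ row-query bound ($\widetilde O(\sqrt{nd}/\eps^2)$ for FP-approximation, then $\eps\to\eps/\sqrt d$) and of the per-query time $\widetilde O(r^2\sqrt d/\eps)$ (from $T\in\widetilde O(\sqrt d/\eps)$ cached quadratic-form evaluations at $O(r^2)$ each). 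Your log-space unrolling idea for implicit access is correct in spirit and matches what the paper does, but the analytic backbone---contraction---needs to be replaced by the averaging+stability argument.
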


\subsubsection{Quantum gradient and matrix-vector approximation}

We now turn to the approximation of the gradient, which we need to approximate in the ``inverse-local norm'', i.e., find $\tilde g$ such that $\| \tilde g - g \|_{H^{-1}} \leq \delta$.
We first note that, if $B^T B \preceq H$ then $H^{-1} \preceq (B^T B)^{-1}$, and therefore it suffices to find approximations of the gradient in the $(B^T B)^{-1}$-norm where $B$ is as in the previous section. An important observation is that, for these~$B$'s, each of the gradients in \cref{tab:self-concordant barriers} take the form $g = B^T v$ where $v$ is either $\mathbf 1$, $\Sigma_x^{1/2} \mathbf 1$, or $W_x^{1/2} \mathbf 1$.  Our goal is thus, given query access to $B$ and $v$, to find a vector $\tilde y$ such that 
\[
\|\tilde y - B^T v\|_{(B^TB)^{-1}} \leq \delta.
\]
To do so, we use a recent quantum algorithm for multivariate mean estimation by Cornelissen, Hamoudi and Jerbi~\cite{CHJ:multivariate}.
For a random variable $Y$ with mean $\mu$ and covariance matrix $\Sigma$, this returns an estimate $\tilde\mu$ satisfying $\| \tilde\mu - \mu \|_2 \leq \delta$ while using only $\tO(\sqrt{d \Tr(\Sigma)}/\delta)$ samples of $Y$ (as compared to $O(\Tr(\Sigma)/\delta^2)$ samples that are required classically).\footnote{The quantum bound stated here assumes that $\Tr(\Sigma)/\delta^2 \geq d$, which will be satisfied in our applications.}

It is natural to rewrite $g = B^T v$ as the mean $\mu$ of a random variable $Y = n v_i B^T e_i$, with $i \in [n]$ uniformly at random, and then apply quantum mean estimation to $Y$.
However, it is not clear how to control the trace of the corresponding covariance matrix $\Sigma$, one can only show $\Sigma \preceq n B^T \diag(v)^2 B$, and naively turning the guarantee on $\| \tilde\mu - \mu \|_2$ into the required guarantee on $\| \tilde \mu - \mu \|_{H^{-1}}$ would introduce an unwanted dependence on the condition number of $H$.

We avoid this problem by first constructing a spectral approximation $\widetilde H \approx_\delta H$ using \cref{thm:quantum-approx}.
This allows us to sample from the ``pre-conditioned'' random variable $Y = n v_i \widetilde H^{-1/2} B^T e_i$ with $i \in [n]$ uniformly at random.
Now note that $\Sigma = \E[Y Y^T] = n \widetilde H^{-1/2}B^T  B \widetilde H^{-1/2}\approx n I_d$, and moreover
\[
\delta
\geq  \| \tilde \mu - \mu \|_2
 = \| \widetilde H^{1/2} \tilde\mu - B^T v \|_{\widetilde H^{-1}}
\approx \| \widetilde H^{1/2} \tilde\mu - B^T v \|_{H^{-1}},
\]
so that returning the estimate $\widetilde H^{1/2} \tilde\mu$ yields a valid estimator of the gradient.
This trick avoids any condition number dependence, and it leads to the following theorem.

\begin{restatable*}[Approximate matrix-vector product]{theorem}{approxmv} \label{thm:approx-mv}
Assume query access to a vector $v \in \R^{n}$ and a matrix $B \in \R^{n \times d}$ with row sparsity $r$.
There is a quantum algorithm that returns a vector $\tilde y$ satisfying
\[
\| \tilde y - B^T v \|_{(B^T B)^{-1}}
\leq \delta,
\]
while making {$\tO(\sqrt{n}d\|v\|_\infty/\delta)$} row queries to $B$ and $v$, and taking time {$\tO(r \sqrt{n}d^{2}\|v\|_\infty/\delta + d^\omega)$}.
\end{restatable*}

Applying this theorem to the various barrier functions, we obtain the complexities in \cref{tab:gradient-complexities} below.

\begin{table}[ht!]
    \centering
    \begin{tabular}{l|l|l}
         Barrier &  Row queries & Time \\ \hline 
       Logarithmic barrier  & $\sqrt{n}d$ & $r\sqrt{n}d^2 + d^\omega$ \\
       Volumetric barrier & $\sqrt{n}d$ & $r\sqrt{n}d^2 + d^\omega + d \cdot \min\{d^\omega,d r^2\}$ \\
       Lewis weight barrier & $\sqrt{n}d^{9/2}$ & $r^2\sqrt{n}d^{13/2} + d^{\omega+13/2}$
    \end{tabular}
    \caption{Complexity of quantum algorithms for estimating gradients up to constant error. Row queries are to $A \in \R^{n \times d}$ and $b \in \R^n$, $r$ is the row-sparsity of $A$, and we ignore polylog-factors.}
    \label{tab:gradient-complexities}
\end{table}

\subsection{Main result and discussion} \label{sec:main-discussion}

We summarize our main result, which is based on the Lewis weight IPM.
Combining our quantum algorithms for spectral approximation, Lewis weight computation, and approximate matrix-vector multiplication, we obtain the following theorem.

\begin{theorem}[Quantum IPM for LP] \label{thm:quantum-IPM}
Consider an LP\footnote{We assume that the LP is full-dimensional, and that we are given an initial point $x_0$ in the interior of the feasible region, not too close to the boundary. See \cref{sec:robustness} for a more precise statement.} $\min\, c^T x$ s.t.~$Ax \geq b$.
Assume $A \in \R^{n \times d}$ has row sparsity $r$, and let $\mathrm{val}$ denote its optimal value.
Given query access to $A,b,c$, we give a quantum interior point method that explicitly returns a vector $\tilde x$ satisfying
\[
A \tilde x \geq b
\quad \text{ and } \quad
c^T \tilde x \leq \mathrm{val} + \eps.
\]
Depending on the barrier function, the algorithm makes $\tO(n d)$ (logarithmic), $\tO(n^{3/4} d^{5/4})$ (volumetric) or $\tO(\sqrt{n} d^5)$ (Lewis weight) row queries to $A$, $b$ and $c$.
For the Lewis weight barrier, the time complexity is $\tO(\sqrt{n} \poly(d))$. 
\end{theorem}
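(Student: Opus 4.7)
The plan is to instantiate the Lewis weight interior point method of Lee and Sidford, replacing each exact Newton step with an approximate one assembled from our three quantum subroutines. By the robustness discussion of \cref{sec:robustness}, the IPM converges to an $\eps$-approximate optimum in $\tO(\sqrt{\vartheta_f}\log(1/\eps))$ path-following iterations as long as, at each iterate $x$, we can produce a symmetric PSD matrix $Q(x)$ with $Q(x)\approx H(x)$ and a vector $\tilde g(x)$ with $\|\tilde g(x)-g(x)\|_{H(x)^{-1}}\le \delta$ for some $\delta=1/\tO(1)$. For the Lewis weight barrier $\vartheta_f = d\polylog(n)$, so the outer loop runs for $\tO(\sqrt d\log(1/\eps))$ iterations; for the volumetric and logarithmic barriers the iteration counts are $\tO((nd)^{1/4})$ and $\tO(\sqrt n)$ respectively.

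Within a single iteration I would proceed as follows. First, invoke the quantum Lewis weights theorem on $S_x^{-1}A$ to obtain query access to constant-factor approximations $\tilde w_i$ of $w^{(p)}_i(S_x^{-1}A)$, incurring the stated one-shot preprocessing cost and $\tO(1)$ amortized query cost per entry. Second, feed the implicitly-rescaled matrix $B := \tilde W_x^{1/2}S_x^{-1}A$ into the quantum spectral approximation routine to produce $Q(x)$ with $Q(x)\approx B^TB\approx H(x)$. Third, apply \cref{thm:approx-mv} with this $B$ and $v:=\tilde W_x^{1/2}\mathbf 1$ (noting $\|v\|_\infty\le 1$ because Lewis weights lie in $[0,1]$) to obtain $\tilde g(x)$ with $\|\tilde g(x)-B^Tv\|_{(B^TB)^{-1}}\le\delta$; since $B^TB\preceq \tO(1)\cdot H(x)$ the same bound holds in the $H(x)^{-1}$ norm up to polylog factors. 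Finally, solve the $d\times d$ linear system $Q(x)\tilde n(x)=-\tilde g(x)$ classically in time $\tO(d^\omega)$ and update $x\leftarrow x+\tilde n(x)$, then increment $\eta$ by the Lewis-weight rate $1+1/\sqrt{\vartheta_f}$. Multiplying the per-iteration costs from \cref{tab:hessian-complexities,tab:gradient-complexities} ($\tO(\sqrt n\,d^{3/2})$ and $\tO(\sqrt n\,d^{5/2})$ row queries for the Lewis weight barrier) by $\tO(\sqrt d)$ iterations gives the claimed $\tO(\sqrt n\,d^3)$ row queries and $\tO(\sqrt n\,\poly(d))$ time; the analogous computation for the logarithmic and volumetric rows of the tables yields the other two bounds.

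The main obstacle is error and failure-probability bookkeeping across the three subroutines and across iterations. Each quantum subroutine succeeds only with high probability, so I would boost each call to failure probability $1/\poly(n,d,1/\eps)$ by $O(\log)$ repetition and union-bound over all $\tO(\sqrt d\log(1/\eps))$ outer iterations. A more delicate point is that the Lewis weight routine returns only implicit query access, yet inside a single iteration the spectral approximator and the mean-estimation based gradient approximator must be run against the \emph{same} realization of the weight estimates (otherwise $B^TB$ is not well defined); this is handled by fixing the random coins of the Lewis-weight data structure at the start of each iteration and treating it as a deterministic oracle for the duration. A secondary issue is translating the duality gap guarantee of the IPM into a \emph{strictly feasible} $\tilde x$ with $c^T\tilde x\le\mathrm{val}+\eps$: one takes the final centered iterate (which is strictly feasible by construction since each Newton update stays inside the Dikin ellipsoid of the barrier) and uses the standard $\vartheta_f/\eta$ bound on the duality gap, choosing the final $\eta$ to be $\tO(\vartheta_f/\eps)$. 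The footnoted assumptions on full-dimensionality and a well-centered initial point $x_0$ are discharged by the classical two-phase reduction, which only adds $\poly(d,\log n)$ to the time and does not affect the leading $\sqrt n$ factor.
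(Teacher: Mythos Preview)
Your overall architecture is the same as the paper's: invoke the robust IPM framework of \cref{sec:robustness}, plug in the quantum subroutines for Hessian and gradient approximation (as summarized in \cref{tab:hessian-complexities,tab:gradient-complexities}), and multiply by the $\tO(\sqrt{\vartheta_f})$ iteration count. The final arithmetic is right because you quote the tables.

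There is, however, a real gap in your gradient step. You propose to compute \emph{constant-factor} approximations $\tilde w_i$ of the Lewis weights, set $B=\tilde W_x^{1/2}S_x^{-1}A$ and $v=\tilde W_x^{1/2}\mathbf 1$, and then use \cref{thm:approx-mv} to approximate $B^Tv$. But $B^Tv = -A^TS_x^{-1}\tilde W_x\mathbf 1$ is not the gradient; the gradient is $g(x)=-A^TS_x^{-1}W_x\mathbf 1$ with the \emph{exact} weights. Writing $g(x)=B^T v'$ with $v' = \tilde W_x^{-1/2}W_x\mathbf 1$, the discrepancy obeys (via \cref{lem: mult error})
\[
\|B^Tv - g(x)\|_{(B^TB)^{-1}} \;\le\; \eps\,\|v\|_2 \;\le\; \eps\sqrt{(1+\eps)d},
\]
where $\eps$ is the multiplicative error in $\tilde w$. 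With $\eps=\Theta(1)$ this term is $\Theta(\sqrt d)$, not $O(1)$, so your $\tilde g(x)$ does \emph{not} satisfy $\|\tilde g(x)-g(x)\|_{H(x)^{-1}}\le\delta$ for constant $\delta$. The paper resolves this (see \cref{lem:gradient lew} and the discussion after \cref{lem: mult error}) by computing $(1\pm O(1/\sqrt d))$-multiplicative Lewis weight estimates for the gradient step; feeding $\eps=O(1/\sqrt d)$ into \cref{thm:quantum-Lewis} is precisely what produces the $\sqrt n\,d^{5/2}$ entry in \cref{tab:gradient-complexities} that you cite. The same issue arises for the volumetric barrier, where $O(1/\sqrt d)$-precision leverage scores are required (\cref{lem:gradient vol}). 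For the Hessian, by contrast, constant-factor weights suffice, which is why the Hessian column of the tables is cheaper.
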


The quantum IPMs based on the volumetric and Lewis weight barriers use a sublinear number of row queries to $A$ when $n$ is much larger then $d$ (specifically, $n \in \Omega(d^5)$ for volumetric and $n \in \Omega(d^{10})$ for Lewis weights).
Such sublinear quantum query and time complexity is in line with many earlier works on quantum speedups such as \cite{durr2006quantum,apers2022quantum} but also works on classical sublinear algorithms \cite{rubinfeld2011sublinear}.
In the following we put our results in perspective by (i) discussing such tall LPs, (ii) describing a lower bound, and (iii) describing a benchmark algorithm based on cutting plane methods.

\paragraph{Tall LPs.}
Our quantum algorithm has sublinear query and time complexity for LPs with many more constraints than variables.
Such ``tall'' LPs have a long history, see for instance the early works~\cite{megiddo1984fixed,clarkson1995vegas} as well as the recent~\cite{brand2021minimum}. 
While there are many examples of such LPs e.g.~in combinatorial optimization \cite{schrijver2003combinatorial}, we mention one concrete application mentioned in the early work~\cite{megiddo1984fixed}.
``Linear separability'' asks whether two sets of $O(n)$ vectors in $\R^d$ can be separated by a hyperplane. A special case of this problem is called the linear separability problem of Boolean functions: here the set of points consists of the $n=2^d$ vertices of the $d$-dimensional Boolean hypercube, which are divided into two sets based on the value of a Boolean function.
Boolean functions that admit a linear separator are also known as perceptrons.
The resulting LP has exponentially many more constraints than variables.
Moreover, if the Boolean function can be evaluated efficiently then the constraints have a \textit{concise} description, and so this is an example where the queries in \cref{thm:quantum-IPM} can be efficiently instantiated.

\paragraph{Lower bound.}
The lower bound essentially follows from~\cite{vAGGdW:quantumSDP}, and we present it in \cref{sec: LB}.
The bound states that any quantum algorithm for solving an LP to constant precision must make $\Omega(\sqrt{nd})$ row queries.
This shows that the $\sqrt{n}$-dependency in our algorithm is optimal, while it leaves much room for improvement regarding the $d$-dependency.
Indeed, \emph{we conjecture that this lower bound is tight,} and we later give some hopeful directions to improve our results towards such a bound. For comparison, in \cref{sec: LB} we show that in the classical setting $\Omega(n)$ queries are needed, either to the same row query model, or to a quantum-inspired access model.

\paragraph{Benchmark.}
We now compare our results to a quantum speedup over \emph{cutting plane methods}.
In \cref{app: cutting plane} we show that such methods are particularly amenable to a Grover-type quantum speedup.
Indeed, they are usually phrased in terms of a ``separation oracle'' that returns an arbitrary violated constraint, and this can be implemented quadratically faster using Grover search.
The state-of-the-art (and highly technical) cutting plane method from Lee, Sidford and Wong~\cite{LeeSidfordWong15} makes $\tO(d)$ queries to a separation oracle, and this directly yields a quantum algorithm that makes $\tO(\sqrt{n} d)$ row queries to $A$ (and requires additional time $\tO(r \sqrt{n} d + d^3)$).
The query complexity is a factor $\sqrt{d}$ above the lower bound, and such a black-box approach building on a classical algorithm with separation queries cannot further improve over this (indeed, it is not hard to see that any classical algorithm for LP solving must make $\Omega(d)$ separation queries).
We see this as a clear motivation for looking at ``white-box'' quantum algorithms, such as ours based on IPMs. Indeed, also classically the state-of-the-art for solving tall, dense, LPs is achieved by an IPM~\cite{brand2021minimum}; it runs in time~$\tO(nd+d^{2.5})$.

\paragraph{Open questions.}
Finally, we present some of the open directions to improve over our work. The first two relate to the complexity of approximating a single Newton step.

\begin{enumerate}
\item[1.]
\textbf{Cheaper gradient approximation:}
We can spectrally approximate the Hessian with $\tO(\sqrt{nd})$ row queries for the logarithmic and the volumetric barrier.
This is tight, and it matches the lower bound for LP solving.
However, our algorithm for approximating the gradient of the simplest, logarithmic barrier already has complexity $\tO(\sqrt{n} d)$. 
This is a clear direction to improve the complexity of our quantum IPM.
To surpass our bound, one could try to better exploit the structure of the gradients that we need to estimate, or change the path-following method so that a weaker approximation of the gradient would suffice.

\item[2.] \textbf{Low-precision leverage scores / Lewis weights:} Currently, the (time) complexity of approximate Newton steps for the volumetric and Lewis weight barriers is higher than that of the logarithmic barrier. The main reason for this is that, in order to obtain a good approximation of the gradient, we currently require high-precision $1/\poly(d)$-approximations of leverage scores. For Lewis weights in particular, the cost of converting between different notions of approximations is currently high and thus a clear candidate for improvement, see \cite{apers2024lewis,Fazel:highprecisionLewisWeights}. Can one base an IPM with these barriers solely on $O(1)$-approximate leverage scores? 
\end{enumerate}

While the cost of a single Newton step could conceivably be improved to $\tO(\sqrt{nd})$ row queries, an IPM still requires a large number of such steps (e.g., $\tO(\sqrt{d})$ Newton steps for the Lewis weight barrier). 
Dealing with this overhead is another natural direction.
\begin{enumerate}
\item[3.]
\textbf{Dynamic data structures:} Classical state-of-the-art algorithms often use dynamic data structures to \emph{amortize} the cost of an expensive routine (e.g., computing a Newton step) over the number of iterations. Maintaining such a data structure in a sublinear time quantum algorithm is challenging and has not been explored much. One nice example is \cite{bouland2023quantum} where the authors introduce dynamic Gibbs sampling to improve the state-of-the-art first-order method for zero-sum games (and thereby also for LPs).
IPMs also seem a natural setting in which to explore this question. For example, the Hessian changes slowly between iterations of an IPM by design. Whereas we show that the complexity of our spectral approximation algorithm is tight when applied to a single $B^T B$, it seems likely that this can be improved when considering a sequence of slowly changing matrices.
\end{enumerate}
There are some unexplored applications of our quantum algorithms.
\begin{enumerate}
\item[4.]
\textbf{Further applications:}
As a subroutine, we derived quantum algorithms for estimating leverage scores and Lewis weights.
These quantities have their independent merit, e.g.~for doing dimensionality reduction in data science.
Indeed, following up on a first version of our paper, the works \cite{song2023revisiting} and \cite{gao2024quantum} already built on our quantum leverage score algorithm to derive new quantum algorithms for linear regression and spectral approximation of Kronecker products.

More in the direction of randomized linear algebra, the concept of \emph{determinantal point processes} measure the importance of larger subsets of rows.
Recent works studied quantum algorithms for these based on quantum linear algebra \cite{kerenidis2022quantummachinelearningsubspace,Bardenet2024determinantal}, and it would be interesting to see if our new techniques can be applied to this setting.
\end{enumerate}

Finally, in our work we established a quantum speedup for tall LPs, i.e., when $n \gg d$. This raises two natural questions. First, we achieved a sublinear number of row-queries when $n \gg d^{10}$ using the Lewis weight barrier, or $n \gg d^{5}$ using the volumetric barrier. What is the smallest exponent $c$ for which we can achieve a sublinear number of row queries when $n \gg d^c$? Second, more broadly, can one achieve a speedup in the regime where $n \approx d$?

\subsection{Related work}

\paragraph{Quantum interior point methods.}
As one of the motivations for this work, we mention that a large number of works \cite{kerenidis2020quantum,kerenidis2021quantum,augustino2021quantum,mohammadisiahroudi2022efficient,dalzell2022end,huang2022faster} have recently studied the use of quantum linear system solvers  combined with tomography methods to speed up the costly Newton step at the core of an IPM. We refer to the recent survey~\cite{Abbas2024Challenges} for a more detailed discussion of the challenges quantum linear system solvers introduce in the context of IPMs. In particular, quantum linear system solvers inherently introduce a dependence on the condition number of the linear system that they solve. A main feature of our work is that we do not introduce such a dependence. A second main feature of our work is that we achieve a $\log(1/\eps)$-dependence, whereas previous quantum algorithms for LPs usually achieved a $\poly(1/\eps)$-dependence~\cite{brandaoSDP17,vAGGdW:quantumSDP,brandão2019quantum,vanapeldoornImprovedSDP,vanapeldoorn2019quantum,bouland2023quantum}.
Recently, iterative refinement techniques have been used in the context of quantum IPMs to improve the error dependence to $\polylog(1/\eps)$, but the dependence on condition numbers remains~\cite{mohammadisiahroudi2022efficient,augustinoThesis}.

We finally mention a recent proposal to solve LPs by following the central path through quantum simulation of suitable dynamics~\cite{augustino2024quantumcentralpathalgorithm}. The claimed query complexity is $\widetilde O(\sqrt{n+d} \cdot \nnz(A) \cdot R_1/\eps)$ where $\nnz(A)$ is the number of non-zero entries in $A$ and $R_1$ is an upper bound on the $\ell_1$-norm of the solutions to both the primal and dual problem. We note that this upper bound scales at least with $n^{1.5}$ (assuming every constraint is non-zero) and also exhibits a $1/\eps$-scaling.

\paragraph{Quantum gradient and Hessian approximation.} In a different setting, where one can make function-evaluation queries, Jordan introduced an algorithm to estimate gradients~\cite{jordan2005fast}. This algorithm was later rigorously analyzed for smooth functions~\cite{gilyen2019optimizing}, convex functions~\cite{vAGGdW:quantumconvex,Chakrabarti2020quantumalgorithms}; and extended to Hessian estimation~\cite{zhang2024quantumspectralmethodgradient}. These algorithms differ from our setting in both their input and output models. As for the input model, they require quantum query access to an oracle that evaluates the function (up to high precision); implementing such a query for, e.g., the logarithmic barrier naively requires querying the entire input. As for the output model, both for the gradient and Hessian approximation one obtains an entrywise approximation. In contrast, our algorithm requires a spectral approximation of the Hessian, for which such an additive approximation is insufficient.

\paragraph{Quantum linear algebra.} 
A recent line of work has unified many quantum algorithms for linear algebraic problems, using a framework called quantum singular value transformation (QSVT)~\cite{gilyen2019qsvt}. This includes for instance the well-known quantum algorithm for solving linear systems~\cite{chakraborty2018powerc}.
We mention two recent works that are most related to ours. First, the QSVT-framework has been used to construct a leverage score sampling algorithm in~\cite{shao2023quantumspeedupleveragescore}. The runtime of their algorithm depends on the conditioning of the matrix, in contrast to the algorithm that we develop here. Second, the QSVT-framework can be used to approximate the top eigenvector of a $d$-by-$d$ matrix $A$ in sublinear time given query access to the entries of $A$~\cite{chen2025topeigenvector}. Their algorithm implements a robust version of the power method, and thus establishes an efficient algorithm for matrix-vector multiplication for square matrices. We develop an efficient algorithm for matrix-vector multiplication in the regime where the matrix is tall. The two methods moreover measure the quality of the approximation in a different norm; they use the standard $\ell_2$-norm, whereas we use a norm motivated by our application in IPMs (the dual of the local norm).

\paragraph{Quantum-inspired algorithms.}
Finally, our spectral approximation for tall matrices $B \in \R^{n \times d}$ with $n \gg d$ is based on subsampling the rows of $B$.
This setting is much in line with recent works on quantum-inspired classical algorithm (see e.g.~Tang~\cite{tang19recommendation} and Chia et al.~\cite{chia22sampling}).
However, a critical difference is that these works sample rows of $B$ based on the $\ell_2$-norms of the rows, and this typically yields additive approximations.
In our work we sample rows of $B$ based on the $\ell_p$-leverage scores.
This yields multiplicative approximations, and this is crucial for our IPM setting.
We formalize this point by showing in \cref{sec: LB} an $\Omega(n)$ lower bound for spectral approximation and LP solving in the quantum-inspired access model.

\section{Preliminaries}

\paragraph{Notation.}
We let $\mathcal S^d \subseteq \R^{d \times d}$ be the set of $d$-by-$d$ symmetric matrices. For $H \in \mathcal S^d$ we write $H \succ 0$ (resp.~$H\succeq 0$) if $H$ is positive definite (resp.~positive semidefinite). For vectors $u, v \in \R^d$ we let $\langle u, v \rangle = \sum_{i \in [d]} u_i v_i$ denote the standard inner product. A positive definite matrix $H$ defines an inner product $\langle u,v \rangle_H \coloneqq \langle u,Hv\rangle$. We write $\|u\|_H = \sqrt{\langle u,u\rangle_H}$. We often use the \emph{local norm} associated to a twice continuously differentiable function $f$ whose Hessian $H(x)$ is positive definite for all $x$ in its domain $D$. Given a fixed reference inner product $\langle \cdot, \cdot \rangle$ (e.g.~the standard inner product on $\R^d$), $f$ defines a family of \emph{local} inner products, one for each $x$ in its domain via its Hessian:
\[
\langle u,v\rangle_{H(x)}
\coloneqq \langle u,H(x)v\rangle. 
\]
We will write $\langle \cdot, \cdot \rangle_x$ instead of $\langle \cdot, \cdot \rangle_{H(x)}$ when the function $f$ is clear from context.  

For $\eps>0$, the matrix $Q \in \S^d$ is an $\eps$-spectral approximation of a $d$-by-$d$ PSD matrix~$H \succeq 0$ if 
\[
(1-\eps) Q \preceq H \preceq (1+\eps) Q, 
\]
where $A \preceq B$ is equivalent to $B-A \succeq 0$.
We denote such a spectral approximation by $Q \approx_{\eps} H$.
We will also use this notation for scalars, where $a \approx_\eps b$ denotes $(1-\eps) b \leq a \leq (1+\eps) b$, and for vectors (where the inequalities hold entrywise).

Finally, for a vector $v \in \R^n$, we use the notation $V = \Diag(v)$ for the diagonal $n$-by-$n$ matrix who's $i$-th diagonal element corresponds to the $i$-th element of $v$.

\paragraph{With high probability.}
Throughout this work, ``with high probability'' means with probability at least $1-1/n^c$ for an arbitrarily high but fixed constant $c>0$, where $n$ typically denotes the size of the problem instance.

\paragraph{Quantum computational model and QRAM assumptions.}
We assume the usual quantum computational model (see e.g.~\cite{apers2022quantum}), which is a classical system that can (i) run quantum subroutines on $O(\log n)$ qubits, (ii) can make quantum queries to the input, and (iii) has access to a quantum-read/classical-write RAM (QRAM) of $\poly(n)$ bits.\footnote{We charge unit cost for classically writing a bit, or quantumly reading a bit (potentially in superposition).}
The \emph{time complexity} of an algorithm in this model measures the number of elementary classical and quantum gates, queries and QRAM operations that the algorithm uses.
The \emph{query complexity} of an algorithm measures the number of queries to the input.

The feasibility of a QRAM is often debated.
However, we note that by a standard argument (see, e.g., \cite[Theorem 1]{yuan2023optimal}) one can simulate access to a QRAM of $k$ bits by increasing the number of gates and qubits in the regular circuit model, and hence the algorithm's time complexity, by a multiplicative factor $O(k)$ (and only a $O(\log(k))$ increase in circuit depth).
As a consequence, all statements about quantum \emph{query} complexity are \emph{independent of the QRAM assumption}.
This is not the case for the time complexity, however, since the algorithm from \cref{lem:q-sampling} uses a QRAM of size $\sqrt{n}\, \poly(d)$. 

Regarding queries to the input, we will typically consider inputs consisting of tall matrices $Z \in \R^{n \times d}$ (where $d$ can be 1).
We count the query complexity in terms of the number of \emph{row queries}, where a row query to the $i$-th row of $Z$ returns the entire row.
We note that a query can be made in superposition.
The row query complexity can alternatively be expressed in the \emph{sparse access} model (see e.g.~\cite[Section 2.4]{chakraborty2018powerc}) by noting that a single row query corresponds to $r$ sparse queries, with~$r$ the row-sparsity of $Z$.
We will frequently work with matrices of the form $D Z$ for a diagonal matrix $D$, and we note that one row query to $D Z$ can be reduced to 1 query to $D$ and 1 row query to $Z$.

\section{Quantum algorithm for spectral approximation}

In this section we describe our main quantum algorithm for constructing a spectral approximation of a matrix $A \in \R^{n \times d}$ (where now $A$ denotes a general matrix, not necessarily the LP constraint matrix).
The algorithm has sublinear time and query complexity for sufficiently tall matrices.
While motivated by interior point methods, it is a self-contained section, and we envision applications of these quantum algorithms in other areas such as statistical regression.
The algorithm's complexity is described in the following theorem.

\begin{theorem}[Quantum spectral approximation] \label{thm:quantum-approx}
Consider query access to a matrix $A \in \R^{n \times d}$ with row sparsity $r$.
For any $0 < \eps \leq 1$, there is a quantum algorithm that, with high probability, returns a matrix $B \in \R^{\tO(d/\eps^2) \times d}$ satisfying
\[
(1-\eps) B^T  B
\preceq A^T A
\preceq (1+\eps) B^T B,
\]
while making $\tO(\sqrt{nd}/\eps)$ row queries to $A$, and taking time $\tO(r \sqrt{nd}/\eps + d^\omega)$.
\end{theorem}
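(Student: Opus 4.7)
The approach combines the classical recursive leverage-score sparsification of Cohen, Lee, Musco, Musco, Peng and Sidford~\cite{cohen2015uniform} with Grover-type quantum search. Classically, an $\eps$-spectral approximation with $\tO(d/\eps^2)$ rows can be built by independently keeping row $i$ with probability $p_i = \min\{1, C u_i/\eps^2\}$, where $u_i$ is a constant-factor overestimate of the leverage score $\sigma_i(A)$ with $\sum_i u_i = \tO(d)$; matrix Chernoff then certifies spectral closeness. Although the expected number of kept rows is only $K = \sum_i p_i = \tO(d/\eps^2)$, enumerating them classically requires $n$ coin flips. Our key idea is that once the accept/reject test for row $i$ is framed as a deterministic predicate (by fixing pseudorandom bits in advance), quantum search with a known upper bound $K$ on the number of marked items locates all accepted rows in $\tO(\sqrt{nK}) = \tO(\sqrt{nd}/\eps)$ oracle calls.

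\textbf{Recursive construction of overestimates.} To produce the overestimates $u_i$ sublinearly we follow the CLMMPS bootstrap. Form a chain $A = A^{(0)}, A^{(1)}, \ldots, A^{(L)}$ with $L = O(\log(n/d))$, where each $A^{(\ell+1)}$ is obtained from $A^{(\ell)}$ by keeping each row independently with probability $1/2$ (rescaled by $\sqrt{2}$). With high probability $A^{(L)}$ has only $\tO(d)$ rows, so we can materialize it via $\tO(d)$ row queries to $A$ and compute $Q_L = {A^{(L)}}^\top A^{(L)}$ exactly in $\tO(d^\omega)$ time. Inductively, suppose we have a short matrix $B_{\ell+1}$ with $\tO(d)$ rows (stored explicitly as rescaled rows of $A$) such that $B_{\ell+1}^\top B_{\ell+1} \approx {A^{(\ell+1)}}^\top A^{(\ell+1)}$. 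The CLMMPS analysis then guarantees that $u_i := c \cdot a_i^\top (B_{\ell+1}^\top B_{\ell+1})^{-1} a_i$, for an appropriate constant $c$, is a constant-factor overestimate of the leverage score of row $i$ of $A^{(\ell)}$ and satisfies $\sum_i u_i = \tO(d)$.

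\textbf{Quantum implementation per level.} Given $B_{\ell+1}$, precompute $Q_{\ell+1}^{-1}$ in time $\tO(d^\omega)$ and form a Johnson--Lindenstrauss sketch $N \in \R^{d \times \tO(\log n)}$ so that $\|N^\top a\|_2^2$ is a $(1\pm 1/10)$-approximation of $a^\top Q_{\ell+1}^{-1} a$ for each of the at most $n$ candidate rows simultaneously (union bound). After this one-time $\tO(d^\omega)$ preprocessing, evaluating $u_i$, fetching the fixed pseudorandom seed $s_i$, and deciding whether row $i$ is marked costs only $\tO(r)$ time (one row query to $A$ plus a sparse--dense multiplication with $N$). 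We then invoke quantum search with known upper bound $K_\ell = \tO(d/\eps_\ell^2)$ on the number of marked rows (using standard geometric doubling if only a high-probability bound is available) to enumerate them in $\tO(\sqrt{n_\ell K_\ell}) = \tO(\sqrt{nd}/\eps_\ell)$ queries and $\tO(r\sqrt{nd}/\eps_\ell)$ time. Taking $\eps_\ell = \Theta(1)$ at intermediate levels and $\eps_0 = \eps$ at the top, and summing over $L = \tO(1)$ levels, yields the claimed $\tO(\sqrt{nd}/\eps)$ row queries and $\tO(r\sqrt{nd}/\eps + d^\omega)$ time.

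\textbf{Main obstacle.} The delicate step is not the query complexity but the time complexity, namely keeping the per-row decision oracle down to $\tO(r)$ time. A direct evaluation of $a_i^\top Q_{\ell+1}^{-1} a_i$ costs $\tO(rd)$ for an $r$-sparse row, which would inflate the total time by a factor of $d$ and destroy the bound. The JL-sketch cures this but must be handled carefully: the sketch is drawn once per level \emph{before} Grover is launched, the $n$ possible queries are deterministic in it, and we union-bound the JL guarantee over the at most $n$ candidate rows (rather than over a quantum superposition). A related subtlety is the interplay of randomness between the CLMMPS subsampling, the JL sketch, and the Grover subroutine; the cleanest fix is to fix all classical randomness (pseudorandom seeds stored in QRAM) before invoking quantum search, so the marked set becomes a deterministic function of the input and Grover can be applied in its textbook form. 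The CLMMPS overestimate guarantee and matrix-Chernoff concentration then combine to certify $Q_\ell \approx {A^{(\ell)}}^\top A^{(\ell)}$ with high probability, closing the induction.
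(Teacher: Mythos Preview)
Your proposal is essentially the paper's proof: the CLMMPS repeated-halving chain, a JL-based leverage-score oracle with $\tO(r)$ per-row cost, and Grover-based enumeration of the $\tO(d/\eps^2)$ sampled rows. The one place your treatment is looser than the paper's concerns the randomness. You propose to ``fix all classical randomness (pseudorandom seeds stored in QRAM) before invoking quantum search,'' but writing down a seed $s_i$ for each of the $n$ rows---and the $\Theta(nL)$ coin flips defining the chain $A^{(0)},\dots,A^{(L)}$---already costs $\Omega(n)$ time and destroys the sublinear time bound. The paper singles this out explicitly and handles it via a quantum random-oracle simulation lemma (from~\cite{apers2022quantum}): any $q$-time quantum algorithm that queries a uniformly random string can be simulated in time $\tO(q)$ without ever materializing the string. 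Your word ``pseudorandom'' may be hinting at such a mechanism, but as written the argument for the \emph{time} complexity has a gap here. (A smaller quibble: materializing $A^{(L)}$ still requires a Grover search over all $n$ rows of $A$ to locate the $\tO(d)$ survivors of the chain, so it costs $\tO(\sqrt{nd})$ queries rather than the $\tO(d)$ you state; this is within budget regardless.)
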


Based on this theorem, and with some additional work, we can derive the following theorem on quantum algorithms for approximating leverage scores.
We will use this later on.

\begin{restatable}[Quantum leverage score approximation]{theorem}{quantumls}
\label{thm:quantum-ls}
Assume query access to a matrix $A \in \R^{n \times d}$ with row sparsity $r$.
For any $0 < \eps \leq 1$, there is a quantum algorithm that, with high probability, provides query access to estimates $\tilde\sigma_i$ for any $i \in [n]$ satisfying
\[
\tilde\sigma_i
= (1\pm \eps) \sigma(A).
\]
The cost of the algorithm is either of the following:
\begin{enumerate}
\item
Preprocessing: $\tO(\sqrt{nd}/\eps)$ row queries to $A$ and $\tO(r \sqrt{nd}/\eps + d^\omega + \min\{d^\omega,d r^2\}/\eps^2)$ time.\\
Cost per estimate $\tilde\sigma_i$: 1 row query to $A$ and $O(r^2)$ time.
\item
Preprocessing: $\tO(\sqrt{nd}/\eps)$ row queries to $A$ and $\tO(r \sqrt{nd}/\eps + d^\omega + \min\{d^\omega,d r^2\}/\eps^2 + d^2/\eps^4)$~time.\\
Cost per estimate $\tilde\sigma_i$: 1 row query to $A$ and $\tO(r/\eps^2)$ time.
\end{enumerate}
\end{restatable}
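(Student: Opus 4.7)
The plan is to use the quantum spectral approximation of Theorem 3.1 to produce a matrix $B$ with $B^T B \approx_\eps A^T A$, and then use $B$ as a proxy for $A$ when evaluating leverage scores. Concretely, I would first invoke Theorem 3.1 with precision $\eps$ (up to a constant factor) to obtain $B \in \R^{\tO(d/\eps^2) \times d}$, which has row-sparsity $r$ since its rows are rescaled rows of $A$. A standard PSD argument shows that $B^T B \approx_\eps A^T A$ forces $\ker(B^T B) = \ker(A^T A)$ and $(1-O(\eps))(A^T A)^+ \preceq (B^T B)^+ \preceq (1+O(\eps))(A^T A)^+$, so for every row $a_i$ of $A$ (which lies in the common range) the quantity $\tilde\sigma_i := a_i^T (B^T B)^+ a_i$ is a $(1\pm O(\eps))$-multiplicative estimate of $\sigma_i(A)$. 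Rescaling $\eps$ by a constant absorbs the blowup.

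For the first option I would precompute $M := (B^T B)^+$ once as a dense $d \times d$ matrix: forming $B^T B$ takes either $\tO(d^\omega/\eps^2)$ time (rectangular fast matrix multiplication, viewing the $(d/\eps^2)\times d$ matrix as $1/\eps^2$ blocks of $d\times d$) or $\tO(dr^2/\eps^2)$ time (summing $\tO(d/\eps^2)$ sparse rank-one outer products supported on $r \times r$ blocks), and pseudoinverting it adds $O(d^\omega)$. Each query then reads $a_i$ with one row query to $A$ and evaluates $a_i^T M a_i$ in $O(r^2)$ time by restricting to the $r$-sparse support of $a_i$.

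For the second option I would additionally apply a Johnson-Lindenstrauss sketch to shrink the per-query cost whenever $r^2 \gg r/\eps^2$. Using the identity $\tilde\sigma_i = \|B(B^T B)^+ a_i\|_2^2$, I would sample a JL matrix $G \in \R^{k \times m}$ with $k = O(\log n / \eps^2)$ and $m = \tO(d/\eps^2)$, and precompute $P := G B (B^T B)^+ \in \R^{k \times d}$. The bottleneck is the product $GB$, which is a dense $k \times m$ matrix times a sparse $m \times d$ matrix and costs $\tO(d^2/\eps^4)$; the subsequent multiplication by $(B^T B)^+$ is a cheaper $\tO(d^2/\eps^2)$. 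Each query then outputs $\|Pa_i\|_2^2$, computed in $O(kr) = \tO(r/\eps^2)$ time, and a union bound with the distributional JL lemma gives $\|Pa_i\|_2^2 \approx_\eps \tilde\sigma_i$ simultaneously for all $i \in [n]$ with high probability.

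The main obstacle is composing the two independent sources of $\eps$-error --- the spectral approximation and the JL sketch --- without blowing up the final multiplicative error; this is handled routinely by running each subroutine with precision $\eps/c$ for a suitable constant. A secondary bookkeeping point is verifying that each $a_i$ lies in $\mathrm{range}(B^T B)$, which is automatic from $\ker(B^T B) = \ker(A^T A)$. All stated query and time bounds then follow by summing the cost of Theorem 3.1 with the dense/sparse matrix operations above.
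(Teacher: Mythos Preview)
Your proposal is correct and follows essentially the same approach as the paper: invoke the quantum spectral approximation (Theorem~3.1) to obtain $B$ with $B^T B \approx_{\eps} A^T A$, observe that $\sigma_i^B(A) \approx_\eps \sigma_i(A)$, and then evaluate the generalized leverage scores either directly via $(B^T B)^+$ (the paper's Lemma~3.5) or through a Johnson--Lindenstrauss sketch of $B(B^T B)^+$ (the paper's Lemma~3.7). Your cost accounting and your handling of the kernel issue (noting $a_i \in \mathrm{range}(A^T) \perp \ker(A^T A) = \ker(B^T B)$, so the infinite case never arises) match the paper's reasoning.
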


Proofs of both theorems are given in \cref{sec:q-rep-halving}.

\begin{remark}[Graph sparsification] \label{remark:graph-sparsification}
We note that \cref{thm:quantum-approx} generalizes an earlier quantum algorithm for graph sparsification \cite{apers2022quantum}, although it uses very different techniques.
Indeed, we can recover the complexity in \cite[Theorem 1]{apers2022quantum} by letting $A \in \R^{n \times d}$ denote the edge-vertex incidence matrix of a graph with $n$ edges and $d$ vertices, which has sparsity $r = 2$.
This directly recovers the $\tO(\sqrt{nd}/\eps)$ query complexity of \cite{apers2022quantum}.
To also recover the $\tO(\sqrt{nd}/\eps)$ time complexity of \cite{apers2022quantum}, note that the $d^\omega$ term in \cref{thm:quantum-approx} is due to solving linear systems for matrices of the form $B^T B \in \R^{d \times d}$, where $B$ contains a (rescaled) subset of $\tO(d)$ rows of~$A$.
When $A$ is an edge-vertex incidence matrix, then $B^T B$ will be a Laplacian matrix with $\tO(d)$ nonzero entries, and these can be solved in near-linear time $\tO(d)$ using fast Laplacian solvers \cite{spielman2004nearly}. 

From the graph setting, we get a matching lower bound that also applies to our setting: in \cite[Theorem~2]{apers2022quantum} it was shown that $\wt\Omega(\sqrt{nd}/\eps)$ quantum queries are needed to construct $\eps$-spectral sparsifiers.
See \cref{lemma:spectral-LB} for a more detailed statement.
\end{remark}

\subsection{Repeated halving algorithm} \label{sec:rep-halving}

We base our quantum algorithm for spectral approximation on a simple, elegant, recursive sparsification algorithm described by Cohen, Lee, Musco, Musco, Peng and Sidford in~\cite{cohen2015uniform}.
As most sparsification algorithms, it requires subsampling rows.
The simplest way of subsampling is just keeping any individual row with a probability $q$.
We denote by
\[
B \subseteq_q A
\]
the matrix obtained by keeping every individual row of $A$ independently with probability $q$.
We will also use a more refined notion of subsampling.
For a matrix $A \in \R^{n \times d}$ and an entrywise positive weight vector $w = (w_i)_{i \in [n]}$, we denote by
\[
B \overset{w,\epsilon}{\leftarrow} A
\]
the matrix obtained by the following weighted subsampling process (where $c$ is a large enough universal constant):

\begin{algorithm}[H]
\caption{Weighted subsampling $B \overset{w,\epsilon}{\leftarrow} A$} \label{alg:subsampling}
\Input{$A \in \R^{n\times d}$, positive $w \in (\R_{>0} \cup \infty)^n$ and $\eps>0$}

\BlankLine

\For{$i \in [n]$}{ 
    with probability $q_i \coloneqq \min\{1,c w_i \log(d)/\epsilon^2\}$, add row $\frac{1}{\sqrt{q_i}} a_i^T$ to $B$\; 
}
\Return{$B$}
\end{algorithm}

The rescaling ensures that, irrespective of $w$, we have that if $B \overset{w,\epsilon}{\leftarrow} A$ then
\[
\E[B^T B]
= \sum_i q_i \frac{1}{q_i} a_i a_i^T
= \sum_i a_i a_i^T
= A^T A.
\]
We can ensure concentration of $B^T B$ around its expectation by picking the weights appropriately.
A typical choice is based on \emph{leverage scores}.
The leverage score of the $i$-th row $a_i^T$ of a matrix $A \in \R^{n \times d}$ is defined as
\[
\sigma_i(A)
\coloneqq 
a_i^T (A^T A)^+ a_i.
\]
The following lemma demonstrates that sampling probabilities based on leverage scores ensure concentration. \begin{lemma}[{Consequence of the matrix Chernoff bound \cite{Tropp11:matrixconcentration}, \cite[Lemma 4]{cohen2015uniform}}] \label{lem:matrix-chernoff}
Consider matrix $A \in \R^{n \times d}$ and weight vector $w = (w_i)_{i \in [n]}$ satisfying $w_i \geq \sigma_i(A)$.
Then, with high probability,
$B \overset{w,\epsilon}{\leftarrow} A$ satisfies $B^T B \approx_\epsilon A^T A$ and has at most $O(\sum_{i \in [n]} q_i)$ rows, where we recall that $q_i =  \min\{1,c w_i \log(d)/\epsilon^2\}$.
\end{lemma}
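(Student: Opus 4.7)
The plan is to view $B^T B$ as a sum of independent random rank-one positive semidefinite matrices and apply Tropp's matrix Chernoff inequality. For each $i \in [n]$, define the independent random matrix $X_i$ that equals $\frac{1}{p_i} a_i a_i^T$ with probability $p_i$ and $0$ otherwise, so that $B^T B = \sum_i X_i$ and $\E[X_i] = a_i a_i^T$. In particular $\E[B^T B] = A^T A$, which is the baseline around which we need concentration.

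To apply matrix Chernoff in a scale-invariant way, I would pre- and post-multiply by $(A^T A)^{+/2}$. Setting $Y_i = (A^T A)^{+/2} X_i (A^T A)^{+/2}$, the expectation $\sum_i \E[Y_i]$ equals the orthogonal projector $\Pi$ onto the row space of $A$. The key step is a uniform operator-norm bound on the $Y_i$'s. Indices with $p_i = 1$ contribute a deterministic (zero-variance) piece and can be peeled off from the analysis. For the remaining indices, $p_i = c w_i \log(d)/\eps^2$, and since $Y_i$ is rank-one PSD,
\[
\|Y_i\|_{\mathrm{op}} \leq \frac{1}{p_i} a_i^T (A^T A)^+ a_i = \frac{\sigma_i(A)}{p_i} \leq \frac{w_i}{p_i} = \frac{\eps^2}{c \log d},
\]
where the last inequality uses the hypothesis $w_i \geq \sigma_i(A)$.

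Armed with this uniform bound, I would invoke Tropp's matrix Chernoff inequality for sums of independent PSD matrices with $\|Y_i\|_{\mathrm{op}} \leq R$: it gives
\[
\Pr\!\left[\sum_i Y_i \not\approx_{\eps} \sum_i \E[Y_i]\right] \leq 2d \exp\!\left(-\tfrac{\eps^2}{3R}\right).
\]
With $R = \eps^2/(c\log d)$ this probability is at most $2 d^{1 - c/3}$, so choosing $c$ large enough delivers the high-probability guarantee. Conjugating the inequality $(1-\eps)\Pi \preceq \sum_i Y_i \preceq (1+\eps)\Pi$ back by $(A^T A)^{1/2}$ (and noting everything lives in the row space of $A$) yields $(1-\eps) A^T A \preceq B^T B \preceq (1+\eps) A^T A$, i.e., $B^T B \approx_\eps A^T A$.

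For the row count, the number of included rows is $|B| = \sum_i \mathbf{1}_i$ with $\E[|B|] = \sum_i p_i$; a standard scalar Chernoff bound (after absorbing an additive $O(\log n)$ into the constant when $\sum_i p_i$ is tiny) gives $|B| = O(\sum_i p_i)$ with high probability. The main obstacle is just bookkeeping — separating the deterministic $p_i = 1$ rows from the random ones, handling the pseudoinverse when $A^T A$ is rank-deficient (restricting all PSD inequalities to the row space of $A$), and matching the constants in the Chernoff bound to the constant $c$ appearing in the definition of $p_i$ so that the failure probability is $\leq 1/\mathrm{poly}(n,d)$. The argument otherwise reduces to a direct instantiation of the matrix Chernoff bound, as in \cite{cohen2015uniform,Tropp11:matrixconcentration}.
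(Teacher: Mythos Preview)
Your proposal is correct and follows the standard route: normalize by $(A^T A)^{+/2}$, bound the operator norm of each summand via the leverage-score inequality $\sigma_i(A)/p_i \leq \eps^2/(c\log d)$, and invoke Tropp's matrix Chernoff bound. The paper itself does not supply a proof of this lemma at all---it simply cites it as a known consequence of \cite{Tropp11:matrixconcentration} and \cite[Lemma~4]{cohen2015uniform}---so your argument is exactly the kind of proof those references contain.
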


Now note that we always have $\sum_{i \in [n]} q_i \leq c\log(d) \|w\|_1/\epsilon^2$ and
\[
\sum_i \sigma_i(A)
= \sum_i a_i^T (A^T A)^+ a_i
= \Tr(A^T A (A^T A)^+)
\leq d.
\]
Hence, if we have good estimates $w_i \in \Theta(\sigma_i(A))$ of the leverage scores of $A$, then $\|w\|_1 \in O(d)$ and the resulting matrix $B$ has only $O(d \log(d)/\eps^2)$ rows.
Of course, in general it is not clear how to efficiently obtain such good estimates of the leverage scores. We will follow the approach of~\cite{cohen2015uniform}, which uses subsampling to cheaply compute overestimates of leverage scores. To formalize this, we need a slight generalization of leverage scores, also used in~\cite{cohen2015uniform}: the \emph{generalized} leverage score of the $i$-th row of a matrix $A \in \R^{n \times d}$ with respect to a matrix $B \in \R^{n_B \times d}$ is defined by
\[
\sigma^B_i(A)
\coloneqq \begin{cases}
a_i^T (B^T B)^+ a_i &\text{if } a_i \perp \ker(B) \\
\infty &\text{otherwise}.
\end{cases}
\]
Note that (i) $\sigma_i(A) = \sigma^A_i(A)$ and (ii) if $B \subseteq_q A$ for $q \in [0,1]$ then $\sigma^B_i(A) \geq \sigma_i(A)$.
The second statement shows that we can derive overestimates of the leverage scores from a uniformly subsampled matrix $B$ of $A$.
The following theorem from~\cite{cohen2015uniform} shows that if $q = 1/2$ then these overestimates do not increase the resulting number of sampled rows significantly.

\begin{theorem}[{\cite[Theorem 4]{cohen2015uniform}}] \label{thm:uniform}
Consider $A \in \R^{n \times d}$ and $B \subseteq_{1/2} A$.
Define $w \in \R^n$ by $w_i = \sigma^B_i(A)$.
Then, with high probability, $\sum_{i \in [n]} q_i \in O(d \log(d)/\eps^2)$, where we recall that $q_i =  \min\{1,c w_i \log(d)/\epsilon^2\}$.
Consequently, if $\widetilde B \overset{w,\eps}{\leftarrow} A$, then with high probability $\widetilde B^T \widetilde B \approx_\eps A^T A$ and $\widetilde B$ has $O(d \log(d)/\eps^2)$ rows.
\end{theorem}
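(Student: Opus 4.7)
The plan is to split the theorem into two pieces: the sample-size bound $\sum_i p_i = O(d\log(d)/\eps^2)$, and the spectral-approximation conclusion for $\widetilde B \overset{w,\eps}{\leftarrow} A$. The latter follows from \cref{lem:matrix-chernoff} as soon as the hypothesis $w_i \geq \sigma_i(A)$ is verified. This hypothesis is free: writing $C$ for the complement subsample so that $A^T A = B^T B + C^T C$, one has $B^T B \preceq A^T A$, and combined with $\ker(A) \subseteq \ker(B)$ this forces $\sigma_i^B(A) \geq \sigma_i(A)$ whenever $a_i \perp \ker(B)$; the remaining ``$\sigma_i^B(A) = \infty$'' case is trivial. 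So the whole theorem reduces to bounding $\sum_i p_i$.

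The engine of the argument is a pointwise comparison
\[
\min\{1,\sigma_i^B(A)\} \;\leq\; 2\,\sigma_i\!\bigl(B \cup \{a_i\}\bigr),
\]
where $B\cup\{a_i\}$ denotes $B$ with row $a_i$ appended (trivially if already present). If $a_i$ lies in the row span of $B$, a rank-one pseudoinverse (Sherman--Morrison) computation gives $\sigma_i(B\cup\{a_i\}) = \sigma_i^B(A)/\bigl(1+\sigma_i^B(A)\bigr)$, so the inequality follows by separating $\sigma_i^B(A)\leq 1$ from $\sigma_i^B(A)>1$. If $a_i \notin \mathrm{rowspan}(B)$, a short block-inverse calculation (using Schur complements on the split $\R^d = \mathrm{rowspan}(B) \oplus \mathrm{rowspan}(B)^\perp$) shows $\sigma_i(B\cup\{a_i\}) = 1$, so the bound collapses to $1 \leq 2$. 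Coupled with the elementary $\min\{1, cx\log(d)/\eps^2\} \leq (c\log(d)/\eps^2)\min\{1,x\}$, valid for $c\log(d)/\eps^2 \geq 1$, this yields $p_i \leq (2c\log(d)/\eps^2)\,\sigma_i(B\cup\{a_i\})$.

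The crucial step for taking expectations is a coupling: the set $B \cup \{a_i\}$ depends only on the Bernoullis $\{X_j\}_{j\neq i}$ (it equals $B_{-i}\cup\{a_i\}$ regardless of $X_i$), while the conditional law of $B$ given $X_i=1$ is exactly $B_{-i}\cup\{a_i\}$. Hence $\sigma_i(B\cup\{a_i\})$ is distributed as $\sigma_i(B)\mid X_i=1$, and summing against $\Pr[X_i=1]=1/2$ gives
\[
\tfrac12 \sum_i \E\bigl[\sigma_i(B\cup\{a_i\})\bigr] \;=\; \E\!\left[\sum_{i\in B} \sigma_i(B)\right] \;\leq\; \E[\mathrm{rank}(B)] \;\leq\; d,
\]
so $\E[\sum_i p_i] \leq 4cd\log(d)/\eps^2$. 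Markov produces $\sum_i p_i = O(d\log(d)/\eps^2)$ with constant probability, and a standard repeat-and-keep-the-best amplification (the sum is verifiable through the leverage-score machinery already assumed) boosts this to high probability. A scalar Chernoff bound then shows the actual row count of $\widetilde B$, a sum of independent Bernoullis with means $p_i$, concentrates around $\sum_i p_i$; combined with \cref{lem:matrix-chernoff} and the verification in paragraph~1, this closes out the theorem.

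The main obstacle will be the pointwise inequality: the Sherman--Morrison update is clean only in the non-degenerate case $a_i \in \mathrm{rowspan}(B)$, and the degenerate case requires a separate (if short) pseudoinverse computation to show $\sigma_i(B\cup\{a_i\}) = 1$. Once this inequality is in hand, the coupling, Markov step, and Chernoff concentration are routine bookkeeping.
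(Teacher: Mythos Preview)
The paper does not prove this theorem: it is quoted from \cite[Theorem~4]{cohen2015uniform} and used as a black box in the repeated halving algorithm, so there is no in-paper argument to compare against. Your sketch is essentially the proof from the cited source---the pointwise bound $\min\{1,\sigma_i^B(A)\} \leq 2\,\sigma_i(B\cup\{a_i\})$ via Sherman--Morrison (with the degenerate case handled separately), followed by the coupling that rewrites $\E[\sigma_i(B\cup\{a_i\})]$ as $\E[\sigma_i(B)\mid X_i=1]$ and collapses the sum to $2\,\E[\sum_{i\in B}\sigma_i(B)] \leq 2d$, is exactly the Cohen--Lee--Musco--Musco--Peng--Sidford argument. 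The reduction of the spectral claim to \cref{lem:matrix-chernoff} via $\sigma_i^B(A)\geq\sigma_i(A)$ is also correct.

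One point deserves care: Markov's inequality only yields $\sum_i p_i \in O(d\log(d)/\eps^2)$ with \emph{constant} probability over the draw of $B \subseteq_{1/2} A$. Your ``repeat-and-keep-the-best'' amplification resamples $B$, which turns the claim from one about a single random $B$ into one about an algorithmically selected $B$. That is adequate for how the theorem is actually used in this paper (inside \cref{alg:repeated-halving} and \cref{lemma:single-iteration}, where one is free to resample), and the cited result in \cite{cohen2015uniform} is proved at the expectation level; but strictly speaking it does not establish the high-probability statement as phrased here for a fixed random $B$. If you want the literal statement, you would need a concentration argument for the random variable $\sum_i \min\{1,\sigma_i^B(A)\}$ as a function of the independent Bernoullis $(X_j)_j$, which is not supplied by your sketch.
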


Of course, the uniform subsampling $A' \subseteq_{1/2} A$ only reduces the matrix size by a factor roughly two.
In the following algorithm (based on~\cite[Algorithm 2]{cohen2015uniform}), a chain of $L \in O(\log(n/d))$ uniformly downsampled matrices $A_1,\dots,A_L$ is considered, which does result in a matrix $A_L$ with only $\tO(d)$ rows.
By \cref{thm:uniform} we can then derive leverage scores from (an approximation $B_\ell$ of) $A_\ell$ to construct an approximation $B_{\ell-1}$ of $A_{\ell-1}$.
Repeating this process $L$ times finally yields an approximation $B_0$ of $A_0 = A$.

\begin{algorithm}[H]
\caption{Repeated halving algorithm~\cite{cohen2015uniform}} \label{alg:repeated-halving}
\Input{matrix $A \in \R^{n \times d}$, approximation factor $\eps>0$}

\BlankLine

let $A_L \subseteq_{1/2} \dots \subseteq_{1/2} A_1 \subseteq_{1/2} A$ for $L = \lceil \log_2(n/d) \rceil$; let $B_L = A_L$\;

\For{$\ell = L-1,L-2,\dots,1$}{ 
    let $B_\ell \overset{w,1/2}{\leftarrow} A_\ell$ with $2\sigma^{B_{\ell+1}}_i(A_\ell) \leq w_i \leq 4\sigma^{B_{\ell+1}}_i(A_\ell)$\; 
}

let $B \overset{w,\epsilon}{\leftarrow} A$ with  $2\sigma^{B_1}_i(A) \leq  w_i \leq 4\sigma^{B_1}_i(A)$\;

\Return{$B$}
\end{algorithm}

\begin{lemma}[Repeated halving] \label{lem:repeated-halving}
Consider \cref{alg:repeated-halving}.
With high probability, every $B_\ell$ has $O(d\log(d))$ rows, and the output $B$ has $O(d \log(d)/\epsilon^2)$ rows and satisfies~$B^T B \approx_\epsilon A^T A$.
\end{lemma}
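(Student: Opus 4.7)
The plan is to prove the lemma by (backward) induction on $\ell$, maintaining the joint invariant that $B_{\ell+1}$ has $O(d\log d)$ rows and $B_{\ell+1}^T B_{\ell+1} \approx_{1/2} A_{\ell+1}^T A_{\ell+1}$. The base case $\ell = L$ is immediate: since each uniform subsampling halves the expected size and $L = \lceil \log_2(n/d)\rceil$, a Chernoff bound gives $|A_L| \in O(d)$ with high probability, and $B_L = A_L$ is trivially its own spectral approximation. The final claim about $B$ then follows by one more application of the inductive step with error parameter $\epsilon$ in place of $1/2$.

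For the inductive step, assume the invariant holds at level $\ell+1$. First I would argue that a spectral approximation preserves kernels, so $\ker(B_{\ell+1}) = \ker(A_{\ell+1})$ and the generalized leverage scores $\sigma^{B_{\ell+1}}_i(A_\ell)$ and $\sigma^{A_{\ell+1}}_i(A_\ell)$ are simultaneously finite. On the subspace where both are finite, $B_{\ell+1}^T B_{\ell+1} \approx_{1/2} A_{\ell+1}^T A_{\ell+1}$ gives a two-sided bound
\[
\tfrac{2}{3}\sigma^{A_{\ell+1}}_i(A_\ell)\ \leq\ \sigma^{B_{\ell+1}}_i(A_\ell)\ \leq\ 2\,\sigma^{A_{\ell+1}}_i(A_\ell).
\]
Combined with the bracketing $2\sigma^{B_{\ell+1}}_i(A_\ell) \leq w_i \leq 4\sigma^{B_{\ell+1}}_i(A_\ell)$ from the algorithm, and with the trivial inequality $\sigma^{A_{\ell+1}}_i(A_\ell) \geq \sigma_i(A_\ell)$ (since $A_{\ell+1} \subseteq_{1/2} A_\ell$), this yields $w_i \geq \sigma_i(A_\ell)$ and $w_i \leq O(1)\,\sigma^{A_{\ell+1}}_i(A_\ell)$.

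The first inequality feeds into \cref{lem:matrix-chernoff} (with $\eps = 1/2$) to give $B_\ell^T B_\ell \approx_{1/2} A_\ell^T A_\ell$. The second inequality, combined with \cref{thm:uniform} applied to the honest uniform subsampling $A_{\ell+1} \subseteq_{1/2} A_\ell$, bounds $\sum_i p_i = \sum_i \min\{1, c w_i \log d\} \in O(d\log d)$, so $B_\ell$ has $O(d\log d)$ rows. This closes the induction. The final loop iteration is identical except that we use approximation parameter $\epsilon$, producing $B$ with $O(d\log d/\epsilon^2)$ rows and $B^T B \approx_\epsilon A^T A$.

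The main subtlety, and what I would be most careful about, is the transfer of the sum bound from $\sigma^{A_{\ell+1}}_i(A_\ell)$ to $\sigma^{B_{\ell+1}}_i(A_\ell)$: \cref{thm:uniform} is stated for the uniform subsample $A_{\ell+1}$, not for its spectral approximation $B_{\ell+1}$, and one must invoke the spectral-approximation sandwich to bridge the two (noting that the constant-factor loss is absorbed into the universal constant $c$ in the sampling probabilities). A secondary point is that the invariant must hold simultaneously across all $L = O(\log(n/d))$ levels, which is handled by a union bound since each level's high-probability event fails with probability $n^{-\Omega(1)}$.
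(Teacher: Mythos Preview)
Your proposal is correct and follows essentially the same inductive argument as the paper, which also transfers the spectral approximation $B_{\ell+1}^T B_{\ell+1} \approx_{1/2} A_{\ell+1}^T A_{\ell+1}$ to a two-sided bound on $\sigma^{B_{\ell+1}}_i(A_\ell)$ in terms of $\sigma^{A_{\ell+1}}_i(A_\ell)$ and then invokes \cref{thm:uniform}. One minor slip: from $\tfrac{1}{2}B_{\ell+1}^T B_{\ell+1}\preceq A_{\ell+1}^T A_{\ell+1}\preceq \tfrac{3}{2}B_{\ell+1}^T B_{\ell+1}$ the correct sandwich on the generalized leverage scores is $\tfrac{1}{2}\sigma^{A_{\ell+1}}_i(A_\ell)\le \sigma^{B_{\ell+1}}_i(A_\ell)\le \tfrac{3}{2}\sigma^{A_{\ell+1}}_i(A_\ell)$, not $2/3$ and $2$, but this does not affect any step of your argument (in particular $w_i \ge 2\cdot \tfrac12\,\sigma^{A_{\ell+1}}_i(A_\ell)\ge \sigma_i(A_\ell)$ still holds).
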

\begin{proof}
First we prove that for every $1 \leq \ell < L$, with high probability, $B_\ell$ has $O(d \log(d))$ rows and $B_\ell^T B_\ell \approx_{1/2} A_\ell^T A_\ell$.
We prove this by induction.
The base case $\ell = L$ is clear: $B_L = A_L$ and, by a multiplicative Chernoff bound\footnote{Let $Y = \sum_{i=1}^n X_i$, with $X_i$ the random variable indicating whether the $i$-th row of $A$ is in $A_L$. Then $\mu \coloneqq \E[Y] \in \Theta(d)$. The multiplicative Chernoff bound for random variables in $\{0,1\}$ states that $\Pr[|Y-\mu| \geq \delta \mu] \leq 2 e^{-\delta^2 \mu/3}$ for $0 \leq \delta \leq 1$, and so $Y \in \Theta(d)$ except with probability $e^{-\Omega(d)}$.}, $A_L$ has $\Theta(d)$ rows with high probability.
For the inductive step, assume $B_{\ell+1}^T B_{\ell+1} \approx_{1/2} A_{\ell+1}^T A_{\ell+1}$.
Then $\sigma^{B_{\ell+1}}_i(A_\ell) = (1 \pm 1/2) \sigma^{A_{\ell+1}}_i(A_\ell)$, so that the assumption on $w_i$ implies that $\sigma_i^{A_{\ell+1}}(A_\ell) \leq  w_i \leq 6 \sigma_i^{A_{\ell+1}}(A_\ell)$.
Hence we can apply \cref{thm:uniform} with $\epsilon = 1/2$, which implies that with high probability $B_\ell^T B_\ell \approx_{1/2} A_\ell^T A_\ell$ and $B_\ell$ has $O(d \log d)$ rows.

It remains to prove that step 3.~works correctly, assuming that $B_1 \approx_{1/2} A_1$, but this follows by the same argument.
\end{proof}

\subsection{Quantum repeated halving algorithm}

Here we discuss an efficient quantum algorithm for spectral approximation based on the repeated halving algorithm.
This will prove \cref{thm:quantum-approx}. As we have seen, this requires efficient (quantum) algorithms for three steps: (i) a data structure that allows us to efficiently query approximate generalized leverage scores, (ii) a quantum implementation of the procedure $B \overset{w,\eps}{\leftarrow} A$, and (iii) a way to query the downsampled matrices $A_L \subseteq_{1/2} \cdots \subseteq_{1/2} A_1 \subseteq_{1/2} A$. We discuss these three steps in the next three subsections and then combine them to obtain \cref{thm:quantum-approx}.

\subsubsection{Leverage scores via Johnson-Lindenstrauss} \label{sec:JL}

Consider an approximation factor $\eps>0$. 
A bottleneck in the repeated halving algorithm is computing a multiplicative $(1\pm\eps)$-approximation of the generalized leverage scores $\sigma_i^B(A)$, where $A \in \R^{n \times d}$ has $r_A$-sparse rows $a^T_i$, and $B \in \R^{D \times d}$ has $r_B$-sparse rows with $D \geq d$. 
Naively (but exactly) computing the generalized leverage scores requires time $O(r_A^2)$ per leverage score, after a preprocessing cost of $O(d^\omega + \min\{D d^{\omega-1},D r_B^2\})$. 
Here the preprocessing consists of (i) computing the $d \times d$ matrix $B^T B$, which can be done in time $O(D/d \cdot d^\omega)$ by computing $D/d$ matrix products of $d \times d$ matrices, or in time $O(D r_B^2)$ by computing $D$ outer products of $r_B$-sparse vectors, and (ii) computing the pseudo-inverse $(B^TB)^+$, which takes an additional time  $O(d^{\omega})$.
From this we get the following lemma.

\begin{lemma}[Direct leverage scores] \label{lem:direct-LS}
Assume query access to $A \in \R^{n \times d}$, and assume we are explicitly given $B \in \R^{D \times d}$ with $n,D \gg d$.
There is a classical algorithm that, after a precomputation of time $O(d^\omega + \min\{D d^{\omega-1},D r_B^2\})$, with high probability returns estimates $\tilde \sigma_i$ for any $i \in [n]$ that satisfy $\tilde \sigma_i = (1\pm\eps) \sigma_i^B(A)$ at a cost per estimate of one row query to $A$ and time $O(r_A^2)$.
\end{lemma}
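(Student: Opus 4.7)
The plan is the textbook direct computation: in a one-time preprocessing, form the Gram matrix $B^T B \in \R^{d\times d}$ and its Moore--Penrose pseudo-inverse $M := (B^T B)^+$ explicitly; then for each query $i \in [n]$, fetch the $r_A$-sparse row $a_i^T$ with a single row query and read off the quadratic form $a_i^T M a_i$ using only the $O(r_A^2)$ entries of $M$ indexed by the support of $a_i$.

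For the preprocessing I would compute $B^T B$ by whichever of two schemes is cheaper and take the minimum. The \emph{dense} scheme partitions $B$ into $\lceil D/d \rceil$ blocks of $d$ consecutive rows and multiplies $B^T$ against each block using fast $d\times d$ matrix multiplication, for a total of $O(D d^{\omega-1})$ arithmetic operations. The \emph{sparse} scheme uses the outer-product expansion $B^T B = \sum_{j=1}^D b_j b_j^T$, summing $D$ rank-one updates each supported on at most an $r_B \times r_B$ block, at total cost $O(D r_B^2)$. In either case I then invert/pseudoinvert $B^T B$ in the standard $O(d^\omega)$ time; this is absorbed into the dense bound (since $\omega\geq 2$ and $D\geq d$) and appears as the additive term in the sparse bound, which is exactly the advertised $O(\min\{D d^{\omega-1},\, D r_B^2 + d^\omega\})$.

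For each query $i$, after one row query to $A$ the estimate is
\[
\tilde\sigma_i \;=\; \sum_{j,k\in\mathrm{supp}(a_i)} (a_i)_j \, M_{jk} \, (a_i)_k,
\]
which touches at most $r_A^2$ precomputed entries of $M$ and so runs in $O(r_A^2)$ time. Because this evaluates $a_i^T (B^T B)^+ a_i$ exactly, the resulting estimator reproduces $\sigma_i^B(A)$ exactly and in particular satisfies $\tilde\sigma_i = (1\pm\eps)\sigma_i^B(A)$ for any $\eps>0$ (so the $\eps$ in the statement is vacuous for the direct method).

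The only mild wrinkle is the kernel clause in the definition of $\sigma_i^B(A)$: if $a_i\not\perp\ker(B)$ the true value is $+\infty$, and the expression $a_i^T M a_i$ does not detect this by itself. I expect this to be a non-issue in the applications, where $B$ is a spectral approximation of a full-column-rank matrix and $\ker(B)=\{0\}$ holds throughout; in the fully general statement one can extract an orthonormal basis of $\ker(B)$ from the eigendecomposition during the $O(d^\omega)$ inversion step (subdominant) and test $a_i^T v_\ell = 0$ per query. Apart from verifying this bookkeeping, I do not see any genuine obstacle in the proof — the lemma is really a catalogue of the dense-vs-sparse tradeoff for Gram-matrix formation combined with the trivial sparse quadratic-form evaluation.
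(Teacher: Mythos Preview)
Your proposal is correct and matches the paper's argument essentially verbatim: compute $B^TB$ either via $\lceil D/d\rceil$ fast $d\times d$ products or via $D$ sparse outer products, invert in $O(d^\omega)$, and evaluate the quadratic form in $O(r_A^2)$ per query. Your observation that the $(1\pm\eps)$ and ``with high probability'' qualifiers are vacuous here is accurate, and your discussion of the kernel clause is a careful point the paper simply glosses over for this lemma (it only handles it explicitly in the Johnson--Lindenstrauss variant, \cref{alg:LS-via-JL}).
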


Using standard sketching techniques based on the Johnson-Lindenstrauss lemma we can reduce the cost per query. This was used for instance by Spielman and Srivastava~\cite{spielman2011graph} for graph sparsification to query effective resistances. For leverage scores, the construction we use here was also applied e.g.~in \cite[Lemma~7]{cohen2015uniform}. Concretely, we can use the Johnson-Lindenstrauss lemma to get $\eps$-approximate leverage scores at cost $O(r_A \log(n)/\eps^2)$ per leverage score, after a similar precomputation cost. 
We will use the Johnson-Lindenstrauss lemma in the following form.
\begin{lemma}[{Johnson-Lindenstrauss \cite{johnson1984extensions}, \cite[Lemma 1]{larsen2017optimality}}] \label{lem:JL}
For all integers $n, D \geq 0$ and precision $\eps>0$, there exists a distribution ${\cal D}_{n,D,\eps}$ over matrices in $\R^{O(\log(n)/\eps^2) \times D}$ such that the following holds: for any set of vectors $x_i \in \R^D$, $i \in [n]$, it holds that if $\Pi \sim {\cal D}_{n,D,\eps}$ then with high probability
\[
\| \Pi x_i \|^2
= (1\pm\eps) \| x_i \|^2, \quad \forall i \in [n].
\]
Moreover, the matrix $\Pi \sim {\cal D}_{n,D,\eps}$ can be sampled in time $\tO(D/\eps^2)$.
\end{lemma}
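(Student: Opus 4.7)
The plan is to prove this via the classical subgaussian-projection construction. I would take $\mathcal{D}_{n,D,\eps}$ to be the law of a random matrix $\Pi \in \R^{k \times D}$ with i.i.d.\ entries $\Pi_{ab} \sim N(0, 1/k)$, where $k = C \log(n)/\eps^2$ for a sufficiently large constant $C$ depending on the ``with high probability'' constant. The whole argument then reduces to a single-vector concentration bound combined with a union bound.

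First I would establish the one-vector guarantee. For any fixed nonzero $x \in \R^D$, each row inner product $\Pi_a^T x$ is distributed as $N(0,\|x\|^2/k)$ and the $k$ rows are independent, so $\|\Pi x\|^2 / \|x\|^2$ has a $\chi^2_k/k$ distribution. Applying a Laurent--Massart (or Bernstein-for-subexponentials) tail bound yields
\[
\Pr\bigl[\,\bigl|\|\Pi x\|^2 - \|x\|^2\bigr| > \eps \|x\|^2\,\bigr] \;\leq\; 2\exp(-c' \eps^2 k)
\]
for an absolute constant $c'>0$. Choosing $C$ large enough makes this probability at most $1/n^{c+1}$ for the desired high-probability exponent $c$, and then a union bound over the $n$ vectors $x_1,\dots,x_n$ gives the claim simultaneously for all $i \in [n]$ with probability at least $1 - 1/n^{c}$. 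Note that the distribution $\mathcal{D}_{n,D,\eps}$ itself does not depend on the $x_i$'s, only on $n, D, \eps$, as required.

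For the sampling-time part, observe that $\Pi$ is specified by $kD = O(D\log(n)/\eps^2) = \tO(D/\eps^2)$ independent Gaussian entries. Drawing each entry in $\polylog(n)$ time (via a standard pseudo-Gaussian sampler, or more simply by the Box--Muller transform on $\polylog$-bit uniforms) gives total sampling time $\tO(D/\eps^2)$. If one is uncomfortable with sampling continuous Gaussians, one can instead use Achlioptas's $\pm 1$-entry variant or a sparse Kane--Nelson construction, which satisfies the same concentration inequality with discrete entries; either would yield the same asymptotic sampling cost.

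The main technical obstacle is really just the subexponential concentration of $\chi^2_k$ with the right constants, which is a classical one-line calculation via the moment-generating function of a $\chi^2_1$. The rest (linearity of the Gaussian under linear maps, union bound, counting random bits) is routine. Because the lemma only requires existence of \emph{some} distribution achieving the stated time bound, I would not bother optimizing constants or pursuing the Fast-JL / sparse-JL refinements; the dense Gaussian construction is the cleanest way to make the existence statement go through.
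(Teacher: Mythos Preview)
Your proposal is correct and is the standard proof of Johnson--Lindenstrauss. Note, however, that the paper does not prove this lemma at all: it is stated with citations to \cite{johnson1984extensions} and \cite{larsen2017optimality} and used as a black box, so there is no ``paper's own proof'' to compare against. Your Gaussian-projection argument with $\chi^2$ concentration and a union bound is exactly the classical route and would serve perfectly well if the paper had needed to include a proof.
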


To use this lemma, we start by rewriting the leverage scores as vector norms:
\[
\sigma^{B}_i(A)
= a_i^T (B^T B)^+ a_i
= a_i^T (B^T B)^+ B^T B (B^T B)^+ a_i
= \Big\| B (B^T B)^+ a_i \Big\|^2.
\]
Now, by \cref{lem:JL}, we can sample a random matrix $\Pi \in \R^{O(\log(n)/\eps^2) \times D}$ (independent of $A,B$) that with high probability (in $n$) satisfies
\[
\Big\| \Pi B (B^T B)^+ a_i \Big\|^2
= (1 \pm \eps) \Big\| B (B^T B)^+ a_i \Big\|^2,
\quad \forall i \in [n].
\]
Having already computed $(B^TB)^+$, we can compute the matrix $C = \Pi B (B^T B)^+ \in \R^{O(\log(n)/\eps^2) \times d}$ in additional time $O(\frac{d \log(n)}{\eps^2} \cdot (D+d))$: we first compute the $O(d\log(n)/\eps^2)$ entries of $\Pi B$, each of which is an inner product in dimension $D$, and then we do the same to compute $\Pi B (B^T B)^+$, this time with inner products in dimension $d$. After this we can return estimates $\tilde\sigma_i = \| C a_i \|^2$ for the leverage scores, satisfying $\tilde\sigma_i = (1 \pm \eps) \sigma^{B}_i(A)$, at a cost per estimate of $1$ row-query to $A$ and time $O(r_A \log(n)/\eps^2)$.
This leads to the following algorithm.

\begin{algorithm}[H]
\caption{Generalized leverage scores via Johnson-Lindenstrauss, cf.~\cite[Lemma~7]{cohen2015uniform}} \label{alg:LS-via-JL}
\Input{matrices $A \in \R^{n \times d}$, $B \in \R^{D \times d}$, approximation factor $\eps>0$}

\BlankLine

\textbf{$-$ Preprocessing:}

sample random matrix $\Pi \sim {\cal D}_{n,D,\eps}$ and compute $C = \Pi B (B^T B)^+$\;
sample random matrix $\Pi' \sim {\cal D}_{n,d,\eps}$ and compute $C' = \Pi' (I - (B^T B) (B^T B)^+)$\;

\textbf{$-$ Leverage score computation (index $i \in [n]$):}

compute $C' a_i$; if $C' a_i \neq 0$ then return $\tilde\sigma_i = \infty$\;

otherwise, compute and return $\tilde\sigma_i = \| C a_i \|^2$\;

\end{algorithm}

Based on this algorithm we can prove the following lemma.

\begin{lemma}[Johnson-Lindenstrauss leverage scores] \label{lem:approx-LS}
Assume query access to $A \in \R^{n \times d}$, and assume we are explicitly given an $r_B$-sparse matrix $B \in \R^{D \times d}$ with $n,D \gg d$.
\cref{alg:LS-via-JL}, after a preprocessing of time $O(d^\omega + \min\{D d^{\omega-1}, D r_B^2\} + \frac{d D\log(n)}{\eps^2})$, returns with high probability estimates $\tilde \sigma_i$ for any $i \in [n]$ that satisfy $\tilde \sigma_i \approx_\eps \sigma_i^B(A)$ at a cost per estimate of one row query to $A$ and time $O(r_A \log(n)/\eps^2)$.
\end{lemma}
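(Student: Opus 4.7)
The plan is to verify correctness and runtime of \cref{alg:LS-via-JL} separately. The two sketches $C$ and $C'$ handle the two cases in the definition of $\sigma_i^B(A)$: $C$ gives a norm-preserving sketch of the vector whose squared length equals the finite leverage score, while $C'$ is used to detect whether $a_i$ has a nontrivial component in $\ker(B)$, in which case the leverage score is $\infty$ by definition.

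\textbf{Correctness.} Following the discussion preceding the algorithm, I would first rewrite $\sigma_i^B(A) = \|B(B^T B)^+ a_i\|^2$ whenever $a_i \perp \ker(B)$. Next, observe that $M \coloneqq I - (B^TB)(B^TB)^+$ is the orthogonal projector onto $\ker(B^TB) = \ker(B)$ (since $B^TB$ is symmetric PSD with range $\ker(B)^\perp$), so $M a_i = 0$ iff $a_i \perp \ker(B)$. Now I apply \cref{lem:JL} twice: once with $\Pi \sim {\cal D}_{n,D,\eps}$ to the family $\{B(B^TB)^+ a_i\}_{i\in[n]} \subset \R^D$, yielding with high probability $\|Ca_i\|^2 = (1\pm\eps)\,\sigma_i^B(A)$ for every $i$ with $a_i \perp \ker(B)$; and once with $\Pi' \sim {\cal D}_{n,d,\eps}$ to $\{M a_i\}_{i\in[n]} \subset \R^d$, yielding $\|C'a_i\|^2 = (1\pm\eps)\|M a_i\|^2$ for all $i$. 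In the $\infty$-case ($a_i \not\perp \ker B$) we have $M a_i \neq 0$, so JL guarantees $C'a_i \neq 0$ and the algorithm correctly outputs $\infty$; in the finite case $M a_i = 0$, so $C'a_i = 0$ \emph{exactly} by linearity of $\Pi'$, and we return the valid $\eps$-approximation $\|Ca_i\|^2$. A union bound over the two JL failure events gives the overall high-probability guarantee.

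\textbf{Runtime.} The matrix $B^TB$ can be formed either block-wise as $\Theta(D/d)$ products of $d\times d$ matrices in time $O(Dd^{\omega-1})$, or via $D$ sparse outer products in time $O(Dr_B^2)$; the pseudoinverse then costs $O(d^\omega)$, which accounts for the $\min\{Dd^{\omega-1},\, Dr_B^2 + d^\omega\}$ term. Sampling $\Pi$ and computing $\Pi B$ costs $O(Dd\log(n)/\eps^2)$ (an $O(\log(n)/\eps^2)\times D$ matrix times a $D\times d$ matrix), and right-multiplying by $(B^TB)^+$ takes an additional $O(d^2 \log(n)/\eps^2)$; the analogous sketch $C'$ is no more expensive. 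Per query, computing $C'a_i$ and $Ca_i$ each costs $O(r_A \log(n)/\eps^2)$, since $a_i$ is $r_A$-sparse and $C, C'$ have $O(\log(n)/\eps^2)$ rows; the squared norm then costs $O(\log(n)/\eps^2)$, and one row query to $A$ retrieves $a_i$.

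\textbf{Main obstacle.} The only subtle point is handling the $\infty$-case reliably: JL gives only a multiplicative guarantee, so one has to rely on the \emph{exact} implication $M a_i = 0 \Rightarrow C'a_i = 0$ (from linearity of $\Pi'$) rather than a stochastic estimate, for otherwise a genuinely finite leverage score could be erroneously reported as infinite. In the reverse direction, multiplicative JL already certifies $C'a_i \neq 0$ whenever $M a_i \neq 0$. The remainder of the argument is bookkeeping of matrix-multiplication costs.
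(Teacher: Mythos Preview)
Your proof is correct and follows essentially the same approach as the paper's own proof: both argue correctness via the JL lemma applied to the two families $\{B(B^TB)^+ a_i\}$ and $\{Ma_i\}$, and bound the preprocessing and per-query costs identically. Your treatment of the $\infty$-case is in fact slightly more careful than the paper's, which glosses over the distinction between the JL-based implication ($Ma_i \neq 0 \Rightarrow C'a_i \neq 0$) and the linearity-based one ($Ma_i = 0 \Rightarrow C'a_i = 0$).
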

\begin{proof}
First we show correctness of the algorithm.
By \cref{lem:JL} we know that with high probability both $\| C a_i \| = (1 \pm \eps) \| B (B^T B)^+ a_i \|$ and $\| C' a_i \| = (1 \pm \eps) \| (I - (B^T B)(B^T B)^+) a_i \|$ for all $i \in [n]$.
From $\| C' a_i \|$ we can now check whether $a_i \perp \ker(B)$ or not: indeed $I - (B^T B) (B^T B)^+$ corresponds to the projector onto $\ker(B)$, and so $a_i \perp \ker(B)$ iff $\| C' a_i \| = 0$.
Hence, if $\| C' a_i \| = 0$ we can set $\tilde\sigma_i = \infty$ and otherwise we can set $\tilde\sigma_i = \| C a_i \|^2$, which will be correct with high probability.

Now we prove the complexity bounds.
First consider the precomputation phase.
As discussed earlier, given $B$ we can compute $(B^T B)^+$ and $(B^T B)(B^T B)^+$ in time $O(\min\{D d^{\omega-1}, D r_B^2 + d^\omega\})$, and $C = \Pi B (B^T B)^+$ and $C' = \Pi' (I - (B^T B) (B^T B)^+)$ in time $O(d D \log(n)/\eps^2)$.

For the approximation of a single leverage score $\sigma^B_i(A)$, we query the entire row $a_i$ of $A$, and compute $C' a_i$ and $C a_i$ in time $O(r_A \log(n)/\eps^2)$.
\end{proof}

This improves on direct computation as in \cref{lem:direct-LS} when $\eps$ is not too small.

\subsubsection{Grover search}

The previous section discussed how we can efficiently access the sampling weights, which are based on leverage scores.
Here we discuss how, given access to an appropriate weight vector $w \in (\R_{>0} \cup\{\infty\})^n$, we can sample a subsampled matrix $B \overset{w,\epsilon}{\leftarrow} A$.
For this we use a quantum sampling routine based on Grover search, as stated in the following lemma. 
For some intuition behind this key subroutine, note that if all $q_i$'s where exactly 0 or 1, then this problem is precisely that of finding $\sum_i q_i$ many marked elements in a list of $n$ elements, whose complexity is well-known to be $\Theta(\sqrt{n \sum_i q_i})$ using Grover search.
The extra work needed for the current setting amounts to first assuming query access to a string of random bits, based on which we mark for every $i$ whether we include $i \in S$, and then showing that query access to the random string can be simulated in sublinear time (see the later \cref{lem:random-oracle}).

\begin{lemma}[{\cite[Claim 3]{apers2022quantum}}] \label{lem:q-sampling}
Assume we have query access to a list $q = (q_i)_{i \in [n]} \in [0,1]^n$. 
There is a quantum algorithm that samples a subset $S \subseteq [n]$, such that $i \in S$ independently with probability $q_i$.
The algorithm uses $\tO(\sqrt{n \sum_i q_i})$ time and queries to $q$.
\end{lemma}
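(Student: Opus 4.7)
The plan is to reduce the sampling task to the standard ``find-all-marked-elements'' template and apply Grover search iteratively, while fixing a short random seed so that the implicit coin flips stay consistent across all invocations. First I would fix a classical seed $s$ of $\polylog(n)$ random bits and derive, via a $\polylog(n)$-wise independent hash family, a deterministic value $r_i = h_s(i) \in [0,1]$ for each $i$, quantized to precision $1/\poly(n)$. The marking predicate $M(i) := [r_i \leq p_i]$ then defines a random set $S := \{i : M(i)=1\}$ in which each $i$ lies with probability $p_i \pm 1/\poly(n)$. Limited independence suffices for a Chernoff-type moment bound giving $K := |S| = O(\mu + \log n)$ with high probability, where $\mu := \sum_i p_i$. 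The oracle for $M$ is implemented coherently with one query to $p$ plus $\polylog(n)$ additional gates to evaluate $h_s$ and compare.

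Next, given coherent access to $M$, I would extract the elements of $S$ one at a time using Boyer-Brassard-Hoyer-Tapp exponential search, which performs amplitude amplification without knowing the number of marked items. After $j-1$ indices have been found and stored classically, the oracle is updated to exclude them (a short comparison against the list of found indices inside each oracle call), and a new Grover round returns a fresh marked index in $\tO(\sqrt{n/(K-j+1)})$ queries. The procedure halts when, at a sufficiently large scale relative to the remaining search space, Grover fails to return a new element, which certifies that $S$ has been exhausted up to failure probability $1/\poly(n)$ after standard repetition. Summing the telescoping costs, $\sum_{j=1}^K \sqrt{n/(K-j+1)} = O(\sqrt{nK})$, and combining with $K = \tO(\mu)$ yields the claimed $\tO(\sqrt{n\mu})$ bound in both query and time complexity.

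The principal subtlety will be the \emph{consistency requirement}: every Grover invocation must agree on which indices are marked, otherwise the returned set is not distributed as the desired product distribution. Storing only the short seed $s$ classically and re-evaluating $h_s(i)$ inside each oracle call resolves this without paying the $\Omega(n)$ cost of precomputing all thresholds. The use of a $\polylog(n)$-wise independent hash family (rather than true independence) together with the $1/\poly(n)$ discretization of the $r_i$ introduces only $1/\poly(n)$ statistical deviation in the sampling distribution, which is absorbed into the overall high-probability guarantee. Finally, the failure probabilities of the $\tO(\mu)$ Grover rounds and of the termination test are each boosted to $1/\poly(n)$ by standard repetition, contributing only polylogarithmic overhead hidden in the $\tO$.
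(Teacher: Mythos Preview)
The paper does not prove this lemma itself; it is cited from \cite{apers2022quantum}. Your overall strategy---mark $i$ whenever a random threshold $r_i$ falls below $p_i$, then extract all marked indices via iterated BBHT search at total cost $\tO(\sqrt{nK})$ with $K=\tO(\mu)$ w.h.p.---is exactly the approach used there, and that part of the argument is fine.

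The gap is in your handling of the randomness. The lemma requires the events $\{i\in S\}$ to be \emph{fully} independent. A $\polylog(n)$-wise independent hash family gives only $\polylog(n)$-wise independence of those events, and your assertion that this causes merely $1/\poly(n)$ statistical deviation in the distribution of $S$ is false in general: a $k$-wise independent distribution over $\{0,1\}^n$ can be at total-variation distance close to $1$ from the product distribution even when $k=\polylog(n)$ (its support can have size $n^{O(k)}\ll 2^n$). Limited independence does suffice for your Chernoff bound on $|S|$, but not for the joint law of $S$. The correct fix, which is precisely what the present paper restates as \cref{lem:random-oracle}, is to first analyze the algorithm assuming query access to a truly random string (so the $r_i$ are i.i.d.\ and the returned $S$ has exactly the product distribution), and then invoke Zhandry's theorem that a $2q$-wise independent function is \emph{perfectly} indistinguishable from a random oracle for any $q$-query quantum algorithm. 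Since here $q=\tO(\sqrt{n\mu})$, one needs $\tO(\sqrt{n\mu})$-wise independence and a QRAM of $\tO(q)$ bits---not a $\polylog(n)$-bit seed---and this transfers the exact sampling guarantee rather than an approximate one.
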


From this, we can prove the following lemma on a quantum algorithm for implementing a single iteration of the repeated halving algorithm.
By the repeated halving lemma (\cref{lem:repeated-halving}), with high probability, the returned matrix $B$ will have $O(d \log(d)/\eps^2)$ rows and satisfy $B \approx_\epsilon A$.

\begin{lemma}[Quantum halving algorithm] \label{lemma:single-iteration}
Assume query access to $A \in \R^{n \times d}$ with row sparsity $r$.
Consider a random submatrix $A' \subseteq_{1/2} A$ and assume that we are given $B' \in \R^{O(d \log d) \times d}$, with row sparsity $r$, such that $B'^T B' \approx_{1/2} A'^T A'$.
For $\epsilon \in (0,1]$, there is a quantum algorithm that with high probability returns a matrix $B \overset{w,\epsilon}{\leftarrow} A$ for $w_i = \min\{1,\tilde w_i\}$ with $2\sigma^{B'}_i(A) \leq \tilde w_i \leq 4\sigma^{B'}_i(A)$.
In expectation, the algorithm takes $\tO(\sqrt{nd}/\epsilon)$ row queries to $A$ and $\tO(r \sqrt{nd}/\epsilon + d^\omega)$ time.
\end{lemma}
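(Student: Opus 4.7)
The plan is to implement the weighted subsampling $B \overset{w,\eps}{\leftarrow} A$ quantumly by combining three ingredients: the Johnson--Lindenstrauss-based generalized leverage score estimator from \cref{lem:approx-LS}, which provides quantum query access to constant-factor approximations of $\sigma_i^{B'}(A)$; the bound from \cref{thm:uniform} on the total weight of a uniform downsample, which controls the expected size of the sampled subset; and the quantum subset sampler of \cref{lem:q-sampling}, which performs the actual independent sampling using $\tO(\sqrt{n \sum_i p_i})$ queries to the probability list.

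First I would invoke \cref{lem:approx-LS} with inputs $A$, $B'$, and accuracy parameter $\eps' = 1/3$. Since $B'$ has $D = O(d\log d)$ rows of sparsity $r$, the preprocessing cost is $O(\min\{d^\omega \log d,\, d r^2 \log d + d^\omega\} + d^2 \log(d)\log(n)) = \tO(d^\omega)$, after which each estimate $\tilde \sigma_i = (1 \pm 1/3)\sigma_i^{B'}(A)$ can be produced with $1$ row query to $A$ and $\tO(r)$ time. Setting $\tilde w_i \coloneqq 3\tilde\sigma_i$ (with the convention $3 \cdot \infty = \infty$) immediately gives $2\sigma_i^{B'}(A) \leq \tilde w_i \leq 4\sigma_i^{B'}(A)$, as required by the statement.

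The next step is to show $\sum_i p_i = \tO(d/\eps^2)$ with high probability over the randomness defining $A'$, so that the quantum subset sampler is efficient. Since $B'^T B' \approx_{1/2} A'^T A'$, the two matrices share the same column span (and kernel), and one gets the pointwise comparison $\sigma_i^{B'}(A) \leq 2\sigma_i^{A'}(A)$ for every $i \in [n]$. Combined with $\tilde w_i \leq 4\sigma_i^{B'}(A)$, this yields $\tilde w_i \leq 8\sigma_i^{A'}(A)$, so $p_i \leq \min\{1, 8c\sigma_i^{A'}(A)\log(d)/\eps^2\}$. Since $A' \subseteq_{1/2} A$, \cref{thm:uniform} (applied with a rescaled constant) then gives $\sum_i \min\{1, 8c\sigma_i^{A'}(A)\log(d)/\eps^2\} = O(d\log(d)/\eps^2)$ with high probability, which is the desired bound on $\sum_i p_i$. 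With this in hand I would apply \cref{lem:q-sampling} to the coherently specified list $p = (p_i)_{i\in[n]}$ to draw $S \subseteq [n]$ with $i \in S$ independently with probability $p_i$; this uses $\tO(\sqrt{n \sum_i p_i}) = \tO(\sqrt{nd}/\eps)$ queries to $p$ in expectation, each implemented by one row query to $A$ plus $\tO(r)$ time to evaluate the JL formula for $\tilde\sigma_i$. We finally read the $\tO(d/\eps^2)$ sampled rows $a_i$ with $i \in S$, rescale each by $1/\sqrt{p_i}$, and output $B$; this readout is dominated by the Grover cost and so the overall budget is $\tO(\sqrt{nd}/\eps)$ expected row queries and $\tO(r\sqrt{nd}/\eps + d^\omega)$ time.

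The main obstacle is the bound on $\sum_i p_i$, which requires carefully chaining together three approximation layers — the JL estimate $\tilde w_i \approx \sigma_i^{B'}(A)$, the spectral approximation $\sigma_i^{B'}(A) \approx \sigma_i^{A'}(A)$, and the concentration statement of \cref{thm:uniform} — while properly handling the infinite generalized leverage scores that arise when $a_i$ leaves the column span of $B'$ or $A'$ (in which case $p_i = 1$ automatically and the chain of inequalities is preserved under the convention $\infty \leq \infty$). Once this chain is set up, verifying that the sampled matrix $B$ is distributed according to $B \overset{w,\eps}{\leftarrow} A$ and assembling the complexities is routine.
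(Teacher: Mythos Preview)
Your proposal is correct and follows essentially the same approach as the paper: invoke \cref{lem:approx-LS} on $A$ and $B'$ to get constant-factor estimates $\tilde\sigma_i$ of $\sigma_i^{B'}(A)$, set $\tilde w_i$ to a suitable multiple of $\tilde\sigma_i$, bound $\sum_i p_i$ via \cref{thm:uniform}, and then apply the quantum subset sampler of \cref{lem:q-sampling}. The only differences are cosmetic (you use accuracy $1/3$ and multiplier $3$ where the paper uses $1/2$ and $2$) and expository (you spell out the chain $\tilde w_i \leq 4\sigma_i^{B'}(A) \leq 8\sigma_i^{A'}(A)$ and the handling of infinite scores, which the paper leaves implicit in its one-line appeal to \cref{thm:uniform}).
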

\begin{proof}
First we invoke \cref{lem:approx-LS} on $A$ and $B'$, with $D = O(d \log(d))$.
It states that, after a precomputation time of $\tO(d^\omega)$, we can query approximate leverage scores $\tilde\sigma_i$ satisfying $\tilde\sigma_i \approx_{1/2} \sigma_i^{B'}(A)$ for $i \in [n]$, by querying one row of $A$ and using time $O(r_A \log(n))$ per leverage score.
Now set $w_i = \min\{1,\tilde w_i\}$ with $\tilde w_i = 2 \tilde\sigma_i$, which satisfies $2\sigma^{B'}_i(A) \leq \tilde w_i \leq 4\sigma^{B'}_i(A)$.
We define the list $q = (q_i)_{i \in [n]}$ by setting $q_i = \min\{1,c w_i \log(d)/\epsilon^2\}$ (as in \cref{sec:rep-halving}), and so we can query $q_i$ at the same cost of querying~$\tilde\sigma_i$.

To sample $B \overset{w,\epsilon}{\leftarrow} A$, we apply the algorithm from \cref{lem:q-sampling} to the list $q$.
\cref{thm:uniform} implies that $\sum_i q_i \in O(d \log(d)/\eps^2)$, and so the complexity amounts to $\tO(\sqrt{n d}/\eps)$ queries to $q$.
Combined with the previous paragraph, this gives a total complexity of $\tO(r_A \sqrt{nd}/\eps + d^\omega)$ time and $\tO(\sqrt{nd}/\eps)$ row queries to $A$.
\end{proof}

\subsubsection{Random oracle access} \label{sec:random-oracle}

To obtain our final quantum algorithm we would like to repeatedly apply \cref{lemma:single-iteration} in the repeated halving algorithm.
This is plug-and-play, up to one detail, which is how to query the downsampled matrices $A_L \subseteq_{1/2} \dots \subseteq_{1/2} A_1 \subseteq_{1/2} A$.
Notice that these can be defined implicitly by drawing $L$ bitstrings $\{X^{(\ell)} \in \{0,1\}^n\}_{\ell \in [L]}$ with $X^{(\ell)}_i = 1$ independently with probability $1/2$.
The matrix $A_k$ then contains the $i$-th row of $A$ if and only if $\Pi_{\ell \leq k} X^{(\ell)}_i = 1$.
Hence, if we could efficiently query such a ``random oracle'' then we're done.

Now, if we only care about query complexity, then we can sample these bitstrings explicitly and use QRAM access to them.
However, if we want to have a sublinear runtime as well, then explicitly drawing these bitstrings would be too expensive (scaling as $\Omega(n)$).
We remedy this by invoking the following lemma, stating that we can efficiently \emph{simulate} access to a random oracle. 
\begin{lemma}[{Quantum random oracles~\cite[Claim 1]{apers2022quantum}}] \label{lem:random-oracle}
Consider any quantum algorithm with runtime $T$ that makes queries to a uniformly random string.
We can simulate this algorithm with a quantum algorithm with runtime $\tO(T)$ without random string, using an additional QRAM of $\tO(T)$ bits.
\end{lemma}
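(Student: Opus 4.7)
First, I would invoke a theorem of Zhandry to replace the uniformly random string $X \in \{0,1\}^N$ with the evaluations of a $2q$-wise independent hash function $h : [N] \to \{0,1\}$: the probability of any measurement outcome after $q$ quantum queries is a polynomial of degree at most $2q$ in the oracle's entries, so any $2q$-wise independent distribution over oracles reproduces the same output distribution as the uniform one. This reduces the problem to efficiently simulating access to some $2q$-wise independent family, rather than sampling and storing all $N$ random bits (which is infeasible when $N$ is super-polynomial).

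Next, I would precompute (classically, in time $\tO(q)$) a seed for such a family and store it in QRAM. The textbook construction uses a uniformly random polynomial of degree at most $2q-1$ over a finite field of size $\Omega(N)$, which yields a $2q$-wise independent hash function into $\{0,1\}$ after a suitable projection; the seed occupies $\tO(q)$ bits. Each original query to $X_i$ is then replaced by coherently reading the seed from QRAM and evaluating the hash on $i$ via standard reversible arithmetic; this can be executed in superposition because the seed is stored in QRAM and the arithmetic is classical. With a sufficiently efficient $k$-wise independent family --- for instance constructions in the spirit of Siegel or Thorup achieving $\polylog(k)$ evaluation time from a seed of size $\tO(k)$ --- each of the at most $q$ queries costs $\polylog$ time, yielding total overhead $\tO(q)$ as claimed.

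The main obstacle is ensuring that the hash family admits $\polylog$-time evaluation: a naive Horner's-rule evaluation on a degree-$2q$ polynomial would cost $\Theta(q)$ per query and blow the total runtime to $\Theta(q^2)$, just outside our budget. Invoking a fast $k$-wise independent construction as above resolves this. A conceptually cleaner alternative is Zhandry's compressed-oracle framework, which lazily maintains a QRAM-based database that after $q$ total queries holds at most $q$ sampled entries with $\polylog$-time coherent lookup-or-sample, delivering the same $\tO(q)$ time and $\tO(q)$ QRAM bounds without appeal to specialized hash families.
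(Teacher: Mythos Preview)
The paper does not give its own proof of this lemma; it is quoted verbatim as \cite[Claim~1]{apers2022quantum} (Apers--de~Wolf). Your proposal is essentially the argument of that cited source: first invoke Zhandry's theorem that a $2q$-wise independent function is perfectly indistinguishable from a uniformly random one against any $q$-query quantum algorithm, then realize such a function from a short seed stored in QRAM and evaluate it coherently on each query. You also correctly isolate the one nontrivial point --- that naive Horner evaluation of a degree-$2q$ polynomial costs $\Theta(q)$ per query and hence $\Theta(q^2)$ overall --- and resolve it the same way the cited paper does, by appealing to a $k$-wise independent family with seed length $\tO(k)$ and $\polylog$ evaluation time. (Apers--de~Wolf specifically invoke the construction of Christiani, Pagh and Thorup; your ``in the spirit of Siegel or Thorup'' points in the right direction but is a little vague about which construction actually achieves these parameters.) Your compressed-oracle alternative is also sound and would yield the same $\tO(q)$ time and QRAM bounds, though it is not the route taken in the cited reference.
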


For some intuition, the lemma builds on two facts.
First is that a quantum algorithm making $T$ queries to a string cannot distinguish whether this string is uniformly random, or only $2T$-wise independent~\cite{polynomialmethod}.
Second is that there are fast (classical) algorithms for constructing in time $\tO(T)$ a data structure that allows querying a $2T$-wise independent string, requiring only $\tO(1)$ time per query~\cite{christiani2015independence}.

\subsubsection{Quantum repeated halving algorithm} \label{sec:q-rep-halving}

We state our final algorithm below.

\begin{algorithm}[H]
\caption{Quantum repeated halving algorithm:} \label{alg:quantum-halving}
\Input{query access to $A \in \R^{n \times d}$, approximation factor $\eps>0$}

\BlankLine

implicitly construct the chain $A_L \subseteq_{1/2} \dots \subseteq_{1/2} A_1 \subseteq_{1/2} A$ for $L = O(\log(n/d))$\;

use Grover search\footnotemark{} to explicitly learn $A_L$; let $B_L = A_L$\;

\For{$\ell = L-1,L-2,\dots,1$}{ 
    use \cref{lemma:single-iteration} on $A_{\ell+1} \subseteq_{1/2} A_\ell$ and $B_{\ell+1}$ to explicitly construct $B_\ell \overset{w,1/2}{\leftarrow} A_\ell$ with $w_i = \min\{1,\tilde w_i\}$ and $2\sigma^{B_{\ell+1}}_i(A_\ell) \leq \tilde w_i \leq 4\sigma^{B_{\ell+1}}_i(A_\ell)$\;
}

use \cref{lemma:single-iteration} on $A_1 \subseteq_{1/2} A$ and $B_1$ to explicitly construct $B \overset{w,\epsilon}{\leftarrow} A$ with $w_i = \min\{1,\tilde w_i\}$ and $2\sigma^{B_1}_i(A) \leq \tilde w_i \leq 4\sigma^{B_1}_i(A)$\;

\Return{$B$}
\end{algorithm}

\footnotetext{More specifically, apply \cref{lem:q-sampling} with $q_i = \Pi_{\ell \leq L} X^{(\ell)}_i$, where the $X^{(\ell)}_i$'s were defined in \cref{sec:random-oracle}.}

Based on this algorithm, we can prove the following theorem, which implies our main theorem on quantum spectral approximation (\cref{thm:quantum-approx}).

\begin{theorem}[Quantum repeated halving algorithm] \label{thm:q-rep-halving}
With high probability, the quantum repeated halving algorithm (\cref{alg:quantum-halving}) returns a matrix $B$ with $O(d \log(d)/\epsilon^2)$ rows that satisfies $B^T B \approx_\epsilon A^T A$.
The algorithm makes $\tO(\sqrt{nd}/\epsilon)$ row queries to $A$ and takes time $\tO(d^\omega + r \sqrt{nd}/\epsilon)$.
\end{theorem}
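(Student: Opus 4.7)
The plan is to argue correctness and complexity separately, and in both cases reduce the analysis of \cref{alg:quantum-halving} to invocations of the already-established building blocks: the classical repeated halving lemma (\cref{lem:repeated-halving}), the single-iteration quantum halving routine (\cref{lemma:single-iteration}), the Grover-based sampler (\cref{lem:q-sampling}), and the random oracle simulation (\cref{lem:random-oracle}).

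For \emph{correctness}, first note that \cref{alg:quantum-halving} produces exactly the same distribution over output matrices $B$ as the classical \cref{alg:repeated-halving}: at every level $\ell$, \cref{lemma:single-iteration} returns a matrix $B_\ell \overset{w,1/2}{\leftarrow} A_\ell$ with the weights $w_i$ in the range $[2\sigma_i^{B_{\ell+1}}(A_\ell), 4\sigma_i^{B_{\ell+1}}(A_\ell)]$ that the classical algorithm demands, and the final step likewise samples $B \overset{w,\eps}{\leftarrow} A$ with weights derived from $B_1$. Hence \cref{lem:repeated-halving} directly gives that with high probability $B$ has $O(d\log(d)/\eps^2)$ rows and $B^T B \approx_\eps A^T A$. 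I would also note that the downsampled chain $A_L \subseteq_{1/2} \cdots \subseteq_{1/2} A$ is generated from a random string; by \cref{lem:random-oracle} we may simulate queries to this string in polylogarithmic overhead without ever writing it down, so we genuinely have query access to the $A_\ell$'s as required by the subroutines.

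For \emph{complexity}, I would break the work into three parts and sum. First, the initial Grover step to learn $A_L$ explicitly: viewing $A_L$ as the set of $i \in [n]$ with $\prod_{\ell \le L} X^{(\ell)}_i = 1$, we have $\E|A_L| = n/2^L = \Theta(d)$, and by a standard Chernoff bound $|A_L| = \Theta(d)$ with high probability. Applying \cref{lem:q-sampling} to this product indicator list uses $\tO(\sqrt{n \cdot d})$ row queries and time (the indicator itself is queried by $L = O(\log(n/d))$ reads of the simulated random oracle). Second, for every intermediate level $\ell = L-1, \dots, 1$, we apply \cref{lemma:single-iteration} to $A_{\ell+1} \subseteq_{1/2} A_\ell$ with error parameter $\tfrac12$. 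Since $|A_\ell| = \Theta(n/2^\ell)$ with high probability and $B_{\ell+1}$ has $O(d\log d)$ rows, each such call costs $\tO(\sqrt{|A_\ell| \cdot d})$ row queries and $\tO(r\sqrt{|A_\ell|\cdot d} + d^\omega)$ time. Third, the final call to \cref{lemma:single-iteration} with error $\eps$ costs $\tO(\sqrt{nd}/\eps)$ row queries and $\tO(r\sqrt{nd}/\eps + d^\omega)$ time.

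Summing over $\ell$ gives a geometric series $\sum_{\ell=1}^{L-1} \sqrt{n/2^\ell \cdot d}$ that is dominated by its largest term $\sqrt{nd}$, and the $L = O(\log(n/d))$ factor on the $d^\omega$ preprocessing cost is absorbed in $\tO(\cdot)$. Adding in the final, error-$\eps$ step then yields the claimed $\tO(\sqrt{nd}/\eps)$ row queries and $\tO(r\sqrt{nd}/\eps + d^\omega)$ time. The main technical point to be careful about is that \cref{lemma:single-iteration} assumes genuine query access to its input matrix, whereas the $A_\ell$'s are only defined implicitly via the random oracle; this is exactly what \cref{lem:random-oracle} lets us handle at only polylogarithmic overhead, so everything goes through. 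I would also briefly verify the union bound: the algorithm makes $O(\log(n/d))$ randomized calls, each succeeding with high probability, so the overall success probability remains high.
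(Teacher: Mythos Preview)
Your proposal is correct and follows essentially the same approach as the paper: correctness is inherited from \cref{lem:repeated-halving}, the random-oracle lemma handles implicit access to the $A_\ell$'s, Grover search learns $A_L$, and repeated calls to \cref{lemma:single-iteration} furnish the complexity bound. Your explicit geometric-series summation over the levels is more detailed than the paper's one-line claim, but the underlying argument is identical.
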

\begin{proof}
Correctness of the algorithm directly follows from correctness of the original repeated halving algorithm (\cref{lem:repeated-halving}). It remains to prove the query and runtime guarantees.
By the discussion in \cref{sec:random-oracle}, we can assume efficient access to a random oracle to query the $A_\ell$ matrices in Step 1.
We use \cref{lem:q-sampling} to implement Step 2: we learn the $\Theta(d)$ row indices of $A_L$ among $A$'s $n$ rows in time $\tO(\sqrt{n d})$, and we explicitly learn the corresponding rows by making $\Theta(d)$ row queries to $A$.
For Step 3.~and 4., we repeatedly apply \cref{lemma:single-iteration}, requiring a total time of $\tO(d^\omega + r_A \sqrt{nd}/\eps)$ and $\tO(\sqrt{nd}/\eps)$ row queries to $A$.
\end{proof}

Based on this we can also prove the theorem about approximating leverage scores, which we restate below for convenience.
\quantumls*
\begin{proof}
The quantum algorithm first computes $B \in \R^{d \log(d)/\eps^2 \times d}$ such that $B^T B \approx_{\eps/2} A^T A$, with high probability.
By \cref{thm:q-rep-halving} this takes $\tO(\sqrt{nd}/\eps)$ row queries to $A$ and $\tO(d^\omega + r_A \sqrt{nd}/\eps)$ time.
Now note that $B^T B \approx_{\eps/2} A^T A$ implies that $\sigma^B_i(A) \approx_{\eps/2} \sigma_i(A)$, and so we can invoke either \cref{lem:direct-LS} or \cref{lem:approx-LS} to approximate $\sigma_i(A)$, with $D \in \tO(d/\eps^2)$ and $r_B = r_A = r$.
From \cref{lem:direct-LS} we get an additional preprocessing cost of $\tO(d^\omega + \min\{d^\omega,d r^2\}/\eps^2)$ and a cost per estimate of 1 row query to $A$ and time $O(r^2)$.
From \cref{lem:approx-LS} we get an additional preprocessing cost of $\tO(d^\omega + \min\{d^\omega,d r^2\}/\eps^2 + d^2/\eps^4)$ and a cost per estimate of 1 row query to $A$ and time~$\tO(r/\eps^2)$.
\end{proof}

\section{Quantum algorithm for \texorpdfstring{$\ell_p$}{ell-p}-Lewis weights} \label{sec:Lewis}

In this section we describe our quantum algorithm for approximating Lewis weights.
It combines a recent classical algorithm by Apers, Gribling and Sidford~\cite{apers2024lewis} with the quantum algorithms for spectral approximation and leverage score computation from the previous section. The classical algorithm from \cite{apers2024lewis} is a variation on an earlier algorithm by Lee \cite{YinTatThesis}. To motivate the algorithms and explain their differences, 
let us first recall and state some definitions.
For finite $p>0$ the $\ell_p$-Lewis weights $w^{(p)}(A) \in \R^n_{>0}$ of a matrix $A \in \R^{n \times d}$ are defined as the unique vector $w \in \R^n_{>0}$ such that 
\begin{equation}
    w_i 
= \sigma_i(W^{\frac{1}{2}-\frac{1}{p}}A) = w_i^{1-\frac{2}{p}} a_i^T(A^T W^{1-\frac{2}{p}} A)^{+} a_i, \quad \text{for all } i \in [n], \label{eq:FP}
\end{equation}
where $W = \diag(w)$.
The $i$-th Lewis weight $w_i$ is hence defined implicitly as the $i$-th leverage score of the matrix $W^{\frac12-\frac1p}A$. 
We are interested in $\eps$-multiplicative estimates $v \approx_\eps w$, which is a strong notion of approximation. Weaker notions are based on satisfying the fixed-point equation~\eqref{eq:FP} approximately; the weaker notions will lead to the stronger notion, after postprocessing and a loss of accuracy, as we explain next. Here we distinguish one-sided and two-sided approximations. We say that a vector $w$ is a one-sided $\eps$-approximation of the Lewis weights if it satisfies
\[
(1-\eps) \sigma_i(W^{\frac{1}{2}-\frac{1}{p}}A) \leq w_i \quad \text{for all } i \in [n], \quad \|w\|_1 \leq (1+\eps)d.
\]
The constraint $\|w\|_1 \leq (1+\eps)d$ ensures that the $\ell_1$-norm of $w$ is close to that of the vector of leverage scores, which is equal to $d$. Similarly, we call a vector $w$ a two-sided $\eps$-approximation of the Lewis weights if it satisfies
\[
(1-\eps) \sigma_i(W^{\frac{1}{2}-\frac{1}{p}}A) \leq w_i \leq (1+\eps) \sigma_i(W^{\frac{1}{2}-\frac{1}{p}}A) \leq w_i \quad \text{for all } i \in [n].
\]

We now turn to the algorithm that we use, \cref{alg:low-precision-Lewis}. As mentioned before, the algorithm is a variation on an earlier algorithm by Lee~\cite{YinTatThesis}. In fact, the vector $u$ constructed in lines 1 through 5 is exactly the output in Lee's algorithm: it iteratively applies the operator from the fixed-point equation~\eqref{eq:FP} and outputs the average $u$ of the iterates. Based on convexity of the functions $\phi_i(w) = \log(\sigma_i(W^{\frac{1}{2}-\frac{1}{p}}A)/w_i)$ for $i \in [n]$, it can be shown that $u$ is a one-sided $O(\delta)$-approximation of the Lewis weights.\footnote{It is claimed in \cite[Theorem 5.3.1]{YinTatThesis} that the algorithm returns a two-sided approximation, but the proof only shows a one-sided approximation guarantee (which suffices for the applications in \cite{YinTatThesis}). We thank Aaron Sidford for pointing this out to us.} This one-sided notion suffices for applications to $\ell_p$-embedding and $\ell_p$-regression problems, but not for our application in an IPM. It is shown in \cite[Theorem~2]{apers2024lewis} that the additional post-processing step in line $6$ of \cref{alg:low-precision-Lewis} improves the notion of approximation: the resulting vector $v$ is a two-sided $\delta'=O(\delta pd)$-approximation of the Lewis weights. We finally use a known stability result from~\cite[Lemma 14]{Fazel:highprecisionLewisWeights} which shows that a two-sided $\delta'$-approximation is an $O(\delta' \sqrt{d}p^2)$-multiplicative approximation. Choosing $\delta \in O(\eps/(p^3 d^{3/2}))$ thus ensures that the output $v$ is an entrywise $\eps$-multiplicative estimate of the true Lewis weights~$w^{(p)}(A)$: we have $v \approx_\eps w^{(p)}(A)$. The following theorem summarizes this. In the remainder of this section we analyze the query and time complexity of this algorithm, when we use the subroutines from the previous section to approximate leverage scores. 

\begin{theorem}[{\cite[Theorem 3]{apers2024lewis}}] \label{thrm:low-precision-lewis}
    For any matrix $A \in \R^{n \times d}$, $0<\eps<1$ and $p \geq 2$, \cref{alg:low-precision-Lewis} outputs $\eps$-multiplicative $\ell_p$-Lewis weights $v$ satisfying $(1-\eps) w^{(p)}_i(A) \leq v_i \leq (1+\eps) w^{(p)}_i(A)$, for all $i \in [n]$.
\end{theorem}

\begin{algorithm}[ht]
\caption{Approximate $\ell_p$-Lewis weights~\cite{apers2024lewis}}\label{alg:low-precision-Lewis}
\Input{$A \in \R^{n \times d}$, accuracy $0<\eps<1$, $p \geq 2$}
\Output{A vector $v \in \R^n$}

\BlankLine
Let $v_i^{(1)} = d/n$ for all $i \in [n]$, $\delta \in O(\eps/(p^3 d^{3/2}))$ and set $T = \lceil 2\log(n/d)/\delta \rceil$;

\For{$k=1,\ldots,T$}{ 
    Let $v^{(k+1)} \approx_{\delta} \sigma((V^{(k)})^{\frac12 - \frac1p} A)$\;
}

Let $u  = \frac{1}{T} \sum_{k=1}^T v^{(k)}$ and $s \approx_{\delta} \sigma(U^{\frac12 - \frac1p} A)$\;

\Return{$v$ with $v_i  = u_i (s_i/u_i)^{p/2} $}
\end{algorithm}

\subsection{Iterative quantum algorithm for approximate Lewis weights}

We obtain a quantum algorithm for computing Lewis weights by plugging our quantum algorithm for computing leverage scores (\cref{thm:q-rep-halving}) into \cref{alg:low-precision-Lewis}.
We have to be careful because in sublinear time we cannot afford to explicitly write down $\Omega(n)$ coefficients, and so we can only \emph{implicitly} keep track of the coefficients throughout the algorithm.

\begin{restatable}[Quantum Lewis weight approximation]{theorem}{quantumlw}
\label{thm:quantum-lw} \label{thm:quantum-Lewis}
Consider query access to a matrix $A \in \R^{n \times d}$ with row sparsity $r$ and $n \gg d$.
For any $0 < \eps \leq 1$ and $p \geq 2$, there is a quantum algorithm that, with high probability, provides query access to a vector $v \in \R^n$ satisfying $v \approx_\eps w^{(p)}(A)$.
The preprocessing cost of the algorithm is $\tO(p^6 d^{7/2} \sqrt{n}/\eps^2)$ row queries to $A$ and time
\[
\tO\left( \frac{p^9 d^{9/2}}{\eps^3} \left( \min\{d^\omega,d r^2\} + \sqrt{nd} r^2 \right) \right).
\]
After preprocessing, the cost per query to $v$ is 1 row query to $A$ and $O(p^3 d^{3/2} r^2/\eps)$ time.
\end{restatable}
For the reader's convenience, in terms of the parameter $\delta \in O(\eps/(p^3 d^{3/2}))$ defined in \cref{alg:low-precision-Lewis}, the cost of the algorithm can be seen as roughly $1/\delta$ times the cost of \cref{thm:quantum-ls} run with precision $\delta$. That is, the preprocessing cost  is $\tO(\sqrt{nd}/\delta^2)$ row queries to $A$ and time $\tO\left( \frac{1}{\delta} d^\omega + \frac{1}{\delta^3} \left( \min\{d^\omega,d r^2\} + r^2 \sqrt{nd} \right) \right)$. After preprocessing, the cost per query to $v$ is 1 row query to $A$ and $O(r^2/\delta)$ time. 

\begin{proof}
We establish the theorem by invoking \cref{thm:quantum-ls} inside \cref{alg:low-precision-Lewis}.
The theorem bounds quantities $p_q$, $p_t$ and $Q_t$ so that we can simulate query access to
\[
v^{(k+1)} \approx_\delta \sigma((V^{(k)})^{\frac12-\frac1p} A),
\qquad \text{and} \qquad
s \approx_{\delta} \sigma(U^{\frac12 - \frac1p} A)
\]
with the following costs:\footnote{We use that a row query to the $i$-th row of $(V^{(k)})^{\frac12-\frac1p} A$ can be simulated by 1 row query to the $i$-th row of $A$ and 1 query to~$v_i^{(k)}$.}
\begin{enumerate}
\item
\emph{Preprocessing:} $p_q$ row queries to $A$ and $v^{(k)}$ or $u$, and $p_t$ time.
\item
\emph{Cost per query to $v^{(k+1)}_i$:} 1 row query to $i$-th row of $A$ and $v^{(k)}$ or $u$, and $Q_t$ time.
    \end{enumerate} 
Specifically, the first item of \cref{thm:quantum-ls} gives bounds $p_q \in \tO(\sqrt{nd}/\delta)$, $p_t \in \tO(r \sqrt{nd}/\delta + d^\omega + \min\{d^\omega,d r^2\}/\delta^2)$ and $Q_t \in O(r^2)$.
It will follow from our argument that we can query $u = \frac{1}{T} \sum_{k=1}^T v^{(k)}$ in the same complexity as querying $v^{(k)}$ and $s$.
As a consequence, we can also query the output $v_i = u_i (s_i/u_i)^{p/2}$ in this complexity.

Bounding the cost of implementing \cref{alg:low-precision-Lewis} requires some care.
For starters, we bound the query and time complexity of a single query to $v_i^{(k)}$ or $s_i$ \emph{after preprocessing}.
A query to $v_i^{(k)}$ can be simulated with 1 row query to the $i$-th row of $A$, 1 query to $v_i^{(k-1)}$, and time $Q_t$.
On its turn, $v_i^{(k-1)}$ can be queried with 1 row query to the $i$-th row of $A$ (which we already queried!), 1 query to $v_i^{(k-2)}$, and time $Q_t$.
Repeating this argument, and noting that $v_i^{(1)} = n/d$, it follows that $v_i^{(k)}$ can be queried with 1 row query to the $i$-th row of $A$ and total time $O(k Q_t)$.
To query $u_i = \frac{1}{T} \sum_{k=1}^T v^{(k)}_i$, we note that it is not more expensive then the cost to query $v^{(T)}_i$, since the latter effectively queries each of $v^{(1)},\ldots,v^{(T-1)}$. Hence we can also query $u_i$ (and as a consequence also $s_i$) using $1$ row-query to~$A$ and time $O(T Q_t)$.

Now we bound the preprocessing cost for each step.
We solve the iterations chronologically, so that in iteration $k+1$ we can assume that the preprocessing is done for previous iterations.
That said, in iteration $k+1$ we have to make $p_q$ row queries to $A$ and $v^{(k)}$ (or $u$ in the final iteration), and spend time $p_t$.
By the preceding argument, we can make $p_q$ queries to $v^{(k)}$ by making $p_q$ row queries to $A$ and spending time $O(p_q k Q_t)$, so that in total we make $2 p_q$ row queries to $A$ and spend time $O(p_t + p_q k Q_t)$.
Summing over all $k=1,2,\dots,T$ iterations (plus the final postprocessing step 5.), we get a total row query complexity of $O(T p_q)$ and time complexity of $O(T p_t + T^2 p_q Q_t)$.

Recalling that $T \in \tO(1/\delta)$, and filling in the bounds on $p_q$, $p_t$ and $Q_t$ from \cref{thm:quantum-ls}, we proved the claimed complexities.
\end{proof}

\section{Approximate matrix-vector product}

In this section we describe our quantum algorithm for approximation of matrix-vector products.
It combines the quantum algorithm for spectral approximation (\cref{thm:quantum-approx}) with quantum multivariate mean estimation \cite{CHJ:multivariate}.
A blueprint of the algorithm is given in \cref{alg:quantumMv}.
As explained in the introduction, for our Newton step we wish to approximate a matrix-vector product $y = B v$ in the ``local inverse norm'', i.e., we wish to return $\tilde y$ such that $\| \tilde y - y \|_{W^{-1}} \leq \delta$ with $W = B^T B$.
Since the algorithm from \cite{CHJ:multivariate} only gives an approximation in the 2-norm, we first use a spectral approximation $\widetilde W \approx W$ to ``precondition'' the product to $z = \wt W^{-1/2} y$.
A 2-norm approximation of~$z$ then translates to a local norm approximation of $y$:
\[
\| \tilde z - z \|_2
= \| \widetilde W^{1/2} \tilde z - y \|_{\tilde W^{-1}}
\approx \| \widetilde W^{1/2} \tilde z - y \|_{W^{-1}},
\]
so that we can return $\tilde y = \widetilde W^{1/2} \tilde z$.

\begin{algorithm}[ht]
    \caption{Quantum algorithm to approximate matrix-vector products} \label{alg:quantumMv}
    \Input{$B \in \R^{n \times d}$, $v \in \R^{n}$, bound $\alpha \geq \|v\|_\infty$, accuracy $0<\delta<1$}
    \Output{A vector $\tilde y \in \R^d$}

    Compute $\widetilde W \approx_{1/2} B^T B$ using quantum spectral approximation (\cref{thm:quantum-approx})\;
    Compute $\widetilde W^{-1/2}$ classically\;
    Define the random variable $X
= - n v_\ell \widetilde W^{-1/2} B^T |\ell\rangle$, with $\ell \in [n]$ uniformly at random\;
    Use quantum multivariate mean estimation on $X$ with $T = \tO(\sqrt{n} \alpha d/\delta)$ quantum samples to obtain $\tilde \mu \in \R^d$ such that $\|\tilde\mu - \E[X]\|_2 \leq \delta$\;
    \Return{$\tilde y = \widetilde W^{1/2} \tilde \mu$}
\end{algorithm}

\approxmv
\begin{proof}
Let $W = B^T B$ and $y = B^T v$. Let $\gamma = \frac{1}{2}$. 
First, we use \cref{thm:quantum-approx} to construct a spectral approximation $\widetilde W \approx_\gamma W$ with $\tO(r \sqrt{nd})$ quantum queries and time $\tO(r\sqrt{nd} + d^\omega)$. We then classically compute $\widetilde W^{-1/2}$, which takes time $\tO(d^\omega)$.

Now, define the $d$-dimensional random variable
\[
X
= - n v_\ell \widetilde W^{-1/2} B^T |\ell\rangle,
\]
with $\ell \in [n]$ uniformly at random.
Then $X$ has mean
\[
\mu
= \E[X]
= - \frac{1}{n} \sum_{\ell = 1}^n n v_{\ell} \widetilde W^{-1/2} B^T |\ell\rangle
= \widetilde W^{-1/2} B^T v.
\]
To upper bound the trace of the covariance matrix $\Sigma$, note that
\begin{align*}
\sigma^2_i
= \E[X_i^2] - \E[X_i]^2
\leq \E[X_i^2]
&= n \braket{i|\widetilde W^{-1/2} B^T V^2 B \widetilde W^{-1/2}|i} \\
&\leq n \|v\|_\infty^2 \braket{i|\widetilde W^{-1/2} W \widetilde W^{-1/2}|i}
\leq (1+\gamma) n \|v\|_\infty^2,
\end{align*}
where we used that $\tilde W^{-1/2} W \tilde W^{-1/2} \approx_\gamma I$.
This implies that $\Tr(\Sigma) =\sum_{i=1}^d \sigma_i^2 \leq (1+\gamma) \|v\|_\infty^2 nd$. 

Now we can use quantum multivariate mean estimation~\cite[Thrm.~3.5]{CHJ:multivariate}.
We can get an estimate $\tilde\mu$ satisfying
\[
\| \tilde\mu - \mu \|_2
\leq \delta/2
\]
using $T = \tO(\sqrt{d\Tr(\Sigma)}/\delta) \leq \tO(\sqrt{\|v\|_\infty^2n}d/\delta)$ quantum samples of $X$.
Finally, we return the estimate $\tilde y = \widetilde W^{1/2} \tilde\mu$.
This estimate satisfies
\[
\| \tilde y - y \|_{W^{-1}}
\leq (1+\gamma) \| \tilde y - y \|_{\tilde W^{-1}}
= (1+\gamma) \| \tilde\mu - \mu \|_2
\leq (1+\gamma) \delta/2
\leq \delta.
\]
The total sample complexity is $T \in \tO(\sqrt{n}d\|v\|_\infty/\delta)$ and we can obtain a sample from $X$ using one row query to $B$, one query to $v$, and time $d r$ (recall that we already computed $\widetilde W^{-1/2}$).
The total time complexity is hence $\tO(r\sqrt{n}d^2 \|v\|_\infty/\delta)$.\footnote{Ref.~\cite{CHJ:multivariate} does not explicitly state the time complexity of their algorithm, but their overhead is essentially due to the quantum Fourier transform in dimension $d$ and, per quantum sample, the computation of $d$-dimensional inner products. This overhead is less than the time $dr$ that we spend to prepare a quantum sample.}
\end{proof}

\begin{remark} \label{rem: univariate}
While we omit an explicit treatment, one can also use \emph{univariate} quantum mean estimation~\cite{montanaro2015quantum} to estimate $\mu$ coordinate-wise.
While this improves the time complexity by a factor $\sqrt{d}$, it increases the number of queries by a factor $\sqrt{d}$.
\end{remark}

\section{Robustness of IPMs with respect to approximating the Newton step} 
\label{sec:robustness}

As we will see in \cref{sec: complexity analysis}, the techniques developed in the previous sections can be used to approximate a Newton step in the following way: we find (1) spectral approximations of Hessians and (2) approximations of gradients in the inverse-local norm. In this section we formally state the required precision with which we need to perform both operations in order to implement a variation of the ``short-step barrier method'' (cf.~\cite{Renegar88}). The argument is relatively standard; approximating the gradient in the inverse-local norm is equivalent to approximating the Newton step in the local norm (and for this it is well-known that a small constant error suffices), the second type of approximation leads to a quasi-Newton method and it can be analyzed as such. For completeness we include an elementary proof following Renegar's exposition~\cite{Renegar88} in \cref{sec: proofs robust update}. A similar variation was recently used to obtain a space-efficient interior point method in \cite{Space2023}.

We defer the definitions of \emph{self-concordance} and \emph{complexity} of a barrier to \cref{sec:SCB}, for now we just mention that all barrier functions that we consider are self-concordant and have known complexity parameters.

\begin{theorem}[IPM based on approximate Newton steps] \label{thm:IPM-master}
Consider the convex optimization problem $\min_{x \in D} c^T x$ for convex region $D$, and let $f:D \to \R$ be a self-concordant barrier function for $D$ with complexity $\vartheta_f$, gradient $g(x)$ and Hessian $H(x)$.
Assume that we are given an appropriate initial point $z_0 \in D$ (see \cref{remark:initial-point}).
Then, for any $\eps>0$, we give an IPM algorithm that returns $x' \in D$ satisfying $c^T x' \leq \mathrm{val} + \eps$.
The algorithm computes $O(\sqrt{\vartheta_f} \log(1/\eps))$ many approximate gradients $\tilde g(x)$ and Hessians $\tilde H(x)$ satisfying
\[
\| \tilde g(x) - g(x) \|_{H(x)} \leq 1/\polylog(n)
\quad \text{ and } \quad
\tilde H(x) \preceq H(x) \preceq \polylog(n) \, \tilde H(x).
\]
\end{theorem}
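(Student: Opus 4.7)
I would follow the standard short-step barrier method framework, tracking a single invariant: at the start of every iteration we have a centering parameter $\eta$ and a point $x$ satisfying $\lambda_\eta(x) \coloneqq \|g_\eta(x)\|_{H(x)^{-1}} \leq c_0$ for some small absolute constant $c_0$ (say $1/16$), where $g_\eta(x) = \eta c + g(x)$ is the gradient of $f_\eta$ and $H(x)$ is the Hessian of $f$ (equivalently of $f_\eta$). The initial point $z_0$ is assumed to satisfy this invariant for $\eta_0$ of order $1$. In each iteration I (i) update $\eta \to \eta' = (1 + c_1/\sqrt{\vartheta_f})\eta$ for a small constant $c_1$, and (ii) take one approximate Newton step $x' = x - \tilde H(x)^{-1} \tilde g_{\eta'}(x)$. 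The algorithm halts after $T = O(\sqrt{\vartheta_f} \log(\vartheta_f/\eps))$ iterations, at which point the standard duality bound for self-concordant barriers gives $c^T x - \mathrm{val} \leq \vartheta_f/\eta + O(1/\eta) \leq \eps$.

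The $\eta$-update is controlled by the standard bound $\|c\|_{H(x)^{-1}} \leq O(\sqrt{\vartheta_f}/\eta)$ valid for any approximately centered $x$ (a consequence of self-concordance and the complexity bound $\vartheta_f$). Hence $\|g_{\eta'}(x) - g_\eta(x)\|_{H^{-1}} = (\eta' - \eta)\|c\|_{H^{-1}} \leq c_1$, so after the update $\lambda_{\eta'}(x) \leq c_0 + c_1$. The approximate Newton step must now restore $\lambda_{\eta'}(x') \leq c_0$. Writing $n = -H^{-1}g_{\eta'}$ and $\tilde n = -\tilde H^{-1} \tilde g_{\eta'}$, I would decompose
\begin{equation*}
\tilde n - n \;=\; -\tilde H^{-1}(\tilde g_{\eta'} - g_{\eta'}) \;+\; (\tilde H^{-1} - H^{-1})(-g_{\eta'}),
\end{equation*}
and bound $\|\tilde n - n\|_{H}$ in the $H$-local norm. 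The first term is bounded by $\|\tilde g_{\eta'} - g_{\eta'}\|_{\tilde H^{-1}} \leq \|\tilde g_{\eta'} - g_{\eta'}\|_{H^{-1}} \leq 1/\polylog(n)$, using $\tilde H^{-1} \preceq H^{-1}$ coming from the hypothesis $\tilde H \preceq H$. The second term is bounded using the operator inequality $0 \preceq H^{1/2}(\tilde H^{-1} - H^{-1})H^{1/2} \preceq (K-1)I$ with $K = \polylog(n)$, yielding a contribution of order $K \cdot \lambda_{\eta'}(x)$. Plugging these into the self-concordant Taylor expansion $\lambda_{\eta'}(x') \lesssim \|\tilde n - n\|_H + \lambda_{\eta'}(x)^2/(1-\lambda_{\eta'}(x))$ (with the standard damped-step version if $\lambda$ is not in the quadratic regime) and choosing the approximation precisions so that the Hessian distortion factor $K$ and the gradient error $\|\tilde g - g\|_{H^{-1}}$ are each bounded by an appropriate inverse polylog factor, we obtain $\lambda_{\eta'}(x') \leq c_0$, closing the induction.

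The main obstacle is the second error term: unlike the gradient error, the Hessian is only approximated up to a multiplicative $\polylog(n)$ factor on one side, so $\tilde H^{-1}$ is \emph{not} a small-error approximation of $H^{-1}$ in operator norm. The trick is that this factor multiplies $\lambda_{\eta'}(x)$, which after the $\eta$-update is itself at most $c_0 + c_1 = O(1)$, so the contribution of this term to $\lambda_{\eta'}(x')$ is not amplified catastrophically; it just means we lose quadratic convergence and are in a contraction-by-constant regime. As long as the polylog slack is absorbed into the choice of $c_0, c_1$ (which is allowed since the theorem only requires $\|\tilde g - g\|_{H^{-1}} \leq 1/\polylog(n)$ and $\tilde H \preceq H \preceq \polylog(n) \tilde H$, both of which can be driven down by tuning the internal polylog), we still restore $\lambda_{\eta'}(x') \leq c_0$ in a single step and incur no increase in the outer iteration count $O(\sqrt{\vartheta_f}\log(1/\eps))$. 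The detailed self-concordance calculations and the exact dependence of the constants are routine and would be carried out in \cref{sec: proofs robust update}.
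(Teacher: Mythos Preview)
Your plan has a real gap in the centering step. You propose to restore $\lambda_{\eta'}(x')\le c_0$ with a \emph{single} undamped step $x'=x-\tilde H^{-1}\tilde g_{\eta'}$ and to ``absorb the polylog slack into $c_0,c_1$.'' This does not work. From $\tilde H\preceq H\preceq K\tilde H$ one gets $H^{-1}\preceq\tilde H^{-1}\preceq K H^{-1}$ (your inequality $\tilde H^{-1}\preceq H^{-1}$ is in the wrong direction), so $\|\tilde n\|_x$ can be as large as $K\lambda_{\eta'}(x)$; with $\lambda_{\eta'}(x)$ a fixed constant and $K=\polylog(n)$ the step may have local norm exceeding $1$ and leave $D$. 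More to the point, writing $M=H^{1/2}\tilde H^{-1}H^{1/2}$ (eigenvalues in $[1,K]$) the linearised new decrement is $\|(I-M)H^{-1/2}g_{\eta'}\|_2$, which in the worst direction equals $(K-1)\lambda_{\eta'}(x)$: for $K>2$ the map is an \emph{expansion}, not a contraction. And you cannot drive $K$ toward $1$: for the Lewis-weight barrier the relation $\tilde H_\psi\preceq H_\psi\preceq(1+p)\tilde H_\psi$ with $p=\polylog(n)$ is inherent to the barrier (it compares the true Hessian to the simplified surrogate $A^TS_x^{-1}W_xS_x^{-1}A$), not a precision knob.

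The paper's fix is to take \emph{damped} steps $x'\leftarrow x'+\delta\,\tilde d(x')$ with $\delta=1/(4C)$ and iterate. Progress is tracked not via $\lambda$ but via function value: each damped step decreases $f_{\eta'}$ by at least $\alpha^2/(16C^2)$ (\cref{lem: progress bound approx}), while after the $\eta$-update the gap $f_{\eta'}(x)-\min f_{\eta'}$ is $O(1)$ because $\|n_{\eta'}(x)\|_x\le 1/4$ (\cref{lem: potential gap}). Hence $O(C^2)$ inner steps suffice per $\eta$-update (\cref{thm:approxshortstep}), and the resulting $C^2=\polylog(n)$ overhead is absorbed into the $\tO$. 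Your outer-loop analysis (the $\eta$-update and the final $\vartheta_f/\eta$ bound) is fine; it is only the one-step inner recentering that fails.
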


We will instantiate the above for linear programs with the logarithmic barrier, the volumetric barrier and the Lewis weight barrier.
At a first read, readers can treat this theorem as a black-box and proceed with \cref{sec: complexity analysis} for quantum implementations of such approximations. 

\begin{remark}[Initial point] \label{remark:initial-point}
As is customary for interior point methods, we assume that we are given some initial point $x_0$ in the interior of the feasible region $D$.
For \cref{thm:IPM-master} we assume that there exists a small but constant $\eta>0$ such that $z_0$ is an approximate minimizer of $f_\eta(x) = \eta c^T x + f(x)$.
More specifically, we require that the Newton step at $z_0$ is small in the local norm (see later for definitions). By a standard argument of ``path-switching'' (see, e.g., \cite[Sec.~2.4.2]{Renegar01}) this assumption is equivalent to the assumption that we are given an initial point $x_0$ in the interior of $D$ that is ``not too close to the boundary'' (e.g., it suffices to have $B(x_0,1/\poly(n,d)) \subseteq D \subseteq B(x_0,\poly(n,d))$ in the Euclidean norm).

For our statements about time complexity, we assume for simplicity that a query to an entry of $A$, $b$, or $c$ takes time $\widetilde O(1)$ and that all floating point calculations are done exactly.
While these are standard assumptions (see e.g.~\cite{cohen2021solving}), there do exist exceptions \cite{allamigeon2022interior}.
Generally, however, it holds that all computations can be carried out with a number of bits of accuracy that scales with the bit-complexity of $A$, $b$ and $c$ -- see for example~\cite{Renegar88}.
\end{remark}

\section{Quantumly approximating Hessians and gradients} \label{sec: complexity analysis}

From \cref{thm:IPM-master} it follows that we can implement an IPM by implementing approximate Newton steps, which can be derived from approximations of the gradient and Hessian of the barrier function.
Specifically, from a feasible point $x \in \R^d$ we need to compute approximations $Q(x)$ and $g(x)$ satisfying
\[
Q(x) \preceq H(x) \preceq C Q(x) \quad \text{ and } \quad
\| \tilde g(x) - g(x) \|_{H(x)^{-1}} \leq \zeta.
\]
for parameters $C$ and $\zeta$ satisfying $C,\zeta^{-1} \in \tO(1)$.
We first describe the high-level strategy for computing $Q(x)$ and $\tilde g$ and then in the next three subsections we give a detailed analysis of the complexity for the logarithmic, volumetric and Lewis weight barriers.

At a high level, for each of the barriers we will use our quantum spectral approximation algorithm (\cref{thm:quantum-approx}) to obtain a $\tO(1)$-spectral approximation $Q(x)$ of the Hessian. This theorem shows how to obtain a spectral approximation of $B^T B$ using row-queries to $B$. Our Hessians all take the form $A^T S_x^{-1} W S_x^{-1} A$ where $S_x$ is the diagonal matrix containing the slacks and $W \succ 0$ is a diagonal weight matrix. For the log-barrier, the volumetric barrier and the Lewis-weight barrier, the matrix~$W$ is respectively the identity, the diagonal matrix of leverage scores of $S_x^{-1} A$, and the diagonal matrix of Lewis weights of $S_x^{-1} A$. For the latter two barriers we thus need to compute constant factor approximations of leverage scores and Lewis weights respectively. We do so using \cref{thm:quantum-ls} and \cref{thm:quantum-Lewis} respectively. 

To obtain a good estimate of the gradient, we use our quantum algorithm for approximate matrix-vector products (\cref{thm:approx-mv}).
To apply the theorem we note that $g(x)$ takes the form $-A^T S_x^{-1} W \mathbf 1$, with $\mathbf 1$ the all-ones vector and $W$ the same rescaling matrix as before.
We interpret this as a matrix-vector product between $B=\sqrt{W} S_x^{-1} A$ and $v=\sqrt{W} \mathbf 1$.
This means that for the volumetric barrier and the Lewis weight barrier we cannot afford to query entries of $v$ exactly, instead we will query multiplicative approximations of these entries. The next lemma analyzes the error that such a multiplicative approximation incurs.
\begin{lemma} \label{lem: mult error}
    Let $B \in \R^{n \times d}$, $D \in \R^{n \times n}$ diagonal with $D_{ii} \in [1-\eps,1+\eps]$, and $v \in \R^n$, then 
    \[
    \|B^T v - B^T D v\|_{(B^T B)^{-1}} \leq \eps \|v\|_2
    \]
\end{lemma}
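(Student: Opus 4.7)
The plan is to factor out the perturbation and exploit the fact that $B(B^TB)^{-1}B^T$ is an orthogonal projector.

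First I would rewrite the quantity on the left as
\[
B^T v - B^T D v = B^T (I - D) v.
\]
Let $w \coloneqq (I-D)v$. Since $D$ is diagonal with $D_{ii} \in [1-\eps,1+\eps]$, the matrix $I-D$ is diagonal with entries in $[-\eps,\eps]$, so $|w_i| \leq \eps |v_i|$ for every $i$, and consequently $\|w\|_2 \leq \eps \|v\|_2$.

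Next I would unfold the definition of the $(B^TB)^{-1}$-norm and use that $P \coloneqq B (B^T B)^{-1} B^T$ is (the matrix of) orthogonal projection onto the column space of $B$. In particular $P^2 = P$ and $P \preceq I$, hence
\[
\|B^T v - B^T D v\|_{(B^TB)^{-1}}^2 = w^T B (B^TB)^{-1} B^T w = w^T P w \leq \|w\|_2^2 \leq \eps^2 \|v\|_2^2.
\]
Taking square roots gives the claim.

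The argument is essentially one line once the substitution $w = (I-D)v$ is made, so there is no real obstacle. The only mild subtlety is the implicit assumption that $B^TB$ is invertible; if one prefers to state the lemma with the Moore--Penrose pseudoinverse $(B^TB)^+$, then $B(B^TB)^+ B^T$ is still the orthogonal projector onto the column space of $B$, and the identical chain of inequalities goes through without change.
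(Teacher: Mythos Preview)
Your proof is correct and follows essentially the same approach as the paper's: both rewrite the difference as $B^T(I-D)v$, expand the squared $(B^TB)^{-1}$-norm, use that $B(B^TB)^{-1}B^T$ is an orthogonal projector (hence $\preceq I$), and then bound $(I-D)^2 \preceq \eps^2 I$. Your remark about replacing the inverse by the pseudoinverse is a valid and slightly more careful addition.
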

\begin{proof}
    We bound the squared norm as follows:
    \begin{align*}
        \|B^T v - B^T D v\|_{(B^T B)^{-1}}^2
        &= \|B^T (D-I) v \|_{(B^T B)^{-1}}^2 \\
        &= v^T (D-I) B(B^T B)^{-1} B^T (D-I) v \\
        &\leq v^T (D-I)^2 v
        \leq \eps^2 \|v\|_2^2 
    \end{align*}
    where in the first inequality we use that $B(B^T B)^{-1} B^T$ is a projector.
\end{proof}
For both the volumetric and Lewis weight barrier we have $\|v\|_2 \leq \sqrt{d}$. To achieve a small constant error in the gradient, it thus suffices to use $(1 \pm O(1/\sqrt{d}))$-multiplicative approximations of the entries of $v$.

\subsection{Logarithmic barrier} \label{sec:log-barrier}

The logarithmic barrier (cf.~\cite{Renegar88}) for a polytope $Ax \geq b$, with $A \in \R^{n \times d}$ and $b \in \R^n$, is defined as
\[
F(x) \coloneqq - \sum_{i=1}^n \log(a_i^T x - b_i),
\]
which is well defined in the interior of the polytope.
It is a self-concordant barrier with complexity~\cite[Section 2.3.1]{Renegar01}
\[
\vartheta_F
\in O(n).
\]
Its gradient and Hessian are 
\[
g_F(x) \coloneqq \nabla F(x) =- A^T S_x^{-1} \mathbf 1
\quad \text{ and } \quad
H_F(x) \coloneqq \nabla^2 F(x) = A^T S_x^{-2} A.
\]
The following lemmas bound the cost of approximating these using our quantum algorithms.
\begin{lemma}[Hessian of logarithmic barrier] \label{lem:hessian log}
Given query access to $A \in \R^{n \times d}$, $x \in \R^d$ and $b \in \R^n$, and a parameter $\eps>0$, we give a quantum algorithm that with high probability computes a matrix $Q(x)$ such that $Q(x) \approx_\eps H_F(x)$.
The algorithm uses $\tO(\sqrt{nd}/\eps)$ row queries to $A$, $x$, and $b$ and time $\tO(\sqrt{nd}r/\eps+d^\omega)$. 
\end{lemma}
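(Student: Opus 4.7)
The plan is to reduce the task directly to quantum spectral approximation (\cref{thm:quantum-approx}) applied to the matrix $B = S_x^{-1} A \in \R^{n \times d}$. Indeed, by the formula $H_F(x) = A^T S_x^{-2} A = B^T B$, any matrix $Q(x)$ with $Q(x) \approx_\eps B^T B$ is precisely the required spectral approximation of $H_F(x)$. Since $S_x$ is diagonal, the matrix $B$ has the same row-sparsity $r$ as $A$, so \cref{thm:quantum-approx} will yield the claimed $\tO(\sqrt{nd}/\eps)$ row-query and $\tO(r\sqrt{nd}/\eps + d^\omega)$ time complexity, provided we can simulate each row query to $B$ at essentially unit cost.

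The key step is therefore to describe a cheap simulation of row queries to $B$ from the input oracles for $A$, $x$, and $b$. First I would read $x \in \R^d$ into classical memory using $d$ queries and $O(d)$ time; this is a negligible one-time cost compared to the $\tO(\sqrt{nd}/\eps)$ budget (assuming $n \gtrsim d$, which is the regime of interest). Then, given an index $i \in [n]$ (possibly in superposition), one row query to $B$ is implemented by: (i) one row query to $A$, returning the $r$-sparse vector $a_i$; (ii) one query to $b$, returning $b_i$; (iii) computing the slack $s_i = a_i^T x - b_i$ in time $O(r)$ using the stored $x$; and (iv) outputting the rescaled row $s_i^{-1} a_i$. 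Each such query can be performed coherently and is reversible, so it plugs into the quantum subroutines underlying \cref{thm:quantum-approx}. We may assume $s_i > 0$ throughout since $x$ is strictly feasible.

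With this simulation in hand, I would invoke \cref{thm:quantum-approx} on $B$ with precision parameter $\eps$ to obtain, with high probability, a matrix $\widetilde B \in \R^{\tO(d/\eps^2) \times d}$ satisfying $\widetilde B^T \widetilde B \approx_\eps B^T B = H_F(x)$; I would then set $Q(x) \coloneqq \widetilde B^T \widetilde B$ (or keep $\widetilde B$ implicitly and form $Q(x)$ only when needed, in time $\tO(d^\omega)$ via the bounds in \cref{remark:graph-sparsification}). The overall complexity is $\tO(\sqrt{nd}/\eps)$ row queries to $B$, which by the reduction above corresponds to $\tO(\sqrt{nd}/\eps)$ row queries to $A$ and $b$ (plus the initial $d$ queries to $x$), and total time $\tO(r\sqrt{nd}/\eps + d^\omega)$.

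There is no serious obstacle here; the lemma is essentially a bookkeeping corollary of \cref{thm:quantum-approx}. The only small points requiring care are: verifying that the row-query oracle for $B$ is efficiently and coherently implementable from the given oracles (handled above), and confirming that the $d^\omega$ term from the theorem does not need to be revisited when we identify $Q(x) = \widetilde B^T \widetilde B$ as the output. The same blueprint will be reused in the later sections to handle the volumetric and Lewis-weight Hessians, where $B$ is instead $\Sigma_x^{1/2} S_x^{-1} A$ or $W_x^{1/2} S_x^{-1} A$ and the extra diagonal rescaling is supplied by the leverage-score or Lewis-weight subroutines.
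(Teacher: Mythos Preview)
Your proposal is correct and follows essentially the same approach as the paper: apply \cref{thm:quantum-approx} to $B = S_x^{-1}A$, noting that a row query to $B$ can be simulated with $O(1)$ row queries to $A$ and $b$ and $O(r)$ time to compute the slack. The paper's proof is just a two-sentence version of what you wrote.
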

\begin{proof}
We apply \cref{thm:quantum-approx} to the matrix $S_x^{-1}A$ and observe that we can implement one row-query to $S_{x}^{-1} A$ in time $O(r)$. Indeed, it suffices to observe that we can use $O(1)$ row queries and $O(r)$ time to compute the slack of a single constraint.
\end{proof}

\begin{lemma}[Gradient of logarithmic barrier] \label{lem:gradient log}
Given query access to $A \in \R^{n \times d}$, $x \in \R^d$ and $b \in \R^n$, we give a quantum algorithm that with high probability computes a vector $\tilde g(x)$ such that $\|\tilde g(x) - g_F(x)\|_{H_F(x)^{-1}} \leq \zeta$ using $\tO(\sqrt{n}d/\zeta)$ row queries to $A$, $x$, and $b$ and time $\tO(\sqrt{n}d^2 r/\zeta+d^\omega)$. 
\end{lemma}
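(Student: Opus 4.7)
The plan is to reduce this directly to the approximate matrix-vector product theorem (\cref{thm:approx-mv}). Observe that the gradient of the logarithmic barrier admits the factorization
\[
g_F(x) = -A^T S_x^{-1} \mathbf{1} = -B^T v, \qquad B := S_x^{-1} A, \qquad v := \mathbf{1} \in \R^n,
\]
and moreover $B^T B = A^T S_x^{-2} A = H_F(x)$. Therefore the inverse-local-norm error $\|\tilde g(x) - g_F(x)\|_{H_F(x)^{-1}}$ is exactly the norm $\|\cdot\|_{(B^T B)^{-1}}$ appearing in the guarantee of \cref{thm:approx-mv}.

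First, I would verify that the input model assumed by \cref{thm:approx-mv} can be simulated efficiently here. A row query to $B = S_x^{-1} A$ costs one row query to $A$ together with $O(r)$ arithmetic operations to compute the slack $(Ax-b)_i$ (using queries to $x$ and $b$ and the $r$-sparsity of the row), so one row query to $B$ reduces to $O(1)$ row queries to $A$, $b$, $x$ and $O(r)$ time. Queries to $v = \mathbf{1}$ are free. The row sparsity of $B$ equals that of $A$, namely $r$, since $B$ is obtained from $A$ by scaling each row by a single number.

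Next I would invoke \cref{thm:approx-mv} with error parameter $\delta = \zeta$. Since $\|v\|_\infty = 1$, the theorem produces $\tilde y$ with $\|\tilde y - B^T v\|_{(B^T B)^{-1}} \leq \zeta$ at a cost of $\tO(\sqrt{n}\, d / \zeta)$ row queries to $B$ and $v$ and time $\tO(r \sqrt{n}\, d^2 / \zeta + d^\omega)$. Setting $\tilde g(x) := -\tilde y$ yields the stated gradient approximation, and the query and time complexities translate to the claimed bounds after accounting for the $O(r)$-time simulation of each row query to $B$ in terms of queries to $A$, $b$, $x$.

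There is no real obstacle here beyond the bookkeeping above; the work has already been absorbed into \cref{thm:approx-mv}. The main thing to keep in mind is the observation that the all-ones vector is perfectly bounded ($\|v\|_\infty = 1$) so no additional loss from the $\|v\|_\infty$ factor appears, which is why the logarithmic barrier enjoys the cheapest gradient estimation among the three barriers; for the volumetric and Lewis weight barriers $v$ will instead involve (approximate) leverage scores or Lewis weights and require the additional multiplicative-error analysis of \cref{lem: mult error}.
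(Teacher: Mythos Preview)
Your proposal is correct and matches the paper's proof essentially line for line: set $B = S_x^{-1}A$, $v = \mathbf{1}$, observe $B^T B = H_F(x)$ and $\|v\|_\infty = 1$, and invoke \cref{thm:approx-mv} with $\delta = \zeta$. Your treatment is in fact slightly more careful about the sign (returning $\tilde g(x) = -\tilde y$) than the paper's own write-up.
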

\begin{proof}
As in the proof of \cref{lem:hessian log}, we set $B =S_x^{-1}A$ and observe that we can implement one row query to $S_{x}^{-1} A$ in time $O(r)$. We then use \cref{thm:approx-mv} with this $B$ and $v = \mathbf 1$ to obtain a vector $\tilde y$ such that $\|\tilde y - g_F(x)\|_{H_F(x)^{-1}} = \|\tilde y - g_F(x)\|_{(B^T B)^{-1}}\leq \zeta$. 
\end{proof}

By \cref{thm:IPM-master} we can solve an LP, returning a feasible point $\eps$-close to optimal, while making $O(\sqrt{\vartheta_f} \log(1/\eps))$ many approximate gradient and Hessian queries.
Combining \cref{lem:hessian log} and \cref{lem:gradient log} (with $\eps,\zeta = 1/\polylog(n)$), and using the fact that $\vartheta_F \in O(n)$, this yields a quantum IPM based on the logarithmic barrier which makes $\tO(n d)$ row queries and uses time $\tO(n d^2 r + \sqrt{n} d^\omega)$.
While useful as a proof of concept, the query complexity is trivial, and the time complexity is worse than the best classical algorithm which runs in time  $\tO(nd + d^3)$ \cite{brand2020solving}.
We hence turn to a more complicated barrier.

\subsection{Volumetric barrier} \label{sec:volumetric-barrier}

The volumetric barrier for a polytope $Ax \geq b$, introduced by Vaidya~\cite{Vaidya96} (see also~\cite{VaidyaAtkinson93,Anstreicher97volumetricLP}), is defined as
\[
V(x) \coloneqq \frac{1}{2} \ln \det(H_F(x)) = \frac{1}{2} \ln \det(A^T S_x^{-2} A).
\]
The volumetric barrier is a self-concordant barrier with complexity 
\[
\vartheta_V
\in O(\sqrt{n}d).
\]
An appropriately weighted linear combination of the volumetric barrier and the logarithmic barrier, called the \emph{hybrid barrier}, has complexity parameter $O(\sqrt{nd})$. For ease of notation, we focus here on approximating the Hessian and gradient of the volumetric barrier.\footnote{The hybrid barrier is $V_\rho(x) = V(x) + \rho F(x)$ where $\rho=(d-1)/(n-1)$, following the notation of \cite{Anstreicher97volumetricLP}. The analogue of~\cref{eq:volumetric-approximation} thus holds for the matrix $\widetilde H_{V_\rho}(x) = A^T S_x^{-1}(\Sigma_x + \rho I) S_x^{-1} A$, and the gradient is $g_{V_\rho}(x) = A^T S_x^{-1}(\Sigma_x + \rho I)\mathbf 1$. To obtain \cref{lem:hessian vol,lem:gradient vol} for the hybrid barrier it thus suffices to replace $\Sigma_x$ by $\Sigma_x + \rho I$ in the proofs (and use the fact that $a \approx_\eps b$ implies $a+\rho \approx_\eps b+\rho$ for $a,b,\rho>0$). \label{footnote:hybrid}}

Given a feasible $x \in \R^d$, let $\sigma_x$ be the vector of leverage scores of $S_x^{-1} A$ and let $\Sigma_x = \Diag(\sigma_x)$.
Then the gradient of $V$ at $x$ has the simple form
\[
g_V(x)
\coloneqq \nabla V(x)
= - A^T S_x^{-1} \Sigma_x \mathbf 1.
\]
The Hessian of $V$ has a more complicated form, but it admits a constant factor spectral approximation that has a simple form. 
We have 
\begin{equation} \label{eq:volumetric-approximation}
\wt H_V(x) \preceq H_V(x) \preceq 5 \wt H_V(x),
\end{equation}
where 
\[
\wt H_V(x) = A^T S_x^{-1} \Sigma_x S_x^{-1} A. 
\]
We can efficiently approximate this approximate Hessian.

\begin{lemma}[Hessian of volumetric barrier] \label{lem:hessian vol}
Given query access to $A \in \R^{n \times d}$, $x \in \R^d$ and $b \in \R^n$, we give a quantum algorithm that computes with high probability a matrix $Q(x)$ such that $Q(x) \approx_\eps \wt H_V(x)$.
The algorithm uses $\tO(\sqrt{nd}/\eps)$ row queries to $A$, $x$, and $b$ and time $\tO(\sqrt{nd}r^2/\eps+d^\omega/\eps^2)$.
\end{lemma}

\begin{proof}
We use \cref{thm:quantum-ls} to set up query access to $\tilde \sigma_x$, a vector of $(1 \pm \eps/3)$-multiplicative approximations to the leverage scores of $S_{x}^{-1} A$. This requires a preprocessing that uses $\tO(\sqrt{nd}/\eps)$ row queries to $S_x^{-1}A$ and time\footnote{For convenience we replace the term $\min\{d^\omega,d r^2\}$ by $d^\omega$.} 
$\tO(\sqrt{nd} r/\eps + d^{\omega}/\eps^2)$; afterwards we can query an entry of $\tilde \sigma_x$ using one row query to $S_x^{-1} A$ and time $\tO(r^2)$.
Let $\wt\Sigma_x = \diag(\tilde\sigma_x)$.

We then apply \cref{thm:quantum-approx}, with approximation factor $\eps/3$, to the matrix $\wt \Sigma_x^{1/2}S_x^{-1}A$ and observe that we can implement one row query to $\widetilde \Sigma_x^{1/2} S_{x}^{-1} A$  using one query to $\widetilde \sigma_x$ and one row query to $S_{x}^{-1}A$. The cost of row queries to the latter follows from the observation that we can use $O(1)$ row queries to $A,b$, and time $O(r)$, to compute the slack of a single constraint.
Let $B$ be the matrix returned by the algorithm from \cref{thm:quantum-approx}.
With high probability, $B$ has $\tO(d)$ rows and satisfies 
\[
B^T B \approx_{\eps/3} A^T S_x^{-1} \widetilde \Sigma_x S_x^{-1} A.
\]
Since we also have 
\[
A^T S_x^{-1} \widetilde \Sigma_x S_x^{-1} A
\approx_{\eps/3} \wt H_V(x),
\]
this shows that 
\[
B^T B \approx_\eps  \wt H_V(x),
\]
where we used that $1-\eps \leq (1-\eps/3)^2$ and $(1+\eps/3)^2 \leq 1+\eps$.
\end{proof}

\begin{lemma}[Gradient of volumetric barrier] \label{lem:gradient vol}
Given query access to $A \in \R^{n \times d}$, $x \in \R^d$ and $b \in \R^n$, we give a quantum algorithm that with high probability computes a vector $\tilde g(x)$ such that $\|\tilde g(x)-g(x)\|_{H_V(x)^{-1}} \leq \zeta$ using $\tO(\sqrt{n}d/\zeta)$ row queries to $A$, $x$ and $b$, and time $\tO(\sqrt{n}d^2 r/\zeta + d^{\omega+1}/\zeta^2)$. 
\end{lemma}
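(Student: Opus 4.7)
The plan is to reduce the problem to a single application of \cref{thm:approx-mv}. The key observation is that
\[
g_V(x) = -A^T S_x^{-1} \Sigma_x \mathbf{1} = -B^T v,
\]
where $B = \Sigma_x^{1/2} S_x^{-1} A$ and $v = \Sigma_x^{1/2} \mathbf{1}$, chosen precisely so that $B^T B = \widetilde H_V(x)$. Since $\widetilde H_V(x) \preceq H_V(x)$ we have $H_V(x)^{-1} \preceq (B^T B)^{-1}$, so a $\zeta$-approximation in the $(B^T B)^{-1}$-norm is sufficient. This is what \cref{thm:approx-mv} delivers. Moreover $\|v\|_\infty \leq 1$ since leverage scores lie in $[0,1]$, which keeps the sample complexity small.

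\textbf{Handling the unknown leverage scores.} We cannot query $\sigma(S_x^{-1}A)$ exactly. First I would use \cref{thm:quantum-ls} (option 1) to set up query access to $\tilde\sigma \approx_{\eps'} \sigma(S_x^{-1}A)$ with accuracy $\eps' = \Theta(\zeta/\sqrt{d})$; each entry can thereafter be queried with one row-query to $A,b$ and $O(r^2)$ additional time. Set $\tilde B = \tilde\Sigma_x^{1/2} S_x^{-1} A$ and $\tilde v = \tilde\Sigma_x^{1/2}\mathbf{1}$, so that $\tilde B^T \tilde B \approx_{\eps'} B^T B$ and $\tilde B^T \tilde v = A^T S_x^{-1}\tilde\sigma$. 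Now invoke \cref{thm:approx-mv} on $(\tilde B, \tilde v)$ with accuracy $\delta = \zeta/4$ and output $\tilde g(x) = -\tilde y$.

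\textbf{Error analysis.} By triangle inequality (and noting $\tilde g(x) - g_V(x) = -(\tilde y - B^T v)$),
\[
\|\tilde g(x) - g_V(x)\|_{(B^T B)^{-1}} \leq \|\tilde y - \tilde B^T \tilde v\|_{(B^T B)^{-1}} + \|\tilde B^T \tilde v - B^T v\|_{(B^T B)^{-1}}.
\]
Using $\tilde B^T \tilde B \approx_{\eps'} B^T B$, the first term is at most $(1+O(\eps'))\delta \leq \zeta/2$. For the second, observe that $\tilde B^T \tilde v = B^T (D v)$ where $D = \tilde\Sigma_x \Sigma_x^{-1}$ is diagonal with $D_{ii} \in [1-\eps', 1+\eps']$; \cref{lem: mult error} then gives the bound $\eps'\|v\|_2 = \eps'\sqrt{\sum_i \sigma_i} \leq \eps'\sqrt{d} \leq \zeta/2$. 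Combining these yields $\|\tilde g(x) - g_V(x)\|_{H_V(x)^{-1}} \leq \|\tilde g(x) - g_V(x)\|_{(B^T B)^{-1}} \leq \zeta$.

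\textbf{Complexity and main obstacle.} \cref{thm:approx-mv} makes $\tO(\sqrt{n}d\|\tilde v\|_\infty/\delta) = \tO(\sqrt{n}d/\zeta)$ queries to $\tilde B,\tilde v$, translating to the same number of row queries to $A,b$. Each quantum sample incurs an extra $O(r^2)$ for the $\tilde\sigma$ lookup, which is dominated by the per-sample cost $O(rd)$ already present in \cref{thm:approx-mv}; the matrix-vec time is thus $\tO(r\sqrt{n}d^2/\zeta + d^\omega)$. Adding the leverage-score preprocessing cost of $\tO(r\sqrt{n}d/\zeta + d^{\omega+1}/\zeta^2)$ yields the claimed $\tO(\sqrt{n}d^2 r/\zeta + d^{\omega+1}/\zeta^2)$. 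The main subtlety is the accuracy balance: the factor $\|v\|_2 \leq \sqrt{d}$ inherited from \cref{lem: mult error} forces $\eps' = \Theta(\zeta/\sqrt{d})$, and it is this $\sqrt{d}$ loss that causes the leverage-score preprocessing (rather than the mean-estimation itself) to contribute the $d^{\omega+1}/\zeta^2$ bottleneck term.
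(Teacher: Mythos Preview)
Your proposal is correct and follows essentially the same approach as the paper's proof: set up $O(\zeta/\sqrt{d})$-accurate leverage scores via \cref{thm:quantum-ls} (direct option), apply \cref{thm:approx-mv} to $\tilde B=\tilde\Sigma_x^{1/2}S_x^{-1}A$ and $\tilde v=\tilde\Sigma_x^{1/2}\mathbf 1$, and control the perturbation error via \cref{lem: mult error} combined with $\|v\|_2\le\sqrt{d}$. The only cosmetic difference is that you run the triangle inequality in the $(B^TB)^{-1}$-norm and then convert the first term using $\tilde B^T\tilde B\approx_{\eps'}B^TB$, whereas the paper works directly in the $(\tilde B^T\tilde B)^{-1}$-norm; both are valid and yield the same bounds.
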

\begin{proof}
Let $\delta>0$ be a parameter that we choose later. We first use \cref{thm:quantum-ls} to set up query access to $\tilde \sigma_x$, a vector of $(1\pm \delta)$-multiplicative approximations to the leverage scores of $S_x^{-1}A$. We then let $\widetilde B = \widetilde \Sigma_x^{1/2} S_x^{-1} A$ and $\tilde v = \widetilde \Sigma_x^{1/2} \mathbf 1$ and use \cref{thm:approx-mv} to obtain a vector $\tilde y \in \R^d$ such that $\|\tilde y - \widetilde B^T \tilde v \|_{(\widetilde B^T \widetilde B)^{-1}} \leq \zeta$.
By \cref{eq:volumetric-approximation} and a triangle inequality we can bound
\begin{align*}
\|\tilde y - g(x)\|_{H_V(x)^{-1}} &\leq \|\tilde y - g(x)\|_{(\widetilde B^T \widetilde B)^{-1}} \leq \|\tilde y - \widetilde B^T \tilde v\|_{(\widetilde B ^T \widetilde B)^{-1}}  + \| g(x) - \widetilde B^T \tilde v\|_{(\widetilde B ^T \widetilde B)^{-1}}. 
\end{align*}
For the first term, we have the upper bound $\|\tilde y - \widetilde B^T \tilde v\|_{(\widetilde B ^T \widetilde B)^{-1}} \leq \zeta$ by construction. For the second term, we first rewrite $g(x)$ as 
\[
g(x) = A^T S_x^{-1} \Sigma_x \mathbf 1 = (A^T S_x^{-1} \widetilde \Sigma_x^{1/2} ) (\widetilde \Sigma_x^{-1/2} \Sigma_x \mathbf 1) = \widetilde B^T v'
\]
where $v' = \widetilde \Sigma_x^{-1/2} \Sigma_x \mathbf 1$. We then note that $v' = \Sigma_x \widetilde \Sigma_x^{-1}\tilde v$, i.e., entrywise $v'$ is a $(1\pm \delta)$-approximation of $\tilde v$. \cref{lem: mult error} thus shows that $\| g(x) - \widetilde B^T \tilde v\|_{(\widetilde B ^T \widetilde B)^{-1}} \leq \delta \|\tilde v\|_2 \leq \delta \sqrt{(1+\delta)d}$, where the last inequality uses that $\|\tilde v\|_2 = \sqrt{\sum_{i \in [n]} (\tilde \sigma_x)_i} \leq \sqrt{(1+\delta)d}$. Setting $\delta = \zeta/\sqrt{d}$ shows that $\|\tilde g - g(x)\|_{H_V(x)^{-1}} = O(\zeta)$. 

We finally establish the query and time complexity. We use \cref{thm:quantum-ls} to set up query access to~$\tilde \sigma_x$. Since $1/\delta^2>r$, it is advantageous to use the first (direct) approach that uses as preprocessing $\tO(\sqrt{n}d/\zeta)$ row queries to $S_x^{-1}A$ and time $\tO(d^{\omega+1}/\zeta^2 + r \sqrt{n}d/\zeta)$, and then allows us to query an entry of $\tilde \sigma_x$ at a cost of $1$ row query to $S_x^{-1} A$ and time $O(r^2)$.  The application of \cref{thm:approx-mv} uses $\tO(\sqrt{n} d/\zeta)$ row queries to $\widetilde \Sigma_x^{1/2} S_x^{-1} A$ and time $\tO(r\sqrt{n}d^2/\zeta + d^\omega)$ (here we are using that the time needed per query to $\widetilde \sigma_x$ is $r^2$ and thus less than $dr$).
\end{proof}

By \cref{thm:IPM-master} we can solve an LP to precision $\eps$ while making $O(\sqrt{\vartheta_f} \log(1/\eps))$ many approximate gradient and Hessian queries.
Combining \cref{lem:hessian vol} and \cref{lem:gradient vol} (with $\eps,\zeta = 1/\polylog(n)$), and using the fact that $\vartheta_V \in O(\sqrt{nd})$, this yields a quantum IPM based on the volumetric barrier which makes $\tO(n^{3/4} d^{5/4})$ row queries and uses time
\[
\tO(n^{1/4} d^{1/4} ( \sqrt{n} d^2 r + d^{\omega+1})),
\]
which is $n^{3/4} \poly(d,\log(n),1/\eps)$.\footnote{Strictly speaking, this complexity uses the hybrid barrier, see \cref{footnote:hybrid}.}
For $n$ sufficiently large the query and time complexity are sublinear and beat that of classical algorithms.
However, the $n^{3/4}$-scaling is worse than the $\sqrt{n}$-scaling of quantum algorithms based on the cutting plane method (\cref{app: cutting plane}).
Hence we turn to the final barrier.

\subsection{Lewis weight barrier} \label{sec:LW-barrier}

Finally, we discuss the Lewis weight barrier, as introduced by Lee and Sidford in \cite{LeeSidford19}.
For integer $p > 0$, it is defined as 
\[
\psi(x) = \ln \det(A^T S_x^{-1} W_x^{1-\frac{2}{p}} S_x^{-1} A) \quad \text{where } W_x = \Diag(w^{(p)}(S_x^{-1}A)).
\]
Its gradient is 
\[
g_\psi(x)
\coloneqq \nabla \psi(x) = - A^T S_x^{-1} W_x \mathbf 1.
\]
While the exact Hessian might again take a complicated form, we can approximate it by
\[
\wt H_\psi(x)
\coloneqq A^T S_x^{-1} W_x S_x^{-1} A. 
\]
Indeed, it holds that
\[
\wt H_\psi(x) \preceq \nabla^2 \psi(x) \preceq (1+p) \wt H_\psi(x).
\]

Letting $v_p = (p+2)^{3/2} n^{\frac{1}{p+2}} + 4 \max\{p,2\}^{2.5}$, Lee and Sidford showed \cite[Theorem 30]{LeeSidford19} that the barrier function $v_p^2 \cdot \psi(x)$ is a self-concordant barrier function with complexity $\vartheta_\psi \in O(d v_p^2)$.
We will be considering $p = \polylog(n)$, in which case the barrier has complexity
\[
\vartheta_\psi
\in \tO(d).
\]
For the complexity statements that follow, we assume $p = \polylog(n)$ so that polynomial factors in $p$ are absorbed in the $\tO$-notation.

To approximate the Hessian and gradient for the Lewis weight barrier, we follow the same approach as for the volumetric barrier, but with leverage scores replaced by Lewis weights.
The only part of the proofs that changes is the complexity analysis. 
\begin{lemma}[Hessian of the Lewis weight barrier] \label{lem:hessian lew}
Given query access to $A \in \R^{n \times d}$, $x \in \R^d$ and $b \in \R^n$, we give a quantum algorithm that with high probability computes a matrix $Q(x)$ such that $Q(x) \approx_\eps \wt H_\psi(x)$.
The algorithm uses $\tO(\sqrt{n}d^{7/2}/\eps^2)$ row queries to $A$, $x$, and $b$ and time $\tO\left(\frac{d^{9/2}}{\eps^3}\left(r^2\sqrt{nd}+d^\omega\right)\right)$.
\end{lemma}
\begin{proof}
    We follow the same proof strategy as the one we used for \cref{lem:hessian vol}.
    We refer to that proof for correctness of the choices of parameters that follow.
    We use \cref{thm:quantum-Lewis} to set up query access to $\tilde w_x$, a vector of $(1 \pm \eps/3)$-multiplicative approximations to the $\ell_p$-Lewis weights of $S_x^{-1} A$.
    This requires a preprocessing of $\tO(\sqrt{n} d^{7/2}/\eps^2)$ row queries to $S_x^{-1} A$ and time $\tO(\frac{d^{9/2}}{\eps^3} \cdot \left(r^2 \sqrt{nd} + d^{\omega}\right))$.
    Each query to $\tilde w_x$ then requires~$1$ row-query to $S_x^{-1} A$ and time $\tO(r^2 d^{3/2}/\eps)$.
    We then apply \cref{thm:quantum-approx}, with desired approximation factor $\eps/3$, to the matrix $\wt W_x^{1/2}S_x^{-1}A$ and observe that we can implement one row query to $\wt W_x^{1/2} S_{x}^{-1} A$ using one query to $\widetilde w_x$ and one row query to $S_{x}^{-1}A$.
    The cost of row queries to the latter follows from the observation that we can use $O(1)$ row queries and time $O(r)$ to compute the slack of a single constraint. Observe that the query and time complexity of setting up query access to $\tilde w_x$ dominate the complexity of the algorithm.
\end{proof}

\begin{lemma}[Gradient of the Lewis weight barrier] \label{lem:gradient lew}
Given query access to $A \in \R^{n \times d}$, $x \in \R^d$ and $b \in \R^n$, we give a quantum algorithm that with high probability computes a vector $\tilde g(x) \in \R^d$ such that $\|\tilde g(x)-g(x)\|_{H(x)^{-1}} \leq \zeta$ using $\tO(\sqrt{n}d^{9/2}/\zeta^2)$ row queries to $A$, $x$, and~$b$ and time $\tO\left(\frac{d^{6}}{\zeta^3}\left(r^2\sqrt{nd}+d^\omega\right)\right)$. 
\end{lemma}
\begin{proof}
        We follow the same proof strategy as the one we used for \cref{lem:gradient vol}.
        We refer to that proof for correctness of the choices of parameters that follow.
        The dominant term in both the query and time complexity is the use of \cref{thm:quantum-Lewis} to set up query access to $\tilde w_x$, a vector of $\zeta/\sqrt{d}$-multiplicative approximations to the leverage scores of $S_x^{-1} A$.
        This requires a preprocessing of $\tO(\sqrt{n} d^{9/2}/\zeta^2)$ row queries to $S_x^{-1} A$ and time\footnote{For convenience we replace the term $\min\{d^{\omega},d r^2\}$ by $d^{\omega}$.} $\tO(\sqrt{n} d^{13/2} r^2/\zeta^3 + d^{\omega+6}/\zeta^3)$.
        Each query to $\tilde w_x$ then requires~$1$ row-query to $S_x^{-1} A$ and time $\tO(r^2d^2/\zeta)$.
        As before, we then apply \cref{thm:approx-mv} to $\widetilde W_x^{1/2} S_x^{-1} A$ and $\widetilde W_x^{1/2} \mathbf 1$ with desired precision $\zeta$. This uses $\tO(\sqrt{n} d/\zeta)$ row queries to $\widetilde W_x^{1/2} S_x^{-1} A$ and time $\tO((r^2 d/\zeta)\sqrt{n}d/\zeta + d^\omega)$.
        Note that unlike in the proof of \cref{lem:gradient vol}, here the time needed per query to $W_x$ is $r^2 d^2/\zeta$ and this exceeds the additional computational cost of $d r$ that the multivariate quantum mean estimation algorithm uses per quantum sample.
        However, as for the previous lemma, observe that the query and time complexity of setting up query access to $\tilde w_x$ dominates the complexity of the algorithm.
\end{proof}

Again invoking \cref{thm:IPM-master}, we can solve an LP to precision $\eps$ while making $O(\sqrt{\vartheta_f} \log(1/\eps))$ many approximate gradient and Hessian queries.
Combining \cref{lem:hessian lew} and \cref{lem:gradient lew} (with $\eps,\zeta = 1/\polylog(n)$), and using the fact that $\vartheta_V \in \tO(d)$, this yields a quantum IPM based on the Lewis weight barrier which makes $\tO(\sqrt{n} d^5)$ row queries and uses time
\[
\tO(\sqrt{d} ( \sqrt{n}d^{13/2}r^2+d^{\omega+6} )),
\]
which is $\sqrt{n} \poly(d,\log(n),1/\eps)$.
This proves our main \cref{thm:quantum-IPM}.
For $n$ sufficiently large the query and time complexity are sublinear and beat that of classical algorithms.
The $\sqrt{n}$-scaling matches that of a quantum algorithm based on the cutting plane method (\cref{app: cutting plane}), although the $d$-dependence is significantly worse.
See \cref{sec:main-discussion} for a discussion on this benchmark comparison, and for open directions on how to improve the $d$-dependency of our IPM.

\section{Lower bounds} \label{sec: LB}

Here we discuss some relevant lower bounds.
In \cref{sec:spectral-approx-LB} we show that the complexity of our algorithm for spectrally approximating $A^T A$ is tight in terms of the number of row queries to $A$: $\Omega(\sqrt{nd})$ quantum row queries are needed.
In \cref{sec:LP-LB} we reproduce a bound from~\cite[Cor.~30]{vAGGdW:quantumSDP} which shows that $\Omega(\sqrt{nd}r)$ queries to $A$ are needed to solve LPs up to constant additive error, even when the entries of $A$, $b$, and $c$ are restricted to $\{-1,0,1\}$. In both sections we moreover discuss corresponding lower bounds in two classical query models: the (usual) randomized query complexity, and a quantum-inspired model from~\cite{chia22sampling}. 

We briefly introduce the quantum-inspired query model here. For vectors, we say that we have \textit{sampling and query access} access to a vector $v \in \R^n$, denoted $\mathrm{SQ}(v)$, if we can efficiently perform the following three operations: 
(1) \textit{query}: given an index $i \in [n]$, output the entry $v_i$. (2) \textit{sample}: sample an index $j \in [n]$ with probability $|v_j|^2/\|v\|^2$. (3) \textit{norm}: output the $\ell_2$-norm $\|v\|$ of $v$. For matrices, we say that we have \textit{sampling and query access} to a matrix $A \in \R^{n \times d}$, denoted $\mathrm{SQ}(A)$, if we have sampling and query access $\mathrm{SQ}(a_i)$ for each row $a_i$ of $A$, and also sampling and query access $\mathrm{SQ}((\|a_i\|)_{i \in [n]})$ to the vector of row norms.

\subsection{Spectral approximation lower bound} \label{sec:spectral-approx-LB}

Here we discuss several complementary lower bounds for spectral approximation.
The first one applies to arbitrary $n \geq d$ but constant $\eps$.
It easily proven by a reduction to quantum search.

\begin{theorem}
    For arbitrary $n \geq d$, there is a family of matrices $A \in \R^{n \times d}$ for which finding a constant factor spectral approximation of $A^T A$ requires $\Omega(\sqrt{nd})$ quantum row queries to $A$, and $\Omega(n)$ classical row queries to $A$. 
\end{theorem}
\begin{proof}
    Without loss of generality, assume $d$ divides $n$. 
    Given $d$ bitstrings $z_1,\ldots, z_d \in \{0,1\}^{n/d}$ we construct a matrix $A$ that consists of $d$ blocks of size $n/d \times d$: the $j$th block contains $z_j$ in the $j$th column and is zero elsewhere. Note that a row query to $A$ can be simulated with one query to one of the $z_j$'s. Moreover, the matrix $A^T A$ has a particularly simple form: $A^T A = \diag(|z_1|,|z_2|,\ldots,|z_d|)$.
    The diagonal of a spectral $1/2$-approximation of $A^T A$ contains $1/2$-approximations of the diagonal of $A^T A$ and therefore allows us to compute the string $(\OR(z_1),\OR(z_2),\ldots,\OR(z_d))$. To conclude the proof, we observe that computing the string $(\OR(z_1),\OR(z_2),\ldots,\OR(z_d))$ is known to require $\Omega(d \sqrt{n/d}) = \Omega(\sqrt{nd})$ quantum queries to the bitstrings $z_1,\ldots,z_d$. This follows for instance from combining the $\Omega(\sqrt{n/d})$ quantum query lower bound on $\OR$ over $n/d$ bits \cite{boyer1998tight} with the strong direct product theorem for quantum query complexity \cite{lee2013strong}.
    The randomized query complexity of the same problem can similarly be shown to be $\Omega(d \cdot n/d) = \Omega(n)$, using the randomized query complexity $\Theta(n/d)$ of $\OR$ over $n/d$ bits and a direct product (or direct sum) theorem for randomized query complexity~\cite{jain2010sum}.
\end{proof}

The second lower bound also shows optimality of the $\eps$-dependency, but it is restricted to $n \in O(d^2)$.
The bound follows from a lower bound on graph sparsification in
\cite{apers2022quantum}.
There it is shown that any quantum algorithm for computing an $\eps$-spectral sparsifier of a graph $G$ must make $\Omega(\sqrt{nd}/\eps)$ queries to its adjacency list, where $d$ is the number of vertices of $G$, $n \in O(d^2)$ is the number of edges of $G$, and $\eps \in \Omega(\sqrt{d/n})$.
This lower bound applies to our setting if we let $A$ denote the edge-vertex incidence matrix of a graph $G$, in which case an adjacency list query can be reduced to a row query to $A$, and returning an $\eps$-spectral approximation of $A$ is equivalent to returning an $\eps$-spectral sparsifier of $G$.
We hence get the following lower bound.

\begin{lemma}[{\cite[Theorem 2]{apers2022quantum}}] \label{lemma:spectral-LB}
For arbitrary $d$, $n \in O(d^2)$ and $\eps \in \Omega(\sqrt{d/n})$, there is a family of matrices $A \in \R^{n \times d}$ for which finding an $\eps$-spectral approximation of $A^T A$ requires $\wt\Omega(\sqrt{nd}/\eps)$ quantum row queries to $A$.
\end{lemma}
In contrast, randomized classical algorithms exist that produce an $\eps$-spectral sparsifier of a graph $G$ in time $\tO(m)$, which is known to be optimal, see e.g.~the discussion in \cite{apers2022quantum} for more details. 

The third lower bound applies to the quantum-inspired sampling and query access model. 
\begin{lemma}
    For arbitrary $n \geq d \geq 2$, there is a family of matrices $A \in \R^{n \times d}$ for which finding a constant factor spectral approximation of $A^T A$ requires $\Omega(n)$ queries to $\mathrm{SQ}(A)$ and $\mathrm{SQ}(A^T)$.
\end{lemma}
\begin{proof}
    We consider the family of matrices that are constructed as follows: let $z \in \{0,1\}^n$ with $|z|=0$ or $|z|=1$. We let $A$ be the all-ones matrix of size $n \times d$, except for the last column whose $i$th entry is $(-1)^{z_i}$. The absolute value of each of the entries of $A$ is equal to $1$. We can thus implement sampling access to each of the rows (or columns) without queries to $z$. The same holds for the vector of row-norms (or column-norms). A query to an entry of a row can be implemented using at most one query to $z$. 

    The $\Omega(n)$-query lower bound then follows by observing that $A^TA$ has rank $1$ if $|z|=0$ and rank $2$ if $|z|=1$. A constant factor spectral approximation can be used to determine the rank of $A^T A$, and thus $\OR(z)$. 
\end{proof}

\subsection{LP solving lower bound} \label{sec:LP-LB}

To establish the lower bounds in the quantum and randomized query model, we will use a small variation of~\cite[Cor.~30]{vAGGdW:quantumSDP} where we include the row sparsity $r$ as a parameter. Their lower bound corresponds to setting $r=d$. Since the bound below is only a minor variation of their bound, we only give a sketch of how to recover it from their arguments (one mainly has to take care to transform their LP from canonical form to the form we use).
\begin{theorem}[{Variation of \cite[Cor.~30]{vAGGdW:quantumSDP}}]
    There is a family of linear programs of the form $\min\ c^T x$ s.t.~$Ax \geq b$ with $A \in \R^{n \times d}$ for which determining the optimal value up to additive error $<1/2$, with probability $\geq 2/3$, requires at least $\Omega(\sqrt{nd}r)$ quantum queries to $A$, or at least $\Omega(nr)$ classical queries to~$A$. 
\end{theorem}

\begin{proof}[Proof sketch]

In~\cite{vAGGdW:quantumSDP} the authors obtain a quantum query lower bound on the cost of solving linear programs by constructing a linear program whose optimal value computes a Boolean function. 

Concretely, consider the $\MAJ_a$-$\OR_b$-$\MAJ_c$ function that takes an input $Z \in \{0,1\}^{a \times b \times c}$ and outputs 
\begin{align*}
\MAJ_a( & \\
        &\OR_b(\MAJ_c(Z_{111},\ldots,Z_{11c}), \ldots, \MAJ_c(Z_{1b1},\ldots,Z_{1bc})), \\
        &\ldots,\\
        &\OR_b(\MAJ_c(Z_{a11},\ldots,Z_{a1c}), \ldots, \MAJ_c(Z_{ab1},\ldots,Z_{abc})) \\
        &)
\end{align*}
The quantum query complexity of $\MAJ_a$-$\OR_b$-$\MAJ_c$ is known to be $\Theta(a \sqrt{b} c)$, which follows from the known quantum query complexities of $\MAJ_a$ and $\OR_b$, and the composition of quantum query complexity~\cite{Reichardt11reflections}. The randomized query complexity of $\MAJ_a$-$\OR_b$-$\MAJ_c$ is similarly known to be $\Theta(a bc)$, using the known randomized query complexities of $\OR_b$ ($\Theta(b)$), $\MAJ_a$ ($\Theta(a)$) and a suitable composition theorem. We note that while the composition of randomized query complexity is an open problem in general, it is known to hold for example when the outer function is symmetric, or has full query complexity (both of which hold here)~\cite{chakraborty23composition}. 

We then consider the following linear program where we let $Z^i$ be the $c \times b$ matrix whose $jk$-th entry is $Z_{ikj}$:
\begin{align*}
\max\ &\ \sum_{k=1}^c w_k \\
\text{s.t. } & \ 
\begin{pmatrix}
I_c & -Z^1 & -Z^2 & \cdots & -Z^a \\ 
& \mathbf{1}_b^T &  & & \\
& & \mathbf 1_b^T & & \\
& & & \ddots & & \\
& & \mathbf 1_b^T & & \mathbf 1_b^T\\
\end{pmatrix} \begin{pmatrix} w \\ v^1 \\ \vdots \\ \\
v^a \end{pmatrix} \leq 
\begin{pmatrix}
0_c \\ \mathbf 1_a     
\end{pmatrix}
\\
&\ w \in \R^c_+, v^1, \ldots, v^a \in \R^b_+
\end{align*}
In \cite{vAGGdW:quantumSDP} it is shown that the optimal value of this linear program is $\sum_{i=1}^a \max_{j \in [b]} \sum_{\ell=1}^c Z_{ij\ell}$; when restricted to the ``hard instances'' for each of the Boolean functions, this integer determines the value of the $\MAJ_a$-$\OR_b$-$\MAJ_c$ function on $Z$ (we refer to \cite{vAGGdW:quantumSDP} for a more detailed description of the conditions on $Z$).

The number of linear inequality constraints is $c+a$ and the number of nonnegative variables is $c+ab$. Each column of the constraint matrix has at most $c+1$ nonzero entries. The primal problem thus has more variables than constraints, we therefore consider the dual linear program, which by strong duality has the same objective value.  The dual of the above LP will thus have $d = c+a$ nonnegative variables, $c+ab$ constraints, and a constraint matrix that has sparsity $r = c+1$. To bring the LP into our desired form we view the $d$ nonnegativity constraints as inequality constraints on $d$ real-valued variables. This brings the total number of linear inequality constraints to $n= 2c + a(b+1)$.
We express $a$, $b$, $c$ in terms of $r$, $n$, $d$ as follows: $c = r-1$, $a = d-r+1$ and  $b=\frac{n-2(r-1)}{d-r+1}-1$. Note that as long as $r \leq d/2$, we have $a = \Theta(d)$, $b = \Theta(n/d)$, and $c = \Theta(r)$. This implies that finding an additive $\eps=1/3$ approximation of the optimal value of the dual LP requires $\Omega(d \sqrt{n/d}r) = \Omega(\sqrt{nd}r)$ quantum queries to the constraint matrix $A$, or $\Omega(nr)$ classical queries to $A$.
\end{proof}

Currently there is a gap between the quantum query lower bound and upper bound. While we expect to be able to improve the upper bound, it is possible that the lower bound can be strengthened.

We finally establish a lower bound that applies to the quantum-inspired sampling and query access model.

\begin{lemma}
    There is a family of linear programs of the form $\min\ c^T x$ s.t.~$Ax \geq b$ with $A \in \R^{n \times d}$ for which determining the optimal value up to additive error $<1/2$, with probability $\geq 2/3$, requires at least $\Omega(n)$ queries to $\mathrm{SQ}(A)$, $\mathrm{SQ}(A^T)$, $\mathrm{SQ}(b)$, and $\mathrm{SQ}(c)$.  
\end{lemma}
\begin{proof}
    Let $c$ be the all-ones vector, $A$ the $n \times d$ matrix that consists of $d$ blocks of size $n/d \times d$: the $j$-th block contains the all-ones vector in the $j$th column and is zero elsewhere. Now consider a bitstring $z \in \{0,1\}^n$ and construct the vector $b \in \{-1,1\}^n$ by setting $b_i = -(-1)^{z_i}$. When $|z|=0$, the optimal value of the LP $\min\ c^T x$ s.t.~$Ax \geq b$ is $-n$, corresponding to the optimal solution whose entries are all $-1$. When $|z|=1$, the optimal value is $-n+2$: the index $i \in [n]$ for which $z_i=1$ forces one coordinate to be at least $1$. Determining whether $|z|=0$ or $|z|=1$ suffices to compute $\OR(z)$, which is known to require $\Omega(n)$ classical queries to $z$. Finally note that the sampling and query access to $A$, $b$, and $c$ can be simulated using at most $1$ query to $z$: $A$ and $c$ are independent of $z$, sampling an index according to $|b_i|^2/\|b\|^2$ amounts to drawing a $j \in [n]$ uniformly at random, and a query to $b$ is equivalent to a query to $z$. 
\end{proof}

\section*{Acknowledgments}

We thank Aaron Sidford for pointing out that Lee's algorithm for approximating Lewis weights~\cite{YinTatThesis} only proves a 1-sided bound.
In a follow-up work \cite{apers2024lewis} we, together with Sidford, describe a new algorithm for obtaining a 2-sided bound, and we use this algorithm here.\footnote{In a first version of this manuscript we claimed Lee's algorithm provides a 2-sided bound. However, as Sidford pointed out to us, it only provides a 1-sided bound. Using the new algorithm from \cite{apers2024lewis} solves this issue, but has increased the degree of the $\poly(d)$ factor in a few places (e.g., solving LPs with the Lewis weight barrier now has quantum query complexity $\tO(\sqrt{n} d^5)$ instead of the previously claimed~$\tO(\sqrt{n} d^3)$).}

We furthermore thank anonymous referees for their careful reading of our manuscript and their useful comments, which helped improve the exposition of our work. 

Part of the exposition of the quantum algorithm for spectral approximation is based on the report of an internship at IRIF by Hugo Abreu and Hugo Thomas, which we are thankful for.
We also thank Adrian Vladu for useful discussions on IPMs and Lewis weights, and Arjan Cornelissen for useful discussions on quantum mean estimation.

This work was supported by QUDATA project ANR-18-CE47-0010 and the European QuantERA project QOPT (ERA-NET Cofund 2022-25). The work of SA was additionally supported in part by the French PEPR integrated projects EPiQ (ANR-22-PETQ0007) and HQI (ANR-22-PNCQ-0002), and the French ANR project QUOPS (ANR-22-CE47-0003-01).

\bibliographystyle{plain}
\bibliography{references}

\appendix 

\section{Quantum algorithm based on cutting plane method}
\label{app: cutting plane}

The currently fastest cutting plane method to solve convex optimization problems is due to Lee, Sidford, and Wong~\cite{LeeSidfordWong15}. For comparison, we state its complexity here and we show how to quantize it in the setting of linear programming. 
\begin{theorem}[Informal, Lee-Sidford-Wong~\cite{LeeSidfordWong15}]
Let $K \subseteq \R^d$ be a convex body contained in a (known) ball of radius $R$ and let $\eps>0$. Then with $O(d \log(dR/\eps))$ queries to a separation oracle for $K$ and $\widetilde O(d^3)$ additional work, we can either find a point in $K$ or prove that $K$ does not contain a ball of radius $\eps$. 
\end{theorem}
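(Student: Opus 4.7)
My plan is to follow the standard cutting-plane framework, as refined by Lee, Sidford, and Wong, and treat the result essentially as a black-box citation whose proof I sketch at a high level. We maintain a convex ``localization'' region $E^{(t)} \supseteq K$, initialized to $E^{(0)} = B(0,R)$. At each iteration $t$ we choose some ``center'' $x^{(t)} \in E^{(t)}$ and query the separation oracle at $x^{(t)}$: if $x^{(t)} \in K$, we are done; otherwise the oracle returns a halfspace $H^{(t)}$ containing $K$ but not $x^{(t)}$, and we set $E^{(t+1)} = E^{(t)} \cap H^{(t)}$. The two goals are (i) to show that an appropriate potential on $E^{(t)}$ drops by $\Omega(1)$ per iteration, so that after $O(d\log(dR/\eps))$ rounds either $x^{(t)} \in K$ or $E^{(t)}$ is certified not to contain a ball of radius $\eps$, and (ii) to show that the per-iteration work is $\tO(d^2)$ amortized, for total extra work $\tO(d^3)$.

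For the iteration bound I would use a Vaidya/hybrid-barrier potential. After $t$ cuts we may write $E^{(t)} = \{x : A^{(t)} x \leq b^{(t)}\} \cap B(0,R)$ and consider the volumetric-type potential $\Phi^{(t)}(x) = \tfrac{1}{2}\log\det\bigl((A^{(t)})^T (S^{(t)}_x)^{-2} A^{(t)}\bigr)$ plus lower-order log-barrier terms, where $S^{(t)}_x$ is the diagonal slack matrix for $A^{(t)} x \leq b^{(t)}$. Taking $x^{(t)}$ to be an approximate minimizer of $\Phi^{(t)}$ (the volumetric center), a standard calculation of how $\Phi$ changes when one adds a new constraint passing through $x^{(t)}$ shows that $\Phi^{(t+1)}(x^{(t+1)}) \geq \Phi^{(t)}(x^{(t)}) + \Omega(1)$, while the initial value is $O(d\log(dR/\eps))$ and any $\eps$-ball inside $E^{(t)}$ forces $\Phi^{(t)}$ to be at most a value of the same magnitude. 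Combining these gives the $O(d\log(dR/\eps))$ query bound.

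The hard part is the $\tO(d^3)$ total runtime, since naively recomputing the volumetric center would cost $\Omega(d^{\omega})$ per iteration (and $t$ itself can grow). Lee, Sidford, and Wong surmount this by maintaining, under rank-one updates, an approximate data structure for the leverage scores of the current constraint matrix together with approximate Newton information for $\Phi^{(t)}$, and by refreshing only those rows whose leverage scores have changed substantially. An amortized analysis — the technical heart of~\cite{LeeSidfordWong15} — shows that the cumulative refresh cost across all iterations is $\tO(d^3)$; combined with the iteration bound this yields the stated guarantee. I would defer the intricate amortization bookkeeping entirely to~\cite{LeeSidfordWong15}, since reproducing it would be both lengthy and well beyond the scope of this appendix, whose purpose is only to record the complexity for comparison with our IPM-based bounds.
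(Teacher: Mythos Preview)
The paper does not prove this theorem at all: it is stated as an informal black-box citation of the Lee--Sidford--Wong result~\cite{LeeSidfordWong15}, with no accompanying proof or sketch. The appendix's sole purpose is to record the classical complexity and then observe that the separation oracle can be implemented with Grover search; the cutting-plane theorem itself is imported wholesale.

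Your proposal is therefore not wrong in content --- the high-level story of volumetric centers, $\Omega(1)$ potential drop per cut, and amortized leverage-score maintenance is a fair summary of~\cite{LeeSidfordWong15} --- but it goes well beyond what the paper does, and you yourself conclude by deferring the substantive part to the reference. For the purposes of matching the paper, a one-line ``This is~\cite{LeeSidfordWong15}; no proof is given here'' is all that is required.
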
 

Together with a binary search on the objective value of $\min\  c^T x$ s.t.~$A x \geq b$ and a quantum implementation of a separation oracle (see below), this allows one to solve tall LPs quantumly in time 
\[
\widetilde O(\sqrt{n} d r + d^3).
\]
The above mentioned quantum implementation of a (weak) separation oracle is a simple application of Grover's algorithm. Indeed, given a vector $x \in \R^d$, we need to determine if $Ax \geq b$. If not, then we need to return a hyperplane that separates $x$ from the feasible region. We do so by using Grover's algorithm to search for a violated inequality: a violated inequality will separate $x$ from the feasible region, whereas $x$ is feasible if none are found. Verifying a single inequality $a_i^T x \geq b_i$ for $i \in [n]$ can be implemented using one row query to $A$, a single query to $b$, and time~$\tO(r)$. This implements a separation oracle in time $O(\sqrt{n} r)$.

\section{Omitted proofs about robustness of IPMs}

\label{sec: proofs robust update}

In this section we prove \cref{thm:IPM-master}. As discussed in \cref{sec:robustness}, the arguments are relatively standard. Here we give an exposition that closely follows that of Renegar~\cite{Renegar88,Renegar01}, but similar statements can be found e.g.~in~\cite{NesterovNemirovskii93}.

\subsection{Self-concordant barriers and path-following}

We consider an optimization problem of the form
\[
\min_{x \in D} c^T x
\]
for some open, bounded, full-dimensional, convex region $D \subset \R^d$ and a cost function $c \in \R^d$.
Let $\mathrm{val}$ denote the optimum.
We turn this into an unconstrained optimization problem by using a \emph{barrier function} $f:D \to \R$ that satisfies $f(x) \to \infty$ as $x$ approaches the boundary of $D$.
We assume that $f$ is twice continuously differentiable with gradient $g(x)$ and positive definite Hessian~$H(x)$.

We first recall that the Hessian defines a ``local'' inner product by
\[
\langle a,b \rangle_{H(x)}
\coloneqq \langle a,b \rangle_x
\coloneqq a^T H(x) v
\]
for $a,b \in \R^d$.
In turn this defines the local norm $\|v\|_{H(x)} \coloneqq \|v\|_x \coloneqq \sqrt{v^T H(x) v}$. 
Note that the gradient $g(x)$ and Hessian $H(x)$ are defined with respect to the standard inner product.
We will use $g_x(y):=H(x)^{-1}g(y)$ and $H_x(y) = H(x)^{-1}H(y)$ to denote the gradient and Hessian at $y$ with respect to the local inner product at $x$.
In particular, note that $g_x(x) = H^{-1}(x) g(x) =: - n(x)$ is the Newton step at $x$ and $H_x(x) = I$.

\subsubsection{Self-concordance} \label{sec:SCB}

As an additional requirement, we impose that $f$ is ``self-concordant''.
In the following, let $B_x(y,r)$ denote the open ball of radius $r$ centered at $y$ in the norm $\|\cdot \|_x$. 

\begin{definition}[Self-concordance] \label{def:self conc}
The function $f$ is \emph{self-concordant}\footnote{Technically speaking, we assume strongly nondegenerate self-concordant, where the term ``strongly'' refers to the requirement $B_x(x,1) \subseteq D$, and ``nondegenerate'' refers to a positive definite Hessian at each $x$ (and hence a local inner product at each $x$). } if for all $x \in D$ we have $B_x(x,1) \subseteq D$, and if whenever $y \in B_x(x,1)$ we have 
\[
1-\|y-x\|_x \leq \frac{\|v\|_y}{\|v\|_x} \leq \frac{1}{1-\|y-x\|_x} \quad \text{for all } v \neq 0.
\]
\end{definition}
For example, if in the above \cref{def:self conc} we have $\|y-x\|_x \leq 1/2$, then $\frac{1}{2} \|v\|_x \leq \|v\|_y \leq 2 \|v\|_x$ for all $v$. 

The self-concordance condition is roughly the same as Lipschitzness of the Hessian of $f$, in the following way.  
(Note that $\frac{1}{(1-\|y-x\|_x)^2}-1 \approx 2\|y-x\|_x$ for small $\|y-x\|_x$, and $H_x(x)=I$.)
\begin{theorem}[{\cite[Thrm.~2.2.1]{Renegar01}}] \label{thrm: Hessian bounds for self-concordant}
Assume $f$ has the property that $B_x(x,1) \subseteq D$ for all $x \in D$. Then $f$ is self-concordant if and only if for all $x \in D$ and $y \in B_x(x,1)$ we have 
\begin{equation}
\|H_x(y)\|_x, \|H_x(y)^{-1}\|_x \leq \frac{1}{(1-\|y-x\|_x)^2};
\end{equation}
likewise, $f$ is self-concordant if and only if 
\begin{equation}
\|I-H_x(y)\|_x,\|I-H_x(y)^{-1}\|_x \leq \frac{1}{(1-\|y-x\|_x)^2}-1.
\end{equation}
\end{theorem}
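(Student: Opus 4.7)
The plan is to reduce everything to a statement about eigenvalues of $H_x(y) = H(x)^{-1} H(y)$ in the local inner product $\langle\cdot,\cdot\rangle_x$, and then match up the three conditions (the self-concordance ratio bound, the operator norm bounds on $H_x(y)$, $H_x(y)^{-1}$, and the operator norm bounds on $I - H_x(y)$, $I - H_x(y)^{-1}$) via elementary algebra on those eigenvalues.

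First I would record the key linear-algebraic observation: although $H_x(y)$ is not in general symmetric with respect to the standard inner product, it is self-adjoint with respect to $\langle\cdot,\cdot\rangle_x$, since $\langle u, H_x(y) v\rangle_x = u^T H(x) H(x)^{-1} H(y) v = u^T H(y) v$ is symmetric in $u,v$. Since $H(y) \succ 0$, the Rayleigh quotients of $H_x(y)$ in the $x$-inner product are strictly positive. Hence $H_x(y)$ is diagonalizable with a $\langle\cdot,\cdot\rangle_x$-orthonormal eigenbasis and positive eigenvalues $0 < \lambda_{\min} \leq \cdots \leq \lambda_{\max}$, with
\[
\frac{\|v\|_y^2}{\|v\|_x^2} = \frac{v^T H(y) v}{v^T H(x) v} = \frac{\langle v, H_x(y) v\rangle_x}{\langle v, v\rangle_x},
\]
so the set of values attained by $\|v\|_y^2/\|v\|_x^2$ as $v$ ranges over nonzero vectors is exactly $[\lambda_{\min}, \lambda_{\max}]$. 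Let $r \coloneqq \|y-x\|_x \in [0,1)$.

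For the first equivalence, note that the operator norm $\|\cdot\|_x$ on self-adjoint operators equals the largest absolute eigenvalue; since $H_x(y)$ and its inverse are positive, $\|H_x(y)\|_x = \lambda_{\max}$ and $\|H_x(y)^{-1}\|_x = 1/\lambda_{\min}$. Self-concordance, written as $(1-r)^2 \leq \|v\|_y^2/\|v\|_x^2 \leq 1/(1-r)^2$ for all nonzero $v$, is therefore equivalent to $(1-r)^2 \leq \lambda_{\min}$ and $\lambda_{\max} \leq 1/(1-r)^2$, which is in turn equivalent to the pair of bounds $\|H_x(y)\|_x, \|H_x(y)^{-1}\|_x \leq 1/(1-r)^2$.

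For the second equivalence I would work with the eigenvalues of $I - H_x(y)$ and $I - H_x(y)^{-1}$, namely $1 - \lambda_i$ and $1 - 1/\lambda_i$. The forward direction uses the simple inequality that for $0 < r < 1$ one has $1 - (1-r)^2 \leq 1/(1-r)^2 - 1$, so assuming self-concordance gives $|1 - \lambda_i|, |1 - 1/\lambda_i| \leq 1/(1-r)^2 - 1$ for all $i$, yielding the stated operator norm bounds. For the converse, the bound $\|I - H_x(y)\|_x \leq 1/(1-r)^2 - 1$ forces $\lambda_{\max} \leq 1/(1-r)^2$, while $\|I - H_x(y)^{-1}\|_x \leq 1/(1-r)^2 - 1$ forces $1/\lambda_{\min} \leq 1/(1-r)^2$, i.e., $\lambda_{\min} \geq (1-r)^2$; together these give the two-sided bound on $\|v\|_y^2/\|v\|_x^2$, i.e., self-concordance.

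The only subtle point is making sure the operator norm in the definition is consistently the one induced by $\langle\cdot,\cdot\rangle_x$ (so that it agrees with the largest absolute eigenvalue for $\langle\cdot,\cdot\rangle_x$-self-adjoint maps); everything else is straightforward once that is settled. I do not expect a genuine obstacle here, only the bookkeeping of the two one-sided bounds and the observation that one must use \emph{both} inequalities ($\|I-H_x(y)\|_x$ and $\|I-H_x(y)^{-1}\|_x$) to recover the lower eigenvalue bound $(1-r)^2$, since $\|I - H_x(y)\|_x$ alone would only give the weaker $\lambda_{\min} \geq 2 - 1/(1-r)^2$.
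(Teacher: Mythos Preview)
The paper does not actually prove this theorem; it is quoted verbatim as \cite[Thrm.~2.2.1]{Renegar01} and used as a black box. Your argument is correct and is essentially the standard proof: reduce to the eigenvalues of the $\langle\cdot,\cdot\rangle_x$-self-adjoint operator $H_x(y)$, identify $\|v\|_y^2/\|v\|_x^2$ with its Rayleigh quotient, and then match the three formulations via elementary inequalities on $\lambda_{\min},\lambda_{\max}$. Your remark that both bounds in the second formulation are needed for the converse (since $\|I-H_x(y)\|_x$ alone only yields $\lambda_{\min}\ge 2-1/(1-r)^2$) is exactly the right point.
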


A last quantity that we associate to a barrier function is its \emph{complexity}, which is defined below.
We use it to bound the complexity of path-following in the next section.

\begin{definition}[Barrier complexity]
A functional $f$ is called a \emph{(strongly nondegenerate self-concordant) barrier} if $f$ is self-concordant and 
\[
\vartheta_f:= \sup_{x \in D} \|g_x(x)\|_x^2 < \infty. 
\] 
The value $\vartheta_f$ is called the \emph{complexity} of $f$.
\end{definition}

\subsubsection{Path-following}

Path-following requires us to optimize a self-concordant function, given an approximate minimizer.
For this we can use Newton's method, which makes updates of the form
\begin{equation}
x_+ \coloneqq x- H(x)^{-1} g(x) = x + n(x),
\end{equation}
where $n(x)$ denotes the Newton step at $x$.
The next theorem shows the two key properties of the (exact) Newton step: its size decreases essentially quadratically, and its size bounds the distance to the minimizer of $f$.
Our later analysis for approximate Newton steps builds on this analysis for exact Newton steps.

\begin{theorem}[{(Exact) Newton steps \cite[Thrm.~2.2.4-5]{Renegar01}}] \label{thm:exact-Newton}
Assume $f$ is self-concordant.
\begin{itemize}
\item
If $\|n(x)\|_x <1$, then $\|n(x_+)\|_{x_+} \leq \left(\frac{\|n(x)\|_x}{1-\|n(x)\|_x}\right)^2$.
\item
If $\|n(x)\|_x \leq 1/4$ for some $x \in D$, then $f$ has a minimizer $z$ and 
\[
\|x_+-z\|_x \leq \frac{3\|n(x)\|_x^2}{(1-\|n(x)\|_x)^3}.
\]
Thus
\[
\|x-z\|_x \leq \|n(x)\|_x + \frac{3\|n(x)\|_x^2}{(1-\|n(x)\|_x)^3} \leq (5/6)^2.
\]
\end{itemize}
\end{theorem}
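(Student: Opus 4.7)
The result packages two standard consequences of self-concordance: quadratic contraction of the Newton step, and existence of a minimizer together with a distance bound. The main tool in both parts is the Hessian-stability estimate $\|I-H_x(y)\|_x \leq \frac{1}{(1-\|y-x\|_x)^2}-1$ from \cref{thrm: Hessian bounds for self-concordant}, combined with the fundamental theorem of calculus for the gradient (\cref{thrm:fundamental thrm of calculus}).

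\textbf{Quadratic contraction.} Set $r \coloneqq \|n(x)\|_x<1$ so that $x_+=x+n(x)\in B_x(x,r)\subseteq D$. Since $g(x)=-H(x)n(x)$, I would write
\[
g(x_+) = g(x) + \int_0^1 H(x+tn(x))\,n(x)\,dt = \int_0^1 \bigl[H(x+tn(x))-H(x)\bigr]\,n(x)\,dt,
\]
and then bound $\|n(x_+)\|_{x_+} = \|g(x_+)\|_{H(x_+)^{-1}}$ in two steps. First, since $\|v\|_{x_+} \geq (1-r)\|v\|_x$ by self-concordance (equivalently $H(x_+)^{-1}\preceq \frac{1}{(1-r)^2}H(x)^{-1}$), we get $\|g(x_+)\|_{H(x_+)^{-1}} \leq \frac{1}{1-r}\|g(x_+)\|_{H(x)^{-1}}$. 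Second, noting that $\|H(x)u\|_{H(x)^{-1}}=\|u\|_x$, the integrand satisfies
\[
\bigl\|[H(x+tn(x))-H(x)]\,n(x)\bigr\|_{H(x)^{-1}} = \bigl\|[H_x(x+tn(x))-I]\,n(x)\bigr\|_x \leq \Bigl(\tfrac{1}{(1-tr)^2}-1\Bigr)\,r,
\]
by \cref{thrm: Hessian bounds for self-concordant} applied at $y=x+tn(x)$ with $\|y-x\|_x=tr$. A direct computation gives $\int_0^1\bigl(\tfrac{1}{(1-tr)^2}-1\bigr)\,dt = \frac{r}{1-r}$, so $\|g(x_+)\|_{H(x)^{-1}}\leq \frac{r^2}{1-r}$, and the two bounds combine to $\|n(x_+)\|_{x_+}\leq \bigl(\tfrac{r}{1-r}\bigr)^2$.

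\textbf{Existence of a minimizer and distance bound.} Assuming $r=\|n(x)\|_x\leq 1/4$, I iterate the Newton update: $x_0=x$, $x_{k+1}=x_k+n(x_k)$, and $r_k=\|n(x_k)\|_{x_k}$. The contraction bound gives $r_{k+1}\leq (r_k/(1-r_k))^2$, so starting from $r_0\leq 1/4$ yields $r_1\leq 1/9$ and more generally $r_k$ tends to $0$ super-quadratically; in particular $\sum_k r_k$ is a fast-converging geometric-type series. To show the $x_k$ form a Cauchy sequence in the $\|\cdot\|_x$-norm, I bound $\|x_{k+1}-x_k\|_x = \|n(x_k)\|_x$ by transferring from the local norm at $x_k$ to the one at $x$ using self-concordance iteratively: $\|v\|_x \leq \frac{1}{1-\|x_k-x\|_x}\|v\|_{x_k}$ as long as $\|x_k-x\|_x<1$, which holds throughout the iteration because the tail sums stay bounded by $\tfrac{r}{1-r}$. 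Summing these increments telescopes to an explicit bound of the form $\|x_+-z\|_x\leq\sum_{k\geq 1}\|x_{k+1}-x_k\|_x\leq \tfrac{3r^2}{(1-r)^3}$, where $z=\lim_k x_k$ is a minimizer by continuity of $g$ (since $\|n(x_k)\|_{x_k}\to 0$ and $H\succ 0$ forces $g(z)=0$). The final displayed bound on $\|x-z\|_x$ is then just the triangle inequality $\|x-z\|_x\leq \|n(x)\|_x+\|x_+-z\|_x$, and plugging $r\leq 1/4$ into $r+\tfrac{3r^2}{(1-r)^3}$ gives exactly $25/36=(5/6)^2$.

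\textbf{Main obstacle.} The first part is a clean one-page integration argument. The technical overhead sits entirely in the second part, specifically in justifying that (i) all iterates $x_k$ remain inside $B_x(x,1)$ so that the self-concordance bounds can be invoked to translate $\|n(x_k)\|_{x_k}$ into increments measured in $\|\cdot\|_x$, and (ii) the resulting series sums to the claimed constant $\tfrac{3r^2}{(1-r)^3}$. Both are standard careful bookkeeping — the induction is that at each step the tail bound from the geometric-type series of $r_k$'s is small enough to keep the accumulated displacement strictly less than $1$ — but they are the delicate part of the argument.
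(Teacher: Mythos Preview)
The paper does not prove this theorem: it is stated in the appendix as a known result imported verbatim from Renegar's textbook \cite[Thrm.~2.2.4--5]{Renegar01}, and is used only as a black box in the proofs of \cref{lem: progress bound approx} and \cref{lem: potential gap}. So there is no ``paper's own proof'' to compare against.

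That said, your sketch is correct and is essentially the proof given in Renegar's book. The first part is exactly Renegar's argument: combine the integral representation of $g(x_+)$ from \cref{thrm:fundamental thrm of calculus} with the Hessian-stability bound of \cref{thrm: Hessian bounds for self-concordant}, and the computation $\int_0^1\bigl(\tfrac{1}{(1-tr)^2}-1\bigr)\,dt=\tfrac{r}{1-r}$ is the right one. For the second part your plan (iterate Newton, show the sequence is Cauchy in $\|\cdot\|_x$, identify the limit as a critical point) is again the textbook route; the only thing you have left implicit is the inductive verification that $\|x_k-x\|_x<1$ throughout and the precise summation yielding the constant~$3$, which you correctly flag as the bookkeeping step. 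Plugging $r=1/4$ into $r+\tfrac{3r^2}{(1-r)^3}$ indeed gives $1/4+16/36=25/36=(5/6)^2$.
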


In an IPM we use $f$ to denote the barrier function that encodes the convex set $D$, and we solve the unconstrained problem $\min_x f_\eta(x)$ where
\[
f_\eta(x)
\coloneqq \eta c^T x + f(x),
\]
for $\eta > 0$.
We let $z(\eta)$ denote the minimizer of $f_\eta$, and call $\{z(\eta)\}_{\eta>0}$ the \emph{central path}.
If $f$ is self-concordant with complexity $\vartheta_f$, then we have the following two properties, see Eqs.~(2.12) and~(2.13) in~\cite{Renegar01}.
\begin{enumerate}
\item
As $\eta\to \infty$, $z(\eta)$ approaches the minimizer of $\langle c,x\rangle$ over $D$:
\begin{equation*}
c^T z(\eta)
\leq \mathrm{val} + \frac{1}{\eta} \vartheta_f. 
\end{equation*}
\item
Moreover, for $y \in B_{z(\eta)}(z(\eta),1)$ we have
\begin{equation*}
c^T y
\leq \mathrm{val} + \frac{1}{\eta} \vartheta_f + \frac{1}{\eta} \vartheta_f \|y-z(\eta)\|_{z(\eta)}
\leq \mathrm{val} + \frac{2}{\eta} \vartheta_f. 
\end{equation*}
\end{enumerate}

The idea of \emph{path-following} is then to start from a point close to the central path for some $\eta$ and alternate the following two steps: (i) increase $\eta \rightarrow \eta'$, and 
(ii) move the current iterate $x$ close to $z(\eta')$. At a high level, if we increase $\eta$ by a small enough amount, then the current iterate $x$ is already close enough to $z(\eta')$ to ensure that Newton's method brings us very close to $z(\eta')$ with few iterations. Once $\eta$ is roughly $\vartheta_f/\eps$, any point at a constant distance from $z(\eta)$ is $\eps$-close to optimal; one can show that updating $\eta$ multiplicatively by a factor roughly $1+1/\sqrt{\vartheta_f}$ is a safe choice and hence roughly $\tO(\sqrt{\vartheta_f} \log(\vartheta_f/\eps))$ iterations suffice.  

Traditionally, to ensure that the iterate is close enough to $z(\eta)$, one upper bounds the size of the Newton step $n(x) = -H(x)^{-1}g(x)$ in the local norm by a small constant (see \cref{thm:exact-Newton}). In order to obtain a quantum speedup, we avoid computing the Hessian exactly, instead we work with a spectral approximation $Q(x) \preceq H(x) \preceq C Q(x)$ for some $C \in \tO(1)$. Based on $Q(x)$ one can define $d(x) = Q(x)^{-1} g(x)$ and one of its key properties is the inequality 
\[
\|n(x)\|_x \leq \|d(x)\|_{Q(x)}.
\]
We can thus ensure that the exact Newton step is small by controlling $\|d(x)\|_{Q(x)}$. In \cref{def:approx-Newton-step} we give a precise definition of the type of approximate Newton step that we work with (we also allow an approximation of $g(x)$). The purpose of the next section is to analyze \cref{alg:approxshortstep}, an algorithm that precisely implements one path-following iteration: starting from an $x \in D$ and $\eta$ for which the approximate Newton step is small, it updates $
\eta$ and finds a new $x$ that has a small approximate Newton step with respect to the new $\eta$.

\subsection{Approximate Newton's method}\label{sec: robust update}

Here we analyze an IPM based on \emph{approximate} Newton steps, which we formalize as follows.

\begin{definition}[Approximate Newton step] \label{def:approx-Newton-step}
Let $f$ be a self-concordant barrier with gradient $g(x)$ and Hessian $H(x)$ and suppose there exists a function $Q(x)$ and a real parameter $C \geq 1$ such that 
\begin{equation} \label{eq: Q vs H2}
Q(x) \preceq H(x) \preceq C Q(x).
\end{equation}
Now define
\[
d(x) = - Q(x)^{-1} g(x),
\]
and let $\tilde d(x)$ be a $\zeta$-approximation of $d(x)$ in the local norm, i.e., $\|\tilde d(x) - d(x) \|_x \leq \zeta$.
We say that $\tilde d(x)$ is a $(\zeta,C)$-\emph{approximate Newton step} for $f$ at $x$.
\end{definition}

In our algorithms, $\zeta$ and $C$ do not change between iterations and hence we often drop the prefix~$(\zeta,C)$.
Note that we can obtain an approximate Newton step by approximating the gradient in the \emph{inverse-local} norm.
Indeed, if $\tilde g \in \R^d$ is such that 
\begin{equation} \label{eq: g approx}
\|\tilde g - g(x)\|_{H(x)^{-1}} \leq \zeta,
\end{equation}
then for $\tilde d(x) = - Q(x)^{-1} \tilde g(x)$ we have $\|\tilde d(x) - d(x)\|_x \leq \zeta$.

The following algorithm implements a single iteration of path-following (going from $\eta$ to $\eta'$) using approximate Newton steps.
Recall that closeness to the central path can be ensured by having a small (approximate) Newton step. The algorithm applies to a self-concordant function $f$ with complexity $\vartheta_f$, we define $d(x)$ as above and use a subscript $\eta$ to emphasize that the gradient of $f_\eta$ depends on $\eta$ (the Hessian of $f_\eta$ does not depend on $\eta$, so neither does $Q(x)$).

\begin{algorithm}[ht]
  \caption{The approximate short-step iteration}\label{alg:approxshortstep}
  \Parameters{$\alpha,\beta,\delta,\zeta>0$,  such that $(\alpha+\zeta) \beta + (\beta-1) \sqrt{\vartheta_f} \leq 1/4$}
  \Input{$x \in D$ and $\eta>0$ with $(\zeta,C)$-approximate Newton step $\tilde d_{\eta}(x)$ satisfying $\|\tilde d_{\eta}(x)\|_{Q(x)} \leq \alpha<1$}
  \Output{$x' \in D$ with approximate Newton step $\tilde d_{\eta'}(x')$ satisfying 
  $\|\tilde d_{\eta'}(x')\|_{Q(x')} \leq \alpha$}

    \BlankLine
    Set $\eta' = \beta \eta$, $\delta = 1/(4C)$, $x' = x$ and let $\tilde d_{\eta'}(x')$ be an approximate Newton step from $x'$\;
    \While{$\|\tilde d_{\eta'}(x) \|_{Q(x)} \geq \alpha$}{
    Let $x' \leftarrow x' + \delta \tilde d_{\eta'}(x')$ and let $\tilde d_{\eta'}(x')$ be an approximate Newton step from $x'$;}
    \Return{$x'$}
\end{algorithm}

The correctness and complexity of the algorithm are established in the following theorem.
\begin{theorem} \label{thm:approxshortstep}
\cref{alg:approxshortstep} correctly returns $x' \in D$ with approximate Newton step $\tilde d_{\eta'}(x')$ satisfying $\|\tilde d_{\eta'}(x')\|_{Q(x')} \leq \alpha$.
The algorithm terminates after $O(C^2)$ iterations, and hence requires $O(C^2)$ many $(\zeta,C)$-approximate Newton steps.
\end{theorem}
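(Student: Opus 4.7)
My plan is to analyze \cref{alg:approxshortstep} in two phases. Phase~A bounds the size of the approximate Newton step at the initial iterate $x$ immediately after the barrier parameter is updated to $\eta'$. Phase~B shows that every damped update $x \mapsto x + \delta \tilde d_{\eta'}(x)$ contracts the approximate Newton step geometrically, so that the while-loop terminates in $O(C^2)$ iterations; correctness of the returned $x'$ is then immediate from the loop guard, which forces $\|\tilde d_{\eta'}(x')\|_{Q(x')} \leq \alpha$ upon exit.

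For Phase~A, I would use the identity $g_{\eta'}(x) = g_\eta(x) + (\beta - 1)\eta c$ together with $\eta c = g_\eta(x) - g(x)$. Taking $\|\cdot\|_{H(x)^{-1}}$-norms gives
\[
\|g_{\eta'}(x)\|_{H(x)^{-1}}
\leq \beta\, \|g_\eta(x)\|_{H(x)^{-1}} + (\beta - 1)\|g(x)\|_{H(x)^{-1}}.
\]
The complexity bound yields $\|g(x)\|_{H(x)^{-1}} = \|g_x(x)\|_x \leq \sqrt{\vartheta_f}$, while $\|g_\eta(x)\|_{H(x)^{-1}} = \|n_\eta(x)\|_x \leq \|d_\eta(x)\|_{Q(x)} \leq \alpha + \zeta$, using $Q(x) \preceq H(x)$ and the approximation hypothesis on $\tilde d_\eta(x)$. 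The parameter constraint then gives $\|g_{\eta'}(x)\|_{H(x)^{-1}} \leq 1/4$, which via $Q(x)^{-1} \preceq C\, H(x)^{-1}$ translates into $\|\tilde d_{\eta'}(x)\|_{Q(x)} \leq \sqrt{C}/4 + \zeta = O(\sqrt{C})$.

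For Phase~B, let $x^+ = x + \delta \tilde d$ with $\tilde d = \tilde d_{\eta'}(x)$ and $d = d_{\eta'}(x)$. Since $\|\delta \tilde d\|_x \leq \delta\sqrt{C}\, \|\tilde d\|_{Q(x)}$ stays bounded by a small constant throughout the loop, \cref{thrm: Hessian bounds for self-concordant} guarantees $x^+ \in D$ and that $M := \int_0^1 H(x + t\delta\tilde d)\,dt$ is a constant-factor spectral approximation of $H(x)$. Writing
\[
g_{\eta'}(x^+)
= (I - \delta M Q(x)^{-1}) g_{\eta'}(x) + \delta M (\tilde d - d)
\]
and conjugating the first term by $Q(x)^{-1/2}$, the main operator becomes $I - \delta P$ where $P = Q(x)^{-1/2} M Q(x)^{-1/2}$ has spectrum in $[\Omega(1), O(C)]$; with $\delta = 1/(4C)$ its spectral norm is at most $1 - \Omega(1/C)$. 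The error term has $Q(x)^{-1}$-norm at most $O(\zeta/\sqrt{C})$, using $MQ(x)^{-1}M \preceq O(C) H(x)$ and $\|\tilde d - d\|_x \leq \zeta$. Comparing $Q(x^+)$ and $Q(x)$ via self-concordance then yields
\[
\|d_{\eta'}(x^+)\|_{Q(x^+)}
\leq \bigl(1 - \Omega(1/C)\bigr)\|d_{\eta'}(x)\|_{Q(x)} + O(\zeta/\sqrt{C}).
\]

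Starting from a Newton step of size $O(\sqrt{C})$ and contracting at rate $1 - \Omega(1/C)$, the while loop reaches the threshold $\alpha$ in $O(C \log(\sqrt{C}/\alpha)) = O(C^2)$ iterations, giving the claimed bound. The main technical obstacle I anticipate is keeping the additive $\zeta$-error from drowning out the linear contraction: the parameter constraint $(\alpha+\zeta)\beta + (\beta - 1)\sqrt{\vartheta_f} \leq 1/4$ is precisely engineered so that $\zeta$ is small relative to $\alpha$ and the geometric decrease dominates until the iterate satisfies the target accuracy.
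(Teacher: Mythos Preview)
Your Phase~A is essentially the paper's own first step: both arguments establish $\|n_{\eta'}(x)\|_x \leq (\alpha+\zeta)\beta + (\beta-1)\sqrt{\vartheta_f} \leq 1/4$ at the moment $\eta$ is increased. The divergence, and the gap, is entirely in Phase~B.

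The problematic step is the sentence ``Comparing $Q(x^+)$ and $Q(x)$ via self-concordance then yields\ldots''. Self-concordance controls how the \emph{Hessian} $H$ varies: if $\|x^+ - x\|_x$ is a small constant then $H(x^+)$ and $H(x)$ agree up to a constant factor. But $Q$ is not the Hessian; the only hypothesis is $Q(y) \preceq H(y) \preceq C\,Q(y)$ at each $y$ separately, with no direct link between $Q(x)$ and $Q(x^+)$. Chaining through $H$ gives only
\[
Q(x^+)^{-1} \preceq C\,H(x^+)^{-1} \preceq O(1)\cdot C\,H(x)^{-1} \preceq O(1)\cdot C\,Q(x)^{-1},
\]
so passing from $\|\cdot\|_{Q(x)^{-1}}$ to $\|\cdot\|_{Q(x^+)^{-1}}$ can inflate the norm by $\Theta(\sqrt{C})$. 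That factor swamps the $1 - \Omega(1/C)$ contraction you obtained for $I - \delta P$, and the recursion on $\|d_{\eta'}(\cdot)\|_{Q(\cdot)}$ collapses. (Nothing prevents the algorithm from choosing $Q(x^+)$ to be, say, a rotated version of $Q(x)$ within the $C$-ellipsoid around $H$.) Tracking the $H(x)^{-1}$-norm instead does not obviously help either: the relevant operator becomes $I - \delta H(x)^{-1/2} M Q(x)^{-1} H(x)^{1/2}$, which is non-symmetric, and the constant-factor (not $O(1/C)$) discrepancy between $M$ and $H(x)$ again prevents a clean $1 - \Omega(1/C)$ bound.

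The paper avoids the changing-norm issue by using the scalar potential $f_{\eta'}$. \cref{lem: progress bound approx} shows each loop iteration decreases $f_{\eta'}$ by at least $\alpha^2/(16C^2)$, while \cref{lem: potential gap} together with your Phase-A bound gives an initial suboptimality $f_{\eta'}(x) - \min f_{\eta'} \leq 7/64$. Dividing yields the $O(C^2)$ iteration bound. Because the potential is a function value rather than a norm, it is insensitive to how $Q$ is chosen from one iterate to the next, which is precisely what your contraction argument cannot absorb.
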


Combined with the discussion on path-following, this theorem implies \cref{thm:IPM-master}.

We first sketch the structure of the proof \cref{thm:approxshortstep} and give two useful lemmas. 
To analyze the algorithm, we show that after updating $\eta$, the while-loop terminates in a roughly constant number of iterations. To do so, we establish two properties: (1) each iteration decreases the function value by at least a certain amount, and (2) at the start of the while-loop, the function value is close to its minimum.

\begin{lemma} \label{lem: progress bound approx}
Assume $f$ is self-concordant, $x \in D$, and $Q(x)$ is such that $Q(x) \preceq H(x) \preceq C Q(x)$ for $C \geq 1$. Let $d(x) =- Q(x)^{-1} g(x)$ and assume $\tilde d(x)$ is such that $\|\tilde d(x) - d(x)\|_x \leq \zeta$. Let $\delta=1/(4C)$. If $\delta \|\tilde d(x)\|_x \leq 1/2$ and $\zeta \leq \frac{1}{4C}\|\tilde d(x)\|_x$, then we have 
\[
f(x+\delta \tilde d(x)) \leq f(x) - \frac{1}{16C^2} \|\tilde d(x)\|_x^2. 
\]
\end{lemma}
\begin{proof}[Proof of \cref{lem: progress bound approx}]
For $0<\delta<1$ there exists some $\bar \delta \in [0,\delta]$ such that 
\begin{align*}
f(x+\delta \tilde d(x)) &= f(x) + \delta \langle g(x), \tilde d(x)\rangle + \frac{1}{2} \delta^2 \langle \tilde d(x), H(x+\bar \delta \tilde d(x)) \tilde d(x)\rangle \\
&= f(x) + \delta \langle g(x), d(x) \rangle +\delta \langle g(x), \tilde d(x) - d(x)\rangle + \frac{1}{2} \delta^2 \langle \tilde d(x), H(x+\bar \delta \tilde d(x)) \tilde d(x)\rangle
\end{align*}
We upper bound the last three terms on the right hand side. For the first term, since $g(x) = -Q(x) d(x)$, we have $\langle g(x),d(x)\rangle = -\|d(x)\|_{Q(x)}^2 \leq -\|d(x)\|_{x}^2/C$ where in the last inequality we use $Q(x) \succeq \frac{1}{C}H(x)$. For the second term we have 
\begin{align*}
\langle g(x),\tilde d(x)-d(x)\rangle &= \langle d(x),\tilde d(x)-d(x)\rangle_{Q(x)} \\
&\leq \|d(x)\|_{Q(x)} \|\tilde d(x) - d(x)\|_{Q(x)} \\
&\leq  \|d(x)\|_{x} \|\tilde d(x) - d(x)\|_{x},
\end{align*}
where in the last inequality we use that $Q(x) \preceq H(x)$. 
Finally, for the third term, we have 
\begin{align*}
\langle \tilde d(x), H(x+\bar \delta  \tilde d(x)) \tilde d(x)\rangle &= \langle \tilde d(x), H_x(x+\bar \delta \tilde d(x)) \tilde d(x)\rangle_x \\
&= \|\tilde d(x)\|_x^2 + \langle\tilde  d(x), (H_x(x+\bar \delta \tilde d(x)) - I) \tilde d(x)\rangle_x \\
&\leq \|\tilde d(x)\|_x^2 + \|H_x(x+\bar \delta \tilde d(x)) - I\|_x \|\tilde d(x)\|_x^2 \\
&\leq {\|\tilde d(x)\|_x^2}/{(1-\bar \delta\|\tilde d(x)\|_x)^2} \\
&\leq {\|\tilde d(x)\|_x^2}/{(1-\delta\|\tilde d(x)\|_x)^2},
\end{align*}
where in the second inequality we used \cref{thrm: Hessian bounds for self-concordant}. Combining the three estimates leads to the following upper bound: 
\begin{align*}
f(x+\delta \tilde d(x)) &\leq f(x) - \|\tilde d(x)\|_x^2 \left(\frac{\delta}{C}  - \frac{\delta^2}{2(1-\delta\|\tilde d(x)\|_x)^2}\right) + \delta \|\tilde d(x)\|_x \|\tilde d(x) - d(x)\|_x \\
&\leq f(x) - \|\tilde d(x)\|_x^2 \left(\frac{\delta}{C}  - \frac{\delta^2}{2(1-\delta\|\tilde d(x)\|_x)^2}\right) + \frac{\delta}{4C} \|\tilde d(x)\|_x^2. 
\end{align*}
Finally, assume $\delta \|d(x)\|_x \leq 1/2$ for $\delta = \frac{1}{4C}$, then we have 
\[
\frac{\delta}{C}  - \frac{\delta^2}{2(1-\delta\|d(x)\|_x)^2} - \frac{\delta}{4C} \geq \frac{\delta}{C}  - \frac{\delta^2}{2(1/2)^2} - \frac{\delta}{4C} \geq \frac{\delta}{C} - \frac{\delta}{2C} - \frac{\delta}{4C}= \frac{\delta}{4C} = \frac{1}{16C^2}. 
\]
\end{proof}

In each iteration that $\|\tilde d(x)\|_{Q(x)} > \alpha$, we have $\|\tilde d(x)\|_x > \alpha$ and thus \cref{lem: progress bound approx} shows that 
\[
f(x+\delta \tilde d(x)) \leq f(x) - \frac{\alpha^2}{16C^2}
\]
for a suitable choice of $\delta$. It remains to show that initially, after updating $\eta$ to $\eta' = \beta \eta$, $f_{\eta'}(x)$ is close to the minimum of $f_{\eta'}$. The following lemma is a standard bound on the optimality gap in terms of the size of the Newton step. 

\begin{lemma} \label{lem: potential gap}
Assume $f$ is self-concordant and let $x \in D$. If $\|n(x)\|_x \leq 1/4$, then $f$ has a minimizer $z$ and 
\[
f(x) - f(z) \leq \|n(x)\|_x^2 + 3\left(\frac{\|n(x)\|_x}{1-\|n(x)\|_x}\right)^3. 
\]
\end{lemma}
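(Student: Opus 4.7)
The plan is to prove the bound by running Newton iterates from $x$, establishing quadratic convergence to a minimizer $z$, and then telescoping the per-iteration function gap using convexity.

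Define $x_0 = x$ and $x_{k+1} = x_k + n(x_k)$, with $r_k = \|n(x_k)\|_{x_k}$. Using $g(x_k) + H(x_k) n(x_k) = 0$ and the fundamental theorem of calculus,
\begin{equation*}
g(x_{k+1}) \,=\, \int_0^1 \bigl(H(x_k + t n(x_k)) - H(x_k)\bigr)\,n(x_k)\,dt.
\end{equation*}
Taking the dual local norm at $x_k$ and bounding $\|I - H_{x_k}(x_k + t n(x_k))\|_{x_k} \leq 1/(1-tr_k)^2 - 1$ by \cref{thrm: Hessian bounds for self-concordant} gives $\|g(x_{k+1})\|^\ast_{x_k} \leq r_k^2/(1-r_k)$; combining with the norm conversion $\|v\|^\ast_{x_{k+1}} \leq \|v\|^\ast_{x_k}/(1-r_k)$ (which follows from the same theorem applied to $H^{-1}$) yields the standard quadratic-convergence estimate
\begin{equation*}
r_{k+1} \,\leq\, \left(\frac{r_k}{1-r_k}\right)^{\!2}.
\end{equation*}
Starting from $r_0 \leq 1/4$, the $r_k$ decay super-exponentially, and the same norm comparison makes $\sum_k \|x_{k+1}-x_k\|_x$ a geometrically dominated sum. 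Hence $\{x_k\}$ is Cauchy in the local norm at $x$ and converges to some $z \in D$ with $g(z) = 0$; strict convexity identifies $z$ as the unique minimizer, proving the first assertion of the lemma.

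For the quantitative bound, convexity of $f$ gives $f(x_{k+1}) \geq f(x_k) + g(x_k)^\top n(x_k) = f(x_k) - r_k^2$, so $f(x_k) - f(x_{k+1}) \leq r_k^2$ and a telescoping sum yields
\begin{equation*}
f(x) - f(z) \,=\, \sum_{k \geq 0} \bigl(f(x_k) - f(x_{k+1})\bigr) \,\leq\, r^2 + \sum_{k \geq 1} r_k^2,
\end{equation*}
where $r = \|n(x)\|_x$. Setting $s = r/(1-r) \leq 1/3$, the recursion gives $r_1 \leq s^2$ and, by induction on the defining inequality for $r_{k+1}$, $r_k$ is doubly-exponentially small in $s$ for $k \geq 1$. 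In particular the tail is dominated by its first term: a direct estimate yields $\sum_{k \geq 1} r_k^2 \leq 2 r_1^2 \leq 2 s^4 \leq (2/3)\, s^3 \leq 3\, s^3$, where we used $s \leq 1/3$. This gives $f(x) - f(z) \leq r^2 + 3(r/(1-r))^3$ as claimed.

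The main technical obstacle is the derivation of the quadratic-convergence inequality $r_{k+1} \leq (r_k/(1-r_k))^2$ from the self-concordance bounds of \cref{thrm: Hessian bounds for self-concordant}; once in hand, the super-exponential decay of $r_k$ makes the tail sum $\sum_{k \geq 1} r_k^2$ a lower-order correction of size $O(s^4)$, which comfortably fits into the generous $3 s^3$ slack in the stated bound.
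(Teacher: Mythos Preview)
Your argument is correct, but it takes a genuinely different route from the paper's proof. The paper does not telescope over Newton iterates; instead it uses a single application of convexity and Cauchy--Schwarz,
\[
f(x)-f(z)\;\le\;\langle g(x),x-z\rangle\;=\;\langle n(x),x-z\rangle_x\;\le\;\|n(x)\|_x\,\|x-z\|_x,
\]
and then invokes the black-box distance bound from \cref{thm:exact-Newton} (i.e., Renegar's Theorem~2.2.5), namely $\|x-z\|_x\le \|n(x)\|_x + 3\|n(x)\|_x^2/(1-\|n(x)\|_x)^3$, to conclude directly. Your approach instead re-derives the quadratic-convergence estimate $r_{k+1}\le (r_k/(1-r_k))^2$ from \cref{thrm: Hessian bounds for self-concordant}, bounds each per-step drop $f(x_k)-f(x_{k+1})\le r_k^2$ via convexity, and sums the resulting tail. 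This is more self-contained (you do not appeal to the second bullet of \cref{thm:exact-Newton}) and makes the structure of the constant transparent, at the cost of a longer argument; the paper's proof is a two-line corollary of the cited distance-to-minimizer estimate. Both yield exactly the same bound.
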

\begin{proof}
Since $f$ is convex, we have for all $x,y \in D$ that 
\[
f(x) - f(y) \leq \langle g(x), x-y\rangle = \langle n(x), x-y\rangle_x.
\]
We can further upper bound the right hand side using the Cauchy-Schwarz inequality: 
\[
\langle n(x), x-y\rangle_x \leq \|n(x)\|_x \|x-y\|_x. 
\]
We apply the above with $y=z$ and it thus remains to upper bound $\|x-z\|_x$. For this we use the second bullet in \cref{thm:exact-Newton}: we have $\|x-z\|_x \leq \|n(x)\|_x + \frac{3\|n(x)\|_x^2}{(1-\|n(x)\|_x)^3}$. Combining the two upper bounds on $f(x)-f(z)$ concludes the proof. 
\end{proof}

We are now ready to prove \cref{thm:approxshortstep}. 

\begin{proof}[Proof of \cref{thm:approxshortstep}]
From \cref{lem: progress bound approx} we obtain that in each iteration of the while loop, the function value of $f$ decreases by at least $(\frac{\alpha}{4C})^2$. To upper bound the number of such iterations, \cref{lem: potential gap} shows that it suffices to prove that at the start of the while-loop $\|n_{\eta'}(x)\|_x$ is small. To do so, we first recall the following useful inequalities. If $Q(x) \preceq H(x) \preceq C Q(x)$, then for any $x \in D$ and $v \in \R^d$ we have 
\[
\|v\|_{Q(x)} \leq \|v\|_x \leq \sqrt{C} \|v\|_{Q(x)}
\quad \text{ and } \quad
\|n(x)\|_x \leq \|d(x)\|_{Q(x)}.
\]
Then, as a first step, note that if $\|\tilde d_\eta(x)\|_{Q(x)} \leq \alpha$ and $\|\tilde d_\eta(x)-d_\eta(x)\|_{x} \leq \zeta$, then $\|d_\eta(x)\|_{Q(x)} \leq \alpha+\zeta$ and therefore $\|n_\eta(x)\|_{x} \leq \alpha+\zeta$. The second step is to bound the size of the Newton step after updating $\eta$. For this we use the following relation:
\[
n_{\eta'}(x) = \frac{\eta'}{\eta} n_{\eta}(x) + \left(\frac{\eta'}{\eta} -1\right) g_x(x). 
\]
Using this relation, the above bound on $n_\eta(x)$, and the triangle inequality, this gives us \[
\|n_{\eta'}(x)\|_{x} \leq \beta \|n_{\eta}(x)\|_x + (\beta-1)\|g_x(x)\|_x \leq (\alpha+\zeta) \beta + (\beta-1)\sqrt{\vartheta_f} \eqqcolon \gamma.
\] 
Recall that by assumption $\gamma \leq 1/4$ and therefore \cref{lem: potential gap} applied to $f_{\eta'}$ shows that 
\[
f_{\eta'}(x) - \min_{z \in D} f_{\eta'}(z) \leq \gamma^2 + 3 \left(\frac{\gamma}{1-\gamma}\right)^3 \leq 7/64.
\]
Choosing $\alpha, \beta, \delta,\zeta>0$ appropriately shows that the while-loop ends after at most $O(C^2)$ iterations. Suitable choices of parameters are for example $\alpha=1/32$, $\zeta = 1/32$, and $\beta = 1 + \frac{1}{8\sqrt{\vartheta_f}} \leq 2$ so that $\gamma = (\alpha+\zeta)\beta + (\beta-1)\sqrt{\vartheta_f} \leq 1/4$, and $\delta= \frac{1}{4C}$ so that also $\delta \|\tilde d_{\eta'}(x)\|_x \leq \delta (\|d_{\eta'}(x)\|_x + \zeta) \leq \delta (\sqrt{C} \|n_{\eta'}(x)\|_x + \zeta) \leq \delta (\sqrt{C} \gamma + \zeta) < 1/2$.
\end{proof}

\begin{remark}[Maintaining feasibility] A key property of self-concordant barriers is that at each interior point, the unit ball in the local norm is contained in the domain. This ensures that we maintain feasibility of the iterates since our steps $\delta \tilde d(x)$ satisfy $\delta \|\tilde d(x)\|_x <1$ (indeed, $\delta<1$ and the approximate Newton step has size $<1/2$, even after updating $\eta$). A second crucial assumption that we have made is that the domain is full-dimensional; this allows us to avoid (potentially costly) projections onto subspaces. 
\end{remark}

\end{document}